\documentclass[10pt,journal]{IEEEtran}

\usepackage{lineno}
\usepackage{hyperref}
\modulolinenumbers[5]

\usepackage{graphicx} 
\usepackage{epsfig} 
\usepackage{amsmath} 
\usepackage{amssymb}  
\usepackage{epstopdf}
\epstopdfsetup{suffix=}

\usepackage{mathtools}
\usepackage{multirow}
\usepackage{array}
\usepackage{siunitx}
\usepackage{tikz}
\usepackage{mathrsfs}
\usepackage{tabularx}
\usepackage{amsthm}
\usepackage{url}
\usetikzlibrary{calc}
\usetikzlibrary{positioning}
\usetikzlibrary{arrows}
\usetikzlibrary{shapes}
\usetikzlibrary{fit}
\newtheorem{rem}{Remark}
\newtheorem{assm}{Assumption}
\newtheorem{alg}{Algorithm}
\newtheorem{theorem}{Theorem}
\newtheorem{lemma}{Lemma}

\usepackage{algpseudocode}
\usepackage{float}
\newcommand{\pv}[1]{\hat #1}
\newcommand{\ltvpv}[1]{\tilde #1}
\newcommand{\ct}[1]{\grave #1}
\newcommand{\ctscp}[1]{\bar #1}
\newcommand{\boost}{\textrm{boost}}
\newcommand{\EGR}{\textrm{EGR}}
\newcommand{\steady}[1]{\bar #1}
\newcolumntype{L}[1]{>{\raggedright\let\newline\\\arraybackslash\hspace{0pt}}m{#1}}
\newcolumntype{C}[1]{>{\centering\let\newline\\\arraybackslash\hspace{0pt}}m{#1}}
\newcolumntype{R}[1]{>{\raggedleft\let\newline\\\arraybackslash\hspace{0pt}}m{#1}}

\usepackage{psfrag}

\DeclareMathOperator{\diag}{diag}


\begin{document}

	\def\pimbar{200}
	\def\wcbar{50}
	\def\egrbar{100}

	
	\title{Fast Calibration of a Robust Model Predictive Controller for Diesel Engine Airpath}

	\author{Gokul S.~Sankar$^{1}$, Rohan C.~Shekhar$^{1}$, Chris Manzie$^{2}$, Takeshi Sano$^{3}$ and Hayato Nakada$^{3}$
		\thanks{*This work was supported by Australian Research Council (ARC) [Grant number - LP160100650], Toyota Motor Corporation, Japan and The University of Melbourne through Melbourne International Research Scholarship (MIRS) Melbourne-India Postgraduate Program (MIPP).}
		\thanks{$^{1}$Gokul~S.~Sankar and Rohan C.~Shekhar are with the Department of Mechanical Engineering, The University of Melbourne, Victoria 3010, Australia. Emails: {\tt\small ggowri@student.unimelb.edu.au} (G.~Sankar) and 
			{\tt\small rshekhar@unimelb.edu.au} (R.~Shekhar). }%
		\thanks{$^2$ Chris Manzie is with the Department of Electrical and Electronic Engineering, The University of Melbourne, Victoria 3010, Australia. Email: {\tt\small manziec@unimelb.edu.au}.}
		\thanks{$^{3}$Takeshi Sano and Hayato Nakada are with the Advanced Unit Management System Development 		Division, Toyota Motor Corporation,
			Higashi-Fuji Technical Center,
			1200, Mishuku, Susono-city, Shizuoka, 410-1193 Japan. Email:
			{\tt\small \{takeshi\_sano\_aa, hayato\_nakada\}@mail.toyota.co.jp} (T.~Sano, H.~ Nakada).}%
	}

	
	\maketitle
	\thispagestyle{empty}
	\pagestyle{empty}
	
	\begin{abstract}
		
		A significant challenge in the development of control systems for diesel airpath applications is to tune the controller parameters to achieve satisfactory output performance, especially whilst adhering to input and safety constraints in the presence of unknown system disturbances. Model-based control techniques, such as model predictive control (MPC), have been successfully applied to multivariable and highly nonlinear systems, such as diesel engines, while considering operational constraints. However, efficient calibration of typical implementations of MPC is hindered by the high number of tuning parameters and their non-intuitive correlation with the output response. In this paper, the number of effective tuning parameters is reduced through suitable structural modifications to the controller formulation and an appropriate redesign of the MPC cost function to aid rapid calibration. Furthermore, a constraint tightening-like approach is augmented to the control architecture to provide robustness guarantees in the face of uncertainties. A switched linear time-invariant MPC strategy with recursive feasibility guarantees during controller switching is proposed to handle transient operation of the engine. The robust controller is first implemented on a high fidelity simulation environment, with a comprehensive investigation of its calibration to achieve desired transient response under step changes in the fuelling rate. An experimental study then validates and highlights the performance of the proposed controller architecture for selected tunings of the calibration parameters for fuelling steps and over drive cycles.
	\end{abstract}
	
	\begin{IEEEkeywords}
		Model predictive control, robust control, diesel engine, controller calibration, switched linear time-invariant MPC.
	\end{IEEEkeywords}
	
	
	
	\section{Introduction}
	
	Diesel engines used in automotive applications must meet increasingly stringent emissions standards whilst also satisfying drivability and fuel efficiency requirements. The actuators in the diesel airpath, namely the throttle valve, the exhaust gas recirculation (EGR) valve and the variable geometry turbocharger (VGT), influence the flows of fresh air and exhaust gas into the engine and thus can be used to control the responsiveness to driver demands and the formations of particulate matter (PM) and Nitrogen oxides $\left(\textrm{NO}_{\textrm{x}}\right)$ emissions. A diesel airpath controller is required to track two reference levels, namely an intake manifold pressure (also known as the boost pressure) reference and a EGR rate reference, which is defined as the ratio of the EGR outflow rate to the combined EGR and compressor outflow rates. For a given engine operating condition, characterised by an engine rotational speed, $\omega_e$, and a fuelling rate, $\dot{m}_f$, the references for boost pressure and EGR rate are determined by a high level controller to provide `optimal' driver demand responsiveness and fuel efficiency, and satisfy emission regulations.
	
	Diesel engine airpath control is challenging owing to its nonlinear multivariable nature \cite{Kolmanovsky1999, Wahlstrom2011}. Conventional control approaches use look-up tables and single-input single-output (SISO) proportional-integral-differential (PID) loops ignoring the cross-sensitivities that can affect the controller performance. Sophisticated multivariable control algorithms will lead to an improved performance \cite{VanNieuwstadt1998}, whilst satisfying operating constraints on manifold pressures and physical limitations of the actuators.
	
	The constraint handling ability of model predictive control (MPC) makes it an ideal choice of control architecture for constrained multi-input multi-output systems, such as diesel engines. However, several factors, namely the computational complexity, immense calibration effort and the plant-model mismatch, impede the implementation of  MPC schemes for diesel engine applications. In the upcoming paragraphs of this paper, the aforementioned challenges, and the approaches proposed and adopted to enable real time implementation of an MPC architecture on a diesel engine are discussed.
	
	Improvements in the control performance obtained by using nonlinear MPC over conventional control approaches for diesel airpath have been illustrated through simulation \cite{Herceg2006}. However, due to the lack of computational resources capable of solving nonlinear MPC problems online within the fast sampling time periods typically used in engine control, linear MPC formulations are applied for diesel airpath \cite{Rueckert2004}. Furthermore, in order to avoid solving the quadratic program (QP) online, explicit MPC formulations are often employed \cite{Ferreau2007,Stewart2008,Ortner2006,Karlsson2010,Huang2016}, where the explicit piecewise-affine control law corresponding to the solution of the parametric QP is computed offline. A further reduction in computational complexity is achieved by strategies such as intermittent constraint enforcement \cite{Huang2013}. However, this results in loss of constraint satisfaction guarantees during the intermediate time steps in which the constraints were not imposed. Similarly, enforcing soft constraints to ensure controller feasibility \cite{Karlsson2010,Wahlstroem2013,Huang2013}, leads to loss of guarantees on constraint adherence. Nevertheless, the growth in the number of regions in explicit MPC restricts the control horizon length to typically one or two steps for real-time implementation \cite{Ortner2007, Huang2016}. With such short control horizon, the controller has fewer degrees of freedom within the prediction horizon, which might lead to aggressive control actions being taken to achieve the desired objective, leading to increased wear and tear of the system. More importantly, these studies have not addressed the issue of requiring high calibration effort for the diesel airpath controller to achieve the desired output transient response. 
	
	The high number of tuning parameters and the non-intuitive relationship between the tuning parameters and the time domain characteristics of the output response, such as overshoot and settling time, are the reasons for the difficulty involved in the calibration of an MPC. In a standard MPC quadratic cost function with symmetric weighting matrices, there are $\frac{n}{2}({n+1}) + \frac{m}{2}({m+1})$ independent cost function parameters that require tuning (excluding the horizon length, terminal cost and set), where $n$ and $m$ are the state and input dimensions, respectively.
	
	In \cite{Rowe2000}, $H_{\infty}$ loopshaping was proposed to tune infinite horizon MPC parameters but requires the system to operate away from the constraint boundary. This naturally cedes some of the advantages of explicitly considering constraints, particularly when finite horizons must be considered for computational reasons. As an alternative, in \cite{Sankar2015},  \cite{Shekhar2017}, a novel parameterisation of the MPC cost function was proposed to provide an explicit relationship between a small number of tuning parameters and the time domain response. This approach is well suited for rapid calibration, and does not require calibrators to have explicit knowledge of advanced control techniques.

	A perfect prediction model is assumed to be available in many previous works on the application of MPC to diesel engines. However, the presence of external disturbances and modelling errors might potentially lead to constraint violation. Robust MPC techniques, such as tube MPC and constraint tightening, can be used to guarantee constraint adherence and recursive feasibility of the controller in the presence of disturbances. In \cite{Huang2014}, a single step horizon is used to ensure computational tractability of the tube MPC for diesel air path, failing to utilise the potential of MPC to anticipate system behaviour over multiple time-steps. However, the constraint tightening (CT) approach, in which the constraints are systematically tightened and a margin is reserved along the horizon to correct for errors due to uncertainties \cite{Richards2006}, does not incur additional computational load when compared to the conventional MPC. In \cite{Broomhead2017}, a CT-MPC formulation is proposed for diesel generators in power tracking applications. In both \cite{Huang2014} and \cite{Broomhead2017}, the number of tuning parameters remain the same as in conventional MPC, requiring a significant effort to achieve the desired output time domain characteristics. Furthermore, as in \cite{Sankar2017}, it might be infeasible to provide robustness guarantees with a large disturbance set because of lack of control authority to reject the disturbances by the end of the horizon in robust MPC formulations. 
	
	Many previous implementations of MPC on diesel engines \cite{Ortner2006,Ortner2007,DelRe2010} divide the engine operational space into multiple regions and identify local models for employing multiple linear predictive controllers. In a transient operation, the controllers are switched  based on the current engine operating point. This switched linear time-invariant (LTI) MPC architecture reduces computational complexity while retaining some desirable attributes of nonlinear MPC. However, switching between controllers might lead to bumps in the output response or loss of recursive feasibility during switching. In order to avoid bumps, \cite{Ortner2006, Ortner2007} use the actual output values from the previous time step for all their local controllers, however recursive feasibility and hence, stability guarantees in the switched LTI-MPC architecture are not provided. The existing CT formulations for such architectures, either assume the future system representations are known \cite{Richards2005}, or do not account for the error due to change in trimming conditions of the linearisation models \cite{Broomhead2014}, thereby, losing robust feasibility guarantees.
	
	This paper is a significant extension of the previous work \cite{Shekhar2017, Sankar2017}. In this paper, a switched LTI-MPC architecture with robust feasibility guarantees using CT approach is proposed to handle transient engine operation. Furthermore, a methodology based on sequential convex programming (SCP) approach \cite{Dinh2010} is proposed to produce less conservative disturbance set estimates that can be handled by the local controllers in comparison with \cite{Sankar2017}. The switched LTI-MPC architecture consisting of local robust controllers is implemented on an engine and shown to maintain all input and state constraints in the presence of allowable disturbances. As a further extension to the ideas proposed in \cite{Shekhar2017}, a procedure is proposed for identifying the switched controllers requiring further tuning using feedback from performance over drive cycles.

	\subsection{Notation}
	The symbol $\mathbb{R}$ represents a set of real numbers. The symbol $\mathbb{Z}_{\left[a: b\right]}$ denotes a set of consecutive integers from $a$ to $b$ and $2\mathbb{Z}^{+}$ denotes set of positive even integers. $0_{m\times n}$ represents a zero matrix of size $m\times n$, $I_n$ denotes an $n\times n$ identity matrix and $\boldsymbol{1}_n$ is a $n \times 1$ vector of ones. The operator $\det\left(A\right)$ denotes the determinant of the matrix $A$. $A \succ 0$ represents a positive definite matrix $A$. The Euclidean norm of a vector $x$ is denoted by $\|x\|$; $\|x\|_1$ represents its $L_1$ norm; and $\|A\|_{\max}\coloneqq\underset{i,j}{\max}\,\left|a_{ij}\right|$, where $\left|a_{ij}\right|$ is the absolute value of the element in $i^{th}$ row and $j^{th}$ column of the matrix $A$. The operator $\ominus$ denotes the Pontryagin difference, defined for the sets $\mathcal{A}$ and $\mathcal{B}$ as
	\begin{equation}
	\mathcal{A}\ominus\mathcal{B}\coloneqq\left\{ a|a+b\in\mathcal{A}\,\forall b\in\mathcal{B}\right\}.\label{eq:pontdiff}
	\end{equation}
	
	The operator $\diag\{\cdot\}$ denotes a diagonal matrix with the elements in parentheses along the leading diagonal. For a sequence of matrices with same row dimension $\left\{A\left(a\right),\,A\left(a+1\right),\,\ldots,\,A\left(b\right)\right\}$, the operator $\mathscr{C}$ concatenates the matrices horizontally as
	\begin{equation}
	\mathscr{C}_{i=a}^b A\left(i\right) = \left\{A\left(a\right),\,A\left(a+1\right),\,\ldots,\,A\left(b\right)\right\}
	\end{equation}
	\noindent where $a,\,b\in \mathbb{Z}$ and $a\leq b$. All inequalities involving vectors are to be interpreted row-wise. 	
		\begin{figure}
		\begin{centering}
			\begin{tikzpicture}
			\node at (0,0)
			{\includegraphics[clip, trim = {0.cm 0.0cm 0.cm 0.cm}]{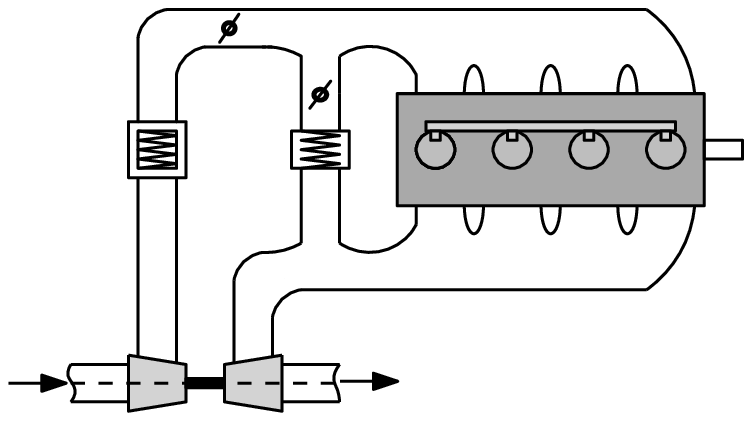}};
			\node at (-3.33,0.65) {Intercooler};
			\node at (-1.33,0.65) {EGR};
			\node at (-1.33,0.375) {cooler};
			\node at (-1.25,1.35) {EGR};
			\node at (-1.25,1.075) {valve};
			\node at (-3.5,-1.4) {Air};
			\node at (-1.5,2.25) {Throttle};
			\node [text width = 2cm, align = center] at (0.6,-1.3)  {Engine-out exhaust gas};
			\node at (1.5,-0.55) {Exhaust manifold};
			\node at (1.5,1.75) {Intake manifold};
			\node at (1.75,0.25) {Cylinders};
			\node at (-2.75,-2.25) {Compressor};
			\node at (-1.2,-2.25) {VGT};
			\node at (1.77,1.025) {\tiny Fuel rail and injectors};
			\end{tikzpicture}
			\par\end{centering}
		\protect\caption{Diesel engine schematic.}
		\label{fig:eng_sch}
	\end{figure}

	\section{System Identification and Disturbance Set Estimation}
	\label{sec:modelling}	
	A schematic representation of the diesel airpath with the positioning of the actuators and other components is shown in Fig.~\ref{fig:eng_sch}. The density of the fresh air entering the airpath is first increased by the compressor and then by the intercooler. A part of the burnt gas in the exhaust manifold is cooled and recirculated to the cylinders through the EGR system. The engine-out exhaust gas drives the VGT, whose shaft spins the compressor. 
	
	As the diesel airpath is a highly nonlinear system, the engine operating range is divided into $12$ regions as shown in Fig.~\ref{fig:mod_grid}, with a linear model representing the engine dynamics in each region. The grid of the selected operating points is evenly spaced in the engine operational range as seen in Fig.~\ref{fig:mod_grid}.
	
	Four states are selected for these models,  as listed in Table \ref{tab:states}. This choice of the states rules out the need for designing a state estimator as complete state feedback is available either through direct sensor measurements or is reliably estimated within the engine control unit (ECU). It is, however, possible to include more states to improve the accuracy of model predictions at the cost of incurring additional online computational load and requiring state estimation.
		\begin{figure}
		\begin{centering}
			\begin{tikzpicture} [scale = 1, transform shape]
			\node at (0,0) (pic) {\includegraphics[clip, trim = {0.65cm 0.55cm 0cm 0.0cm}]{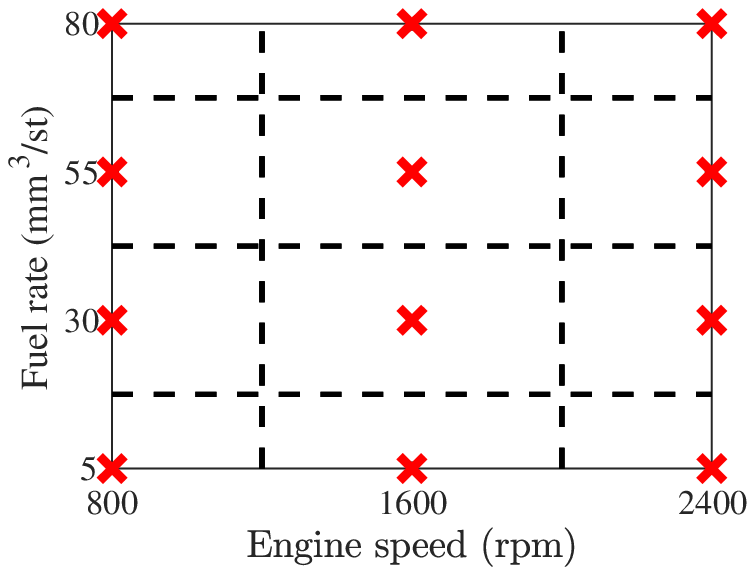}};
			\node [above left = -50mm and -16mm of pic]{I};
			\node [above left = -38.5mm and -16.5mm of pic]{II};
			\node [above left = -23.5mm and -17mm of pic]{III};
			\node [above left = -11.5mm and -17mm of pic]{IV};
			
			\node [above left = -50mm and -47mm of pic]{V};
			\node [above left = -38.5mm and -47.3mm of pic]{VI};
			\node [above left = -23.5mm and -47.8mm of pic]{VII};
			\node [above left = -11.5mm and -48.2mm of pic]{VIII};
			
			\node [above left = -50mm and -64mm of pic]{IX};
			\node [above left = -38.5mm and -63.5mm of pic]{X};
			\node [above left = -23.5mm and -64mm of pic]{XI};
			\node [above left = -11.5mm and -64.5mm of pic]{XII};
			
			
			\node (rect) [draw =none, fill = white, minimum width = 0.65cm, minimum height = 4.95cm, inner sep = 0pt, below  left = -54mm and -4.75mm of pic] {};
			
			\node [ below  left = -9mm and -6mm of pic] {Low};
			\node [ below  left = -24mm and -6mm of pic] {Mid-low};
			\node [ below  left = -39mm and -6mm of pic] {Mid-high};
			\node [ below  left = -54mm and -6mm of pic] {High};

			
			\node (rect) [draw = none, fill = white, minimum width = 6.75cm, minimum height = 0.3cm, inner sep = 0pt, below  left = -4mm and -71mm of pic] {};
			
			\node [ below  left = -5mm and -11mm of pic] {Low};
			\node [ below  left = -5mm and -41mm of pic] {Mid};
			\node [ below  left = -5mm and -72mm of pic] {High};
			
			\node [above left = -13mm and 8mm of pic, rotate = 90]{Fuel rate $\left(\SI{}{mm^3/st}\right)$};
			\node [below left = -1mm and -53mm of pic, rotate = 0]{Engine speed $\left(\SI{}{rpm}\right)$};
			
			\end{tikzpicture}
			\par\end{centering}
		\caption{Engine operational space divisions and the corresponding linearisation points.}
		\label{fig:mod_grid}
	\end{figure}

	\begin{table}
		\centering
		\caption{States of the linear model.}\label{tab:states}
		\begin{tabularx}{0.77\columnwidth}{@{} cXl @{}}
			\hline
			State & Description & Units \\
			\hline
			$p_{\textrm{im}}$ & Intake manifold pressure perturbation & \si{kPa} \\
			$p_{\textrm{em}}$ & Exhaust manifold pressure perturbation & \si {kPa} \\
			$W_{\textrm{comp}}$ & Compressor mass flow rate perturbation & \si{g/s} \\
			$y_{\textrm{EGR}}$ & EGR rate perturbation & --\\
			\hline
		\end{tabularx}
	\end{table}

		\begin{table}
		\centering
		\caption{Inputs of the linear model.}\label{tab:inputs}
		\begin{tabularx}{0.77\columnwidth}{@{} cXl @{}}
			\hline
			State & Description & Units \\
			\hline
			$u_{\textrm{thr}}$ & Perturbation in throttle valve position& \si{\%} \\
			$u_{\textrm{EGR}}$ & Perturbation in EGR valve position & \si {\%} \\
			$u_{\textrm{VGT}}$ & Perturbation in VGT position & \si{\%} \\
			\hline
		\end{tabularx}
	\end{table}

	\begin{table}
	\centering
	\caption{Outputs of the linear model.}\label{tab:outputs}
	\begin{tabularx}{0.77\columnwidth}{@{} cXl @{}}
		\hline
		State & Description & Units \\
		\hline
		$p_{\textrm{im}}$ & Intake manifold pressure perturbation & \si{kPa} \\
		$y_{\textrm{EGR}}$ & EGR rate perturbation & --\\
		\hline
	\end{tabularx}
\end{table}
	
	The inputs for these models are the throttle position (percent closed), the EGR valve position (percent open) and the VGT vane position (percent closed). The steady state values are experimentally determined by sweeping through engine speeds and fuelling rates corresponding to the linearisation/model grid points, $\left(\omega_e^g,\dot{m}_f^g\right)$,  $\forall g\in \left\{\textrm{I},\,\textrm{II},\,\ldots,\,\textrm{XII} \right\}$. At each grid point, the steady state input values are obtained by allowing sufficient time for the transients to settle  and recording the actuator values applied by the ECU. The corresponding steady state and output values are also recorded at each operating point. These steady values denote the trimming conditions about which linear models need to be identified. The steady values for the input, state and output are denoted by the vectors ${\steady{u}}^g \in \mathbb{R}^3$, ${\steady{x}}^g \in \mathbb{R}^4$ and ${\steady{y}}^g \in \mathbb{R}^2$, respectively.
	
	System excitation is performed at each grid point, $\left(\omega_e^g,\dot{m}_f^g\right)$, using a pseudo-random input perturbation sequence of $\pm 15 \%$ of the steady state values, ${\steady{u}}^g$, applied simultaneously to all actuators listed in Table \ref{tab:inputs} to capture the state and output (see Table \ref{tab:outputs}) perturbations. The resulting input and output trajectories are divided into training and validation datasets and \textsc{Matlab}'s system identification toolbox \cite{Mathworks2017} is employed for model identification using the training dataset, while the validation dataset is used to determine whether or not the identification performance is adequate. The data for system identification is experimentally obtained from the test bench described in Section \ref{sec:RT_implementation}. 
		
	\begin{figure}
		\begin{centering}
			\begin{tikzpicture}
			\pgfkeys{/pgf/number format/.cd,fixed,precision=2}
			
			\node (pic) at (0,0)
			{\includegraphics[clip, trim = {0.00cm 0.00cm 0.0cm 0.0cm}]{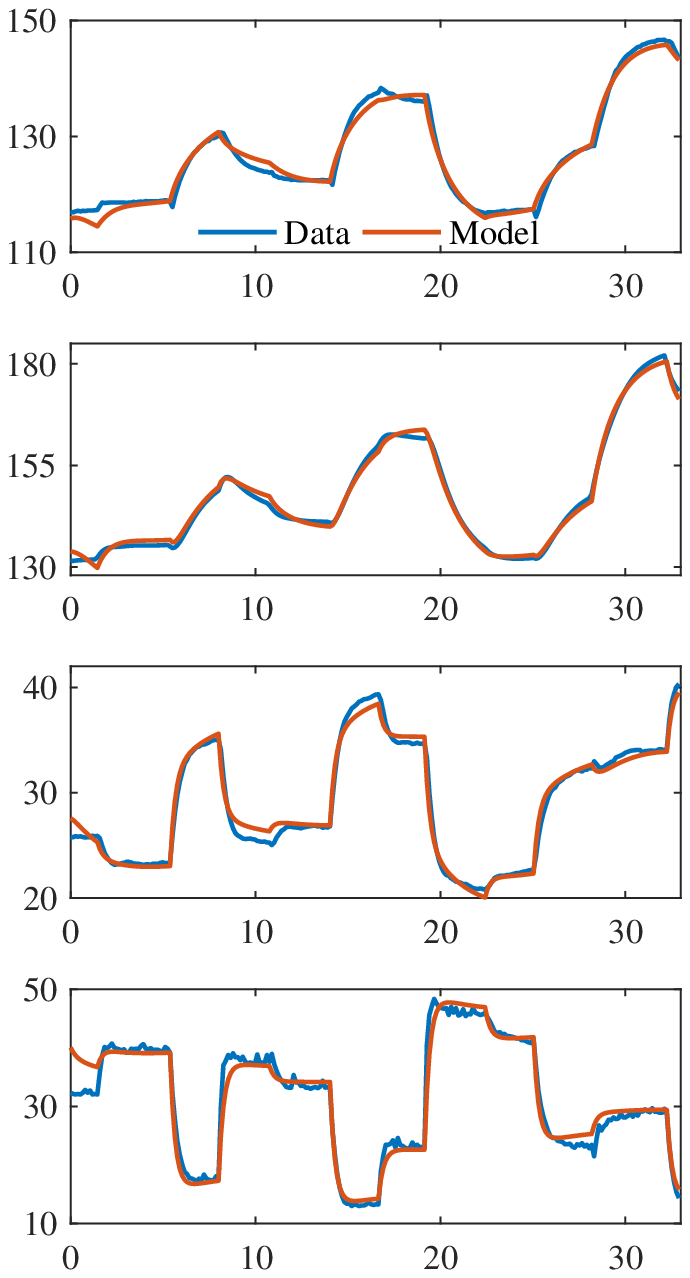}};

			\node [ rotate = 90, above left = -12mm and -4.5mm of pic] { {Normalised p$_{\textrm{im}}$ }};
			\node [ rotate = 90, above left = -45mm and -4.5mm of pic] { Normalised {p$_{\textrm{em}}$ }};
			\node [ rotate = 90, above left = -76mm and -4.5mm of pic] { Normalised {W$_{\textrm{comp}}$ }};
			\node [ rotate = 90, above left = -109mm and -4.5mm of pic] { Normalised {y$_{\textrm{EGR}}$}};
			\node [ rotate = 0, below   = -13mm  of pic] {Time (s)};
			
			
			\node (rect) [draw = none, fill = white, minimum width = 0.65cm, minimum height = 2.75cm, inner sep = 0pt, below  left = -142mm and -10.5mm of pic] {};
			
			\node [ below  left = -142.5mm and -11mm of pic] {\pgfmathparse{150/\pimbar}\pgfmathprintnumber\pgfmathresult};
			\node [ below  left = -130.5mm and -11mm of pic] {\pgfmathparse{130/\pimbar}\pgfmathprintnumber\pgfmathresult};
			\node [ below  left = -118.5mm and -11mm of pic] {\pgfmathparse{110/\pimbar}\pgfmathprintnumber\pgfmathresult};
			
			\node (rect) [draw = none, fill = white, minimum width = 0.65cm, minimum height = 2.75cm, inner sep = 0pt, below  left = -107mm and -10.5mm of pic] {};
			
			\node [ below  left = -107.5mm and -11mm of pic] {\pgfmathparse{180/\pimbar}\pgfmathprintnumber\pgfmathresult};
			\node [ below  left = -97mm and -11mm of pic] {\pgfmathparse{155/\pimbar}\pgfmathprintnumber\pgfmathresult};
			\node [ below  left = -87mm and -11mm of pic] {\pgfmathparse{130/\pimbar}\pgfmathprintnumber\pgfmathresult};
			
			\node (rect) [draw = none, fill = white, minimum width = 0.65cm, minimum height = 2.75cm, inner sep = 0pt, below  left = -75mm and -11mm of pic] {};
			
			\node [ below  left = -74.5mm and -11.5mm of pic] {\pgfmathparse{40/\wcbar}\pgfmathprintnumber\pgfmathresult};
			\node [ below  left = -64mm and -11.5mm of pic] {\pgfmathparse{30/\wcbar}\pgfmathprintnumber\pgfmathresult};
			\node [ below  left = -53mm and -11.5mm of pic] {\pgfmathparse{20/\wcbar}\pgfmathprintnumber\pgfmathresult};
			
			\node (rect) [draw =none, fill = white, minimum width = 0.65cm, minimum height = 2.75cm, inner sep = 0pt, below  left = -43mm and -11mm of pic] {};
			
			\node [ below  left = -44mm and -11.5mm of pic] {\pgfmathparse{50/\egrbar}\pgfmathprintnumber\pgfmathresult};
			\node [ below  left = -32mm and -11.5mm of pic] {\pgfmathparse{30/\egrbar}\pgfmathprintnumber\pgfmathresult};
			\node [ below  left = -20mm and -11.5mm of pic] {\pgfmathparse{10/\egrbar}\pgfmathprintnumber\pgfmathresult};
			
			
			\node (rect) [draw = none, fill = white, minimum width = 6.25cm, minimum height = 0.3cm, inner sep = 0pt, below  left = -114.55mm and -73mm of pic] {};
			
			\node [ rotate = 0, below  left = -115.5mm and -14mm of pic] {0};
			\node [ rotate = 0, below  left = -115.5mm and -33.5mm of pic] {10};
			\node [ rotate = 0, below  left = -115.5mm and -52mm of pic] {20};
			\node [ rotate = 0, below  left = -115.5mm and -71mm of pic] {30};

			\node (rect) [draw = none, fill = white, minimum width = 6.25cm, minimum height = 0.3cm, inner sep = 0pt, below  left = -81.8mm and -73mm of pic] {};
			
			\node [ rotate = 0, below  left = -82.75mm and -14mm of pic] {0};
			\node [ rotate = 0, below  left = -82.75mm and -33.5mm of pic] {10};
			\node [ rotate = 0, below  left = -82.75mm and -52mm of pic] {20};
			\node [ rotate = 0, below  left = -82.75mm and -71mm of pic] {30};

			\node (rect) [draw = none, fill = white,  minimum width = 6.25cm, minimum height = 0.3cm, inner sep = 0pt, below  left = -49.05mm and -73mm of pic] {};
			
			\node [ rotate = 0, below  left = -50mm and -14mm of pic] {0};
			\node [ rotate = 0, below  left = -50mm and -33.5mm of pic] {10};
			\node [ rotate = 0, below  left = -50mm and -52mm of pic] {20};
			\node [ rotate = 0, below  left = -50mm and -71mm of pic] {30};

			\node (rect) [draw = none, fill = white, minimum width = 6.25cm, minimum height = 0.3cm, inner sep = 0pt, below  left = -16.05mm and -73mm of pic] {};
			
			\node [ rotate = 0, below  left = -17mm and -14mm of pic] {0};
			\node [ rotate = 0, below  left = -17mm and -33.5mm of pic] {10};
			\node [ rotate = 0, below  left = -17mm and -52mm of pic] {20};
			\node [ rotate = 0, below  left = -17mm and -71mm of pic] {30};
			
			\end{tikzpicture}
			\vspace{-1cm}
			\par\end{centering}
		\protect\caption{Model fit for the training dataset at the grid point VI.}
		\label{fig:mod_fit_train}
	\end{figure}

	The system identification performance at the linearisation grid point VI, over the training and test datasets are shown in  Figs.~\ref{fig:mod_fit_train} and \ref{fig:mod_fit_test}, respectively. The model predictions are seen to follow the trend of the test data, however, as expected, discrepancies between the linear model predictions and the test data are observed. These disparities arise due to external disturbances, measurement errors and modelling errors as a consequence of using low-order discretised models. A pragmatic state disturbance set can now be estimated from the engine test bench data obtained for system identification. 
	
		\begin{figure}
		\begin{centering}
			\begin{tikzpicture}
			
			\pgfkeys{/pgf/number format/.cd,fixed,precision=2}
			
			\node (pic) at (0,0)
			{\includegraphics[clip, trim = {0.00cm 0.00cm 0.0cm 0.0cm}]{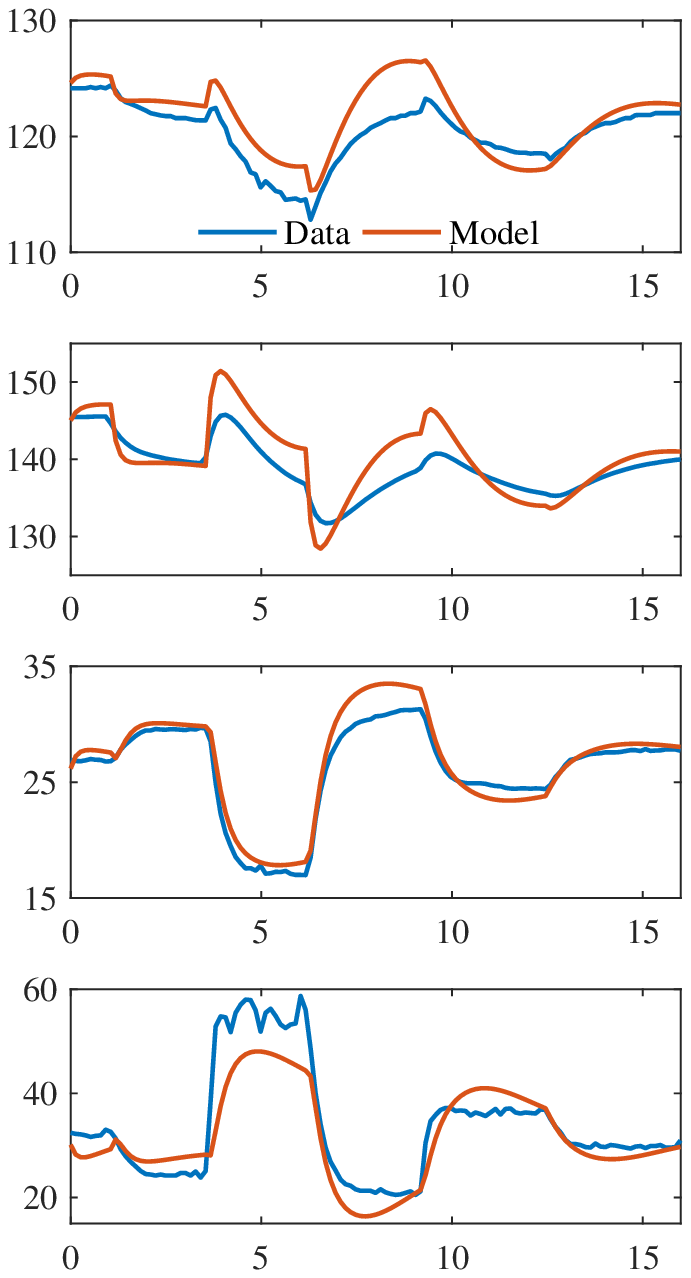}};
			
			\node [ rotate = 90, above left = -12mm and -4.5mm of pic] { {Normalised p$_{\textrm{im}}$ }};
			\node [ rotate = 90, above left = -45mm and -4.5mm of pic] { Normalised {p$_{\textrm{em}}$ }};
			\node [ rotate = 90, above left = -76mm and -4.5mm of pic] { Normalised {W$_{\textrm{comp}}$ }};
			\node [ rotate = 90, above left = -109mm and -4.5mm of pic] { Normalised {y$_{\textrm{EGR}}$}};
			\node [ rotate = 0, below   = -13mm  of pic] {Time (s)};

			
			\node (rect) [draw = none, fill = white, minimum width = 0.65cm, minimum height = 2.75cm, inner sep = 0pt, below  left = -142mm and -10.5mm of pic] {};
			
			\node [ below  left = -142.5mm and -11mm of pic] {\pgfmathparse{130/\pimbar}\pgfmathprintnumber\pgfmathresult};
			\node [ below  left = -130.5mm and -11mm of pic] {\pgfmathparse{120/\pimbar}\pgfmathprintnumber\pgfmathresult};
			\node [ below  left = -118.5mm and -11mm of pic] {\pgfmathparse{110/\pimbar}\pgfmathprintnumber\pgfmathresult};
			
			\node (rect) [draw = none, fill = white, minimum width = 0.65cm, minimum height = 2.75cm, inner sep = 0pt, below  left = -107mm and -10.5mm of pic] {};
			
			\node [ below  left = -105.5mm and -11mm of pic] {\pgfmathparse{150/\pimbar}\pgfmathprintnumber\pgfmathresult};
			\node [ below  left = -97.5mm and -11mm of pic] {\pgfmathparse{140/\pimbar}\pgfmathprintnumber\pgfmathresult};
			\node [ below  left = -90mm and -11mm of pic] {\pgfmathparse{130/\pimbar}\pgfmathprintnumber\pgfmathresult};
			
			\node (rect) [draw = none, fill = white, minimum width = 0.65cm, minimum height = 2.8cm, inner sep = 0pt, below  left = -76mm and -11mm of pic] {};
			
			\node [ below  left = -76.5mm and -11.5mm of pic] {\pgfmathparse{35/\wcbar}\pgfmathprintnumber\pgfmathresult};
			\node [ below  left = -65mm and -11.5mm of pic] {\pgfmathparse{25/\wcbar}\pgfmathprintnumber\pgfmathresult};
			\node [ below  left = -53mm and -11.5mm of pic] {\pgfmathparse{15/\wcbar}\pgfmathprintnumber\pgfmathresult};
			
			\node (rect) [draw =none, fill = white, minimum width = 0.65cm, minimum height = 2.75cm, inner sep = 0pt, below  left = -43mm and -11mm of pic] {};
			
			\node [ below  left = -44mm and -11.5mm of pic] {\pgfmathparse{60/\pimbar}\pgfmathprintnumber\pgfmathresult};
			\node [ below  left = -33.5mm and -11.5mm of pic] {\pgfmathparse{40/\pimbar}\pgfmathprintnumber\pgfmathresult};
			\node [ below  left = -22.5mm and -11.5mm of pic] {\pgfmathparse{20/\pimbar}\pgfmathprintnumber\pgfmathresult};

			
			\node (rect) [draw = none, fill = white, minimum width = 6.25cm, minimum height = 0.3cm, inner sep = 0pt, below  left = -114.55mm and -73mm of pic] {};
			
			\node [ rotate = 0, below  left = -115.5mm and -14mm of pic] {0};
			\node [ rotate = 0, below  left = -115.5mm and -33mm of pic] {5};
			\node [ rotate = 0, below  left = -115.5mm and -53mm of pic] {10};
			\node [ rotate = 0, below  left = -115.5mm and -72.5mm of pic] {15};

			\node (rect) [draw = none, fill = white, minimum width = 6.25cm, minimum height = 0.3cm, inner sep = 0pt, below  left = -81.8mm and -73mm of pic] {};
			
			\node [ rotate = 0, below  left = -82.75mm and -14mm of pic] {0};
			\node [ rotate = 0, below  left = -82.75mm and -33mm of pic] {5};
			\node [ rotate = 0, below  left = -82.75mm and -53mm of pic] {10};
			\node [ rotate = 0, below  left = -82.75mm and -72.5mm of pic] {15};

			\node (rect) [draw = none, fill = white,  minimum width = 6.25cm, minimum height = 0.3cm, inner sep = 0pt, below  left = -49.05mm and -73mm of pic] {};
			
			\node [ rotate = 0, below  left = -50mm and -14mm of pic] {0};
			\node [ rotate = 0, below  left = -50mm and -33mm of pic] {5};
			\node [ rotate = 0, below  left = -50mm and -53mm of pic] {10};
			\node [ rotate = 0, below  left = -50mm and -72.5mm of pic] {15};

			\node (rect) [draw = none, fill = white, minimum width = 6.25cm, minimum height = 0.3cm, inner sep = 0pt, below  left = -16.05mm and -73mm of pic] {};
			
			\node [ rotate = 0, below  left = -17mm and -14mm of pic] {0};
			\node [ rotate = 0, below  left = -17mm and -33mm of pic] {5};
			\node [ rotate = 0, below  left = -17mm and -53mm of pic] {10};
			\node [ rotate = 0, below  left = -17mm and -72.5mm of pic] {15};
			
			\end{tikzpicture}
			\par\end{centering}
		\vspace{-1cm}
		\protect\caption{Model fit for the test dataset at the grid point VI.}
		\label{fig:mod_fit_test}
	\end{figure}

	
	In this work, the disturbance set at each grid point, $\mathcal{W}^g$, is chosen as the hypercube (primarily for its simplicity) that contains the discrepancies between the linear model predictions and the actual engine data from the test dataset. The disturbance sets are convex, compact (closed and bounded) and include the origin. Fig.~\ref{fig:W_set} shows the error distributions between the model predictions and the measurements in each state channel at the operating point VI.

	\begin{figure}
		\begin{centering}
			\begin{tikzpicture}
			\node (pic) at (0,0)
			{\includegraphics[clip, trim = {0.00cm 0.00cm 0.0cm 0.0cm}]{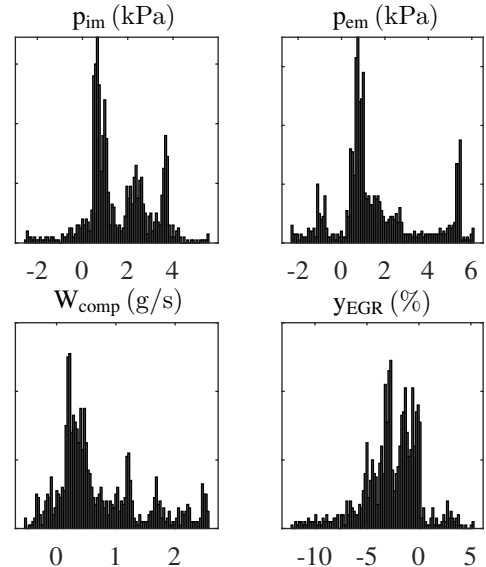}};
			
			\node [ rotate = 0, above left = -2mm and -24mm of pic] { {p$_{\textrm{im}}\,(\SI{}{kPa})$ }};
			\node [ rotate = 0, above left = -2mm and -59mm of pic] { {p$_{\textrm{em}}\,(\SI{}{kPa})$ }};
			\node [ rotate = 0, above left = -40mm and -25mm of pic] { {W$_{\textrm{comp}}\,(\SI{}{g/s})$ }};
			\node [ rotate = 0, above left = -40mm and -58mm of pic] { {y$_{\textrm{EGR}}\,(\%)$}};
			
			\end{tikzpicture}
			\par\end{centering}
		\protect\caption{Distribution of state disturbances at the operating point VI.}
		\label{fig:W_set}
	\end{figure}
	
	The discrepancies between model predictions and the actual engine behaviour can be represented as additive uncertainties on the states. The resulting linear representation of the diesel airpath trimmed about a given grid point, $\left(\omega_e^g,\dot{m}_f^g\right)$, is
	\begin{subequations}
		\begin{eqnarray}
		x_{k+1} & = & A_{}^gx_{k}+B^g_{}u_{k}+w_{k},\\
		y_{k} & = & C_{}^gx_{k} + D_{}^gu_{k},
		\end{eqnarray}\label{eq:lsys}
	\end{subequations}
	
	\noindent where the perturbed states, $x_{k} \coloneqq \left\{p_{\textrm{im}},\: p_{\textrm{em}},\: W_{\textrm{comp}},\: y_{\textrm{EGR}}\right\}$ are respectively the perturbations in the intake and the exhaust manifold pressures, the flow rate through compressor and the EGR rate about ${\steady{x}}^g$; $y_{k}\coloneqq\left\{p_{\textrm{im}},\: y_{\textrm{EGR}}\right\}$ are the output perturbations; the perturbed control inputs, $u_{k}\coloneqq \left\{u_{\textrm{thr}},\: u_{\textrm{EGR}},\: u_{\textrm{VGT}} \right\}$ are respectively the perturbations in the throttle and the EGR valve positions and the VGT position about ${\steady{u}}^g$; and $w_k\in\mathbb{R}^{4}$ is an unknown but bounded state disturbance, contained in the disturbance set ${\mathcal{W}^g} \coloneqq  \left\{ w|\zeta^g w\leq\theta^g,\,\zeta^g\in\mathbb{R}^{a\times 4},\,\theta^g\in\mathbb{R}^{a}\right\}$, with $a \in 2\mathbb{Z}^{+}$.

	\begin{assm} Each pair $\left(A^g,\,B^g\right)$ is stabilisable. \label{ass:stab}
	\end{assm}

	\section{Model Predictive Controller Development}
	\label{sec:mpc}
	
	The objective of the diesel airpath controller is to regulate the outputs, the intake manifold pressure and the EGR rate to their reference values, whilst adhering to an upper bound constraints on the intake and exhaust manifolds pressures for safety and reliability requirements, physical limitations of the actuators and slew rate constraints on the actuators in the presence of state disturbances. In this section, the online MPC and offline CT optimisation problem that are required to achieve the desired controller objectives are described.

	\subsection{Online MPC Optimisation}
	Let the polytopic pointwise-in-time state and input constraints be given as
	\begin{subequations}
		\begin{align}
		&x(k)\in\mathcal{X}\coloneqq \left\{ x|Ex\leq f,\, E\in\mathbb{R}^{q\times 4},\, f\in\mathbb{R}^{q}\right\}, \\
		&u(k)\in\mathcal{U}\coloneqq\left\{ u|Gu\leq h,\, G\in\mathbb{R}^{r\times 3},\, h\in\mathbb{R}^{r}\right\},
		\end{align}\label{eq:con}
	\end{subequations}
	
	\noindent where $x(k) = x_k + \steady{x}^g$;  $u(k) = u_k + \steady{u}^g$; $q$ and $r$ represent the number of facets of $\mathcal{X}$ and $\mathcal{U}$, respectively.
	
	\begin{assm} There exists a positively invariant set, $\mathcal{X}_{f} \coloneqq  \left\{ x|S x\leq t,\,S\in\mathbb{R}^{p\times 4},\,t\in\mathbb{R}^{p}\right\} \subseteq \mathcal{X}$, under a stabilising controller $\kappa_f\left(x\right)$ $\forall \left(A^g,\, B^g\right) $, $g\in \left\{\textrm{I},\,\textrm{II},\,\ldots,\,\textrm{XII} \right\}$ \cite{Broomhead2014}; where the number of facets of $\mathcal{X}_{f}$ is denoted by $p$.\label{ass:kappa}
	\end{assm}

	The novel MPC formulation proposed in \cite{Shekhar2017} and \cite{Sankar2017} with exponentially decaying envelopes to aid fast calibration is summarised below. The MPC cost function is defined as
	\begin{align}
	V_{N}\left(x(k),\,\boldsymbol{\pv{u}_k}\right) \coloneqq & \sum_{j=0}^{N-1}\left\Vert W^g\, Y_{k+j|k}\right\Vert^{2}     \nonumber\\
	& +	\sum_{j=0}^{N-2} \left\Vert\boldsymbol{\epsilon^{g}} W^g\left(\pv{y}_{k+1+j|k}-\Gamma^g \pv{y}_{k+j|k}\right)\right\Vert^{2} \nonumber \\
	& +	\sum_{j=0}^{N-1} \gamma \left\Vert R \pv{u}_{k+j|k}\right\Vert^{2},
	\label{eq:mpccost}
	\end{align}
	
	\noindent where $N$ is the prediction horizon;  $\boldsymbol{\pv{u}_k} \coloneqq \left\{\pv{u}_{k|k},\pv{u}_{k+1|k},\ldots,\pv{u}_{k+N-1|k}\right\}$ is the input sequence applied over the horizon; the height of the envelope at a prediction step $j$ is represented by $Y_{k+j|k} \in \mathbb{R}^2$. Given an initial envelope height at the current time step, $Y_{k|k}$, and a sampling time, $T_s$, the decay envelope is enforced by
	\begin{equation}
	Y_{k+1+j|k}=\Gamma^g Y_{k+j|k},\,\forall j\in\mathbb{Z}_{\left[0:N-2\right]},
	\end{equation}
	\noindent where $\Gamma^g \coloneqq \exp \left(-\diag \{T_s/\tau_{\boost}^g, T_s/\tau_{\EGR}^g \} \right)$. The envelope time constants corresponding to the output channels for a given $\left(\omega_e^g,\dot{m}_f^g\right)$, $\tau_{\textrm{boost}}^g$ and $\tau_{\textrm{EGR}}^g$, are the primary tuning parameters used to shape the output response. 
	
	The first term in the cost function \eqref{eq:mpccost} penalises the envelope heights for output prioritisation through the matrix,
	\begin{equation}
	W^{g} \coloneqq \diag\left\{w^{g}/w_{\textrm{boost}},\,\left(1-w^{g}\right)/w_{\textrm{EGR}}\right\},
	\end{equation}
	for the envelope priority parameter $w^{g} \in [0,\,1]$ and normalisation constants $w_{\textrm{boost}}$ and $w_{\textrm{EGR}}$, that are chosen to scale the output variation to between zero and one over the engine operational range. The parameter $w^{g}$ is used to fine-tune the relative time constants between the two output channels i.~e.,	$w^{g}$ defines the relative priority in minimising the envelope height corresponding to one output at the expense of increase in the envelope height of the other output channel. A value of $w^{g} > 0.5$ will prioritise intake manifold pressure over EGR rate.
	
	The secondary objective is to encourage smooth exponential decay of the outputs through the second term in the MPC cost function \eqref{eq:mpccost}. The smoothness term penalises output deviation from an ideal exponential decay towards zero (or equivalently, the actual outputs decaying towards their steady state values). The parameter $\boldsymbol{\epsilon^{g}} \coloneqq \diag\{\epsilon_{\boost}^{g}, \epsilon_{\EGR}^{g}$\}, is a diagonal matrix that is used to adjust the smoothness on each output channel. 
	
	Finally, the third term in the cost function \eqref{eq:mpccost}, is used for input regularisation, through the constant $\gamma \in [0,\,\infty)$ and the matrix $R$. This term encourages the input perturbations to tend towards zero, which is equivalent to the actual inputs approaching their desired steady state values for the given operating point. In this work, $\gamma$ is chosen sufficiently small such that the input regularisation is the tertiary objective and does not dominate the envelope and smoothness cost terms.
	
	To handle transient operation of the engine over a drive cycle, a switched LTI-MPC strategy is proposed where the controller at the nearest grid point (see Fig.~\ref{fig:mod_grid}), $\left(\omega_e^g,\dot{m}_f^g\right)$, to the current operating point, $\left(\omega_e,\dot{m}_f\right)$, is chosen at each time instant. Therefore, the system matrices identified at the grid point $\left(\omega_e^g,\dot{m}_f^g\right)$, $A^{g}$, $B^{g}$, $C^{g}$ and $D^{g}$; the corresponding tuning parameters, $\tau_{\boost}^{g}$, $\tau_{\EGR}^{g}$, $w^{g}$, $\epsilon_{\boost}^{g}$ and $\epsilon_{\EGR}^{g}$; and the constraint tightening margins reserved over the horizon for the state and input constraints,  $\boldsymbol{\sigma}^{g*}\coloneqq \left\{ \sigma_{0}^{g*},\,\sigma_{1}^{g*},\,\ldots,\,\sigma_{N}^{g*}\right\}$ and $\boldsymbol{\mu}^{g*}\coloneqq \left\{ \mu_{0}^{g*},\,\mu_{1}^{g*},\,\ldots,\,\mu_{N-1}^{g*}\right\}$, respectively, used within the MPC optimisation problem, $\mathcal{P}_N\left(x(k),\,g\right)$, are updated at each sampling instant. The steady state values (or trimming conditions), ${\steady{x}_k}$, ${\steady{u}_k}$ and ${\steady{y}_k}$, for the current operating point, $\left(\omega_e,\dot{m}_f\right)$, are determined using linear interpolation of the identified steady state values at the neighbouring grid points, ${\steady{x}^g}$, ${\steady{u}^g}$ and ${\steady{y}^g}$. The regions around the grid points as shown in Fig.~\ref{fig:mod_grid} indicate the operating points for which each `local' controller will be active.
	
	The control law, $\kappa_N\left(x(k)\right) = \pv{u}^*_{k|k} + {\steady{u}_k}$, is used to determine the control input applied to the engine at each sampling instant, where $\pv{u}^*_{k|k}$ is the first element of the optimal input sequence, $\boldsymbol{\pv{u}_k}^* \coloneqq\left\{\pv{u}_{k|k}^*,\pv{u}_{k+1|k}^*,\ldots,\pv{u}_{k+N-1|k}^*\right\}$, obtained by solving:
	\begin{subequations}
		\begin{align}
		\mathcal{P}_N\left(x(k),\,g\right): & \underset{\boldsymbol{\pv{u}_k},\,Y_{k|k}}{\min}\,\,\,  V_{N}\left(x(k),\,\boldsymbol{\pv{u}_k}\right) \\
		\textrm{s.t } & \forall j\in\mathbb{Z}_{\left[0:N-1\right]} \nonumber\\
		& \pv{x}_{k|k}=x(k) - {\steady{x}_k} \label{eq:initst}\\
		& \pv{x}_{k+1+j|k}=A^g\pv{x}_{k+j|k}+B^g\pv{u}_{k+j|k} \label{eq:dyn1}\\
		& \pv{y}_{k+j|k}=C^g\pv{x}_{k+j|k}+D^g\pv{u}_{k+j|k} \label{eq:dyn2}\\
		& Y_{k+1+j|k}=\Gamma^g Y_{k+j|k} ,\,\forall j\in\mathbb{Z}_{\left[0:N-2\right]} \label{eq:decay}\\
		& Y_{k|k-1}\leq Y_{k|k} \label{eq:decaylimit}\\
		& -Y_{k+j|k}\leq \pv{y}_{k+j|k}\leq Y_{k+j|k} \label{eq:decaycon}\\
		& E\left(\pv{x}_{k+j|k}+\steady{x}_k\right)\leq f-\sigma^{g*}_{j},\,\forall j\in\mathbb{Z}_{\left[1:N-1\right]}\label{eq:statecon}\\
		& G\left(\pv{u}_{k+j|k}+\steady{u}_k\right)\leq h- \mu^{g*}_{j}\label{eq:inpcon} \\
		& S\left(\pv{x}_{k+N|k}+\steady{x}_k\right)\leq t- \sigma^{g*}_{N}\label{eq:termcon}\\
		& \left|\pv{u}_{k|k-1}^* - \pv{u}_{k|k}\right| \leq \delta \boldsymbol{1}_{3} \label{eq:inpslew1}\\
		& \Delta \pv{u}_{k+j|k} \leq \delta \boldsymbol{1}_3,\,\forall j\in\mathbb{Z}_{\left[ 0:N-2\right]}\label{eq:inpslew2},
		\end{align}
		\label{eq:mpc}
	\end{subequations}
	
	\noindent and $\Delta \pv{u}_{k+j|k} \coloneqq \left|\pv{u}_{k+1+j|k} - \pv{u}_{k+j|k}\right|$; the scalar, $\delta$, defines the maximal allowable change in the actuator position in one sampling time.

	The initial condition and nominal system dynamics are included in \eqref{eq:initst}-\eqref{eq:dyn2}. The constraint \eqref{eq:decaylimit} enforces the initial height of the envelope between successive steps of the MPC iteration to decay at a maximum rate $\Gamma^g$, to prevent one of the outputs decaying too abruptly, thereby, potentially resulting in a poor response on the other output channel. The output perturbations over the horizon are restricted to remain within exponentially decaying envelopes by \eqref{eq:decaycon}. Finally, terminal state constraint and slew rate constraints on the inputs are imposed through \eqref{eq:termcon}-\eqref{eq:inpslew2}. 
	
	For a given model grid point, $\left(\omega_e^g,\dot{m}_f^g\right)$, to handle the uncertainties, consider there exists tightened constraints, $\left\{\mathcal{X}_0^g,\,\mathcal{X}_1^g,\,\ldots,\,\mathcal{X}_{N}^g\right\}$ and $\left\{\mathcal{U}_0^g,\,\mathcal{U}_1^g,\,\ldots,\,\mathcal{U}_{N-1}^g\right\}$, for the state and input constraints, respectively. The tightened constraints are obtained by reserving margins $\boldsymbol{\sigma}^{g*}$ and $\boldsymbol{\mu}^{g*}$, from the facets of the polytopic state and input constraints, respectively, as in equations \eqref{eq:statecon} and \eqref{eq:inpcon}.

	Algorithm~\ref{alg:drivecycle} defines the control strategy applied to the engine air path at each time instant during a transient operation over a drive cycle.
	\begin{alg}[MPC over drive cycle] \label{alg:drivecycle}
		At time step $k$, given the current operating condition, $\left(\omega_e,\dot{m}_f\right)$, and state, $x(k)$:
		\begin{enumerate}
			\item Determine the nearest grid point $\left(\omega_e^g,\dot{m}_f^g\right)$ and hence, define the corresponding model parameters -  $A^{g}$, $B^{g}$, $C^{g}$ and $D^{g}$; and the tuning parameters - $\tau_{\boost}^{g}$, $\tau_{\EGR}^{g}$, $w^{g}$, $\epsilon_{\boost}^{g}$ and $\epsilon_{\EGR}^{g}$; and the tightening margins - $\boldsymbol{\sigma}^{g*}$ and $\boldsymbol{\mu}^{g*}$.
			\item Determine the steady state values, ${\steady{x}_k}$, ${\steady{u}_k}$ and ${\steady{y}_k}$, by interpolating linearly from the neighbouring grid points.
			\item Evaluate $\boldsymbol{\pv{u}}^*_k$ as the solution to $\mathcal{P}_N(x(k),\,g)$.
			\item Apply the input $u(k) = \pv{u}^*_{k|k} + {\steady{u}_k}$ to the actuators.
			\item Set $k \leftarrow k + 1$ and return to step 1.
		\end{enumerate}
	\end{alg}
	
	\begin{rem}
		The maximum input change magnitude, $\delta$, in \eqref{eq:inpslew1}-\eqref{eq:inpslew2} is chosen such that $\delta$ is at least the maximum difference in the reserved input tightening margin between successive time instants along the horizon, i.e., $\delta \geq \max \left( \|\mu_{j+1}^{g*} - \mu_{j}^{g*}\|_1\right),\,\forall j \in\mathbb{Z}_{\left[0:N-2\right]}$ and $g\in \left\{\textrm{I},\,\textrm{II},\,\ldots,\,\textrm{XII} \right\}$. \label{rem:delta}
	\end{rem}
	
	\begin{rem} \label{rem:benefit}
		Incorporation of the output envelopes to the MPC formulation and the appropriate cost function parameterisation significantly reduces the number of tuning parameters when compared to conventional MPC, thereby, greatly reducing the calibration effort. For the nominal identified system (without the disturbance term $w_k$ in \eqref{eq:lsys}), used in the conventional QP MPC framework, a standard quadratic cost on the states and inputs with semidefinite matrices and no terminal cost would require 16 parameters per model grid point in the symmetric $Q$ and $R$ matrices to be calibrated, whereas, the proposed calibration-friendly MPC formulation requires at most five parameters namely, $\tau_{\boost}^{g}$, $\tau_{\EGR}^{g}$, $w^{g}$, $\epsilon_{\boost}^{g}$ and $\epsilon_{\EGR}^{g}$, per grid point to be tuned. Furthermore, there is an intuitive relationship between these parameters and the open-loop output transient response.
	\end{rem}
	
	\begin{rem} 
		While in transient operation over a drive cycle, at each sampling instant, the proposed MPC with exponential envelopes solves a regulation problem. The calibration of the controllers is dependent on the closed-loop response obtained for the set point problem. This provides a structured way of approaching drive cycle calibration by isolating an `under-performing' controller and systematically target its calibration.
	\end{rem}
	
	The smoothness term in the proposed MPC cost function \eqref{eq:mpccost} involves a weighted difference of the output predictions at consecutive time steps over the horizon. Hence, monotonicity of the cost function required for establishing asymptotic stability \cite{Mayne2000} is achieved only for certain choices of the design parameters (tuning parameters), thereby, establishing practical stability of the closed-loop system about the origin. 
	
	\begin{assm} The terminal controller satisfies the tightened input constraint at the terminal step of the horizon, i.e., $\kappa_f({x})  \in \mathcal{U}_{N-1}^{g},\, \forall {x}\in\mathcal{X}_f$ and $g\in \left\{\textrm{I},\,\textrm{II},\,\ldots,\,\textrm{XII} \right\}$. \label{ass:kappa_ct}
	\end{assm}

	\begin{lemma}[Local asymptotic stability \cite{Khalil2002}]
		A steady state $\bar{x}_{0}$ is  locally asymptotically stable if there exists a class $\mathcal{K}\mathcal{L}$ function $\beta(\cdot,\cdot)$ such that 
		\begin{equation}
			\left|x\left(k\right)-\bar{x}_{0}\right| \leq \beta \left(\left|x\left(0\right)-\bar{x}_{0}\right|,k\right),\forall \left(x\left(0\right)-\bar{x}_{0}\right)<c, k \geq 0.
		\end{equation}
		\label{lem:lemma1}
	\end{lemma}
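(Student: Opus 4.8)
The plan is to show that the existence of the class $\mathcal{K}\mathcal{L}$ bound implies both Lyapunov stability and attractivity of $\bar{x}_{0}$, whose conjunction constitutes local asymptotic stability. First I would recall the defining properties of a class $\mathcal{K}\mathcal{L}$ function $\beta(r,s)$: for each fixed $s\geq 0$, the map $r\mapsto\beta(r,s)$ is class $\mathcal{K}$ (continuous, strictly increasing, with $\beta(0,s)=0$), and for each fixed $r\geq 0$, the map $s\mapsto\beta(r,s)$ is non-increasing with $\beta(r,s)\to 0$ as $s\to\infty$. These two monotonicity properties are precisely what is needed to recover the standard $\epsilon$--$\delta$ characterisation of asymptotic stability.

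For Lyapunov stability, fix any $\epsilon>0$. Since $\beta$ is non-increasing in its second argument, the hypothesis gives $\left|x(k)-\bar{x}_{0}\right|\leq\beta\!\left(\left|x(0)-\bar{x}_{0}\right|,k\right)\leq\beta\!\left(\left|x(0)-\bar{x}_{0}\right|,0\right)$ for all $k\geq 0$. Because $r\mapsto\beta(r,0)$ is class $\mathcal{K}$, it is invertible on its range, so I would set $\delta=\min\{c,\,\beta^{-1}(\epsilon,0)\}>0$, with the inverse taken in the first argument. Then $\left|x(0)-\bar{x}_{0}\right|<\delta$ forces $\beta\!\left(\left|x(0)-\bar{x}_{0}\right|,0\right)<\epsilon$, and hence $\left|x(k)-\bar{x}_{0}\right|<\epsilon$ for every $k\geq 0$, which is exactly the stability requirement.

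For attractivity, take any initial condition with $\left|x(0)-\bar{x}_{0}\right|<c$. The hypothesis yields the two-sided bound $0\leq\left|x(k)-\bar{x}_{0}\right|\leq\beta\!\left(\left|x(0)-\bar{x}_{0}\right|,k\right)$, and since $\beta(r,k)\to 0$ as $k\to\infty$ for fixed $r$, the squeeze theorem gives $\lim_{k\to\infty}\left|x(k)-\bar{x}_{0}\right|=0$. Combining this convergence with the stability established above yields local asymptotic stability of $\bar{x}_{0}$.

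The main obstacle is bookkeeping rather than conceptual depth: one must take care to invoke the correct monotonicity of $\beta$ in each of its two arguments and to justify the invertibility of $\beta(\cdot,0)$ as a class $\mathcal{K}$ function so that a valid $\delta$ can be produced. Since this is the standard comparison-function characterisation of asymptotic stability, no machinery beyond the definitions of class $\mathcal{K}$ and $\mathcal{K}\mathcal{L}$ functions is required.
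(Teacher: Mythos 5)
The paper offers no proof of this lemma at all: it is imported as a standard textbook result (the citation to Khalil) and is invoked as a black box at the end of the proof of Theorem~1. So there is no paper argument to compare yours against; your proposal has to stand on its own merits, and on those merits it is correct. It is the standard comparison-function characterisation: monotonicity of $\beta$ in its second argument reduces Lyapunov stability to an $\epsilon$--$\delta$ statement about the class $\mathcal{K}$ section $\beta(\cdot,0)$, and the decay $\beta(r,k)\to 0$ as $k\to\infty$ gives local attractivity by squeezing; the conjunction is local asymptotic stability.

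One step deserves tightening. You define $\delta=\min\{c,\beta^{-1}(\epsilon,0)\}$, but $\beta^{-1}(\cdot,0)$ is defined only on the range of $\beta(\cdot,0)$, which need not contain $\epsilon$ (a class $\mathcal{K}$ function need not be onto $[0,\infty)$, and $\beta(\cdot,0)$ is only defined on $[0,c)$ here anyway). The standard repair is to avoid the inverse altogether: since $\beta(\cdot,0)$ is continuous with $\beta(0,0)=0$, there exists $\delta\in(0,c)$ with $\beta(\delta,0)<\epsilon$, and then $\left|x(0)-\bar{x}_{0}\right|<\delta$ gives $\left|x(k)-\bar{x}_{0}\right|\leq\beta\left(\left|x(0)-\bar{x}_{0}\right|,0\right)\leq\beta(\delta,0)<\epsilon$ for all $k\geq 0$, using monotonicity of $\beta(\cdot,0)$ in its first argument. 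With that substitution your argument is complete and matches the textbook proof this lemma is cited from.
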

	
	\begin{theorem}[Practical stability]
		Consider the system represented by \eqref{eq:lsys}, subjected to the constraints \eqref{eq:con} and let the Assumptions \ref{ass:stab}, \ref{ass:kappa} and \ref{ass:kappa_ct} hold. Let  $\mathbb{X}_N$ be the feasible region for $\mathcal{P}_N(x(k),\,g)$. Then given a constant trim point $\bar{x}_0$, $x(0)\in\mathbb{X}_N$ and the control law $\kappa_N\left(x(k)\right)$, there exists a class $\mathcal{K}\mathcal{L}$ function $\beta(\cdot,\cdot)$ such that $\forall k\geq0$:
		\begin{equation}		
		\left| x(k) - \bar{x}_{0}\right| \leq \beta\left(\left|x(0) - \bar{x}_{0}\right|,\, k\right) + \mathcal{O} \left(\left\Vert\boldsymbol{\epsilon^{g}}\right\Vert^{2} + \left\Vert \kappa_f \right\Vert^{2} \right).
		\label{eq:stab}
		\end{equation}
		\label{thm:stab}
	\end{theorem}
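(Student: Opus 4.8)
The plan is to take the optimal value function $V_N^*(x(k)) \coloneqq V_N\!\left(x(k),\boldsymbol{\pv{u}_k}^*\right)$ as a Lyapunov function candidate and to show that it decreases along the closed loop up to an additive residual of order $\mathcal{O}\!\left(\|\boldsymbol{\epsilon^{g}}\|^{2}+\|\kappa_f\|^{2}\right)$; together with matching upper and lower bounds on $V_N^*$ this delivers the claimed class-$\mathcal{KL}$ estimate \eqref{eq:stab} through Lemma \ref{lem:lemma1}. First I would establish that $\alpha_1\!\left(\left|x(k)-\bar{x}_0\right|\right)\leq V_N^*(x(k))\leq \alpha_2\!\left(\left|x(k)-\bar{x}_0\right|\right)$ for class-$\mathcal{K}$ functions $\alpha_1,\alpha_2$. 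The lower bound follows because the envelope constraint \eqref{eq:decaycon} forces $\left|\pv{y}_{k|k}\right|\leq Y_{k|k}$, so the envelope-height term of \eqref{eq:mpccost} bounds a positive-definite function of the initial output and hence of $x(k)-\bar{x}_0$ via the output map together with Assumption \ref{ass:stab}; the upper bound comes from evaluating $V_N$ at the feasible shifted candidate introduced below and using boundedness of all quantities over the compact feasible region $\mathbb{X}_N$.

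Next I would verify recursive feasibility and construct the comparison input. Given $x(k)\in\mathbb{X}_N$ with minimiser $\left(\boldsymbol{\pv{u}_k}^*,Y_{k|k}^*\right)$, define the shifted candidate at $k+1$ by $\pv{u}_{k+1+j|k+1}=\pv{u}_{k+1+j|k}^*$ for $j\in\mathbb{Z}_{\left[0:N-2\right]}$, the appended terminal action $\pv{u}_{k+N|k+1}=\kappa_f\!\left(\pv{x}_{k+N|k}^*\right)$, and the shifted envelope $Y_{k+1|k+1}=\Gamma^g Y_{k|k}^*$. The deterministic envelope recursion \eqref{eq:decay} then gives $Y_{k+1+j|k+1}=Y_{k+1+j|k}^*$, so \eqref{eq:decaylimit} holds with equality and the shifted outputs still satisfy \eqref{eq:decaycon} for $j\in\mathbb{Z}_{\left[0:N-2\right]}$. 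Feasibility of the tightened constraints \eqref{eq:statecon}-\eqref{eq:termcon} for this candidate is inherited from the reserved margins $\boldsymbol{\sigma}^{g*},\boldsymbol{\mu}^{g*}$ together with the positive invariance of $\mathcal{X}_f$ under $\kappa_f$ (Assumption \ref{ass:kappa}) and the terminal input admissibility $\kappa_f(x)\in\mathcal{U}_{N-1}^{g}$ (Assumption \ref{ass:kappa_ct}); the same margins certify robust feasibility under any admissible $w_k\in\mathcal{W}^g$.

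I would then bound $V_N^*(x(k+1))-V_N^*(x(k))$ by evaluating $V_N$ at the candidate and splitting \eqref{eq:mpccost} term by term. Because the envelope obeys $Y_{k+1+j|k}=\Gamma^g Y_{k+j|k}$, the first (envelope-height) sum telescopes exactly: shifting discards $\left\|W^g Y_{k|k}^*\right\|^{2}$ and introduces the strictly smaller $\left\|W^g\!\left(\Gamma^g\right)^{N}Y_{k|k}^*\right\|^{2}$, yielding a strictly negative descent term of order $-\alpha_3\!\left(\left|x(k)-\bar{x}_0\right|\right)$ since $\Gamma^g$ is a contraction. The third (input-regularisation) sum telescopes likewise, removing $\gamma\left\|R\pv{u}_{k|k}^*\right\|^{2}$ and adding the terminal contribution $\gamma\left\|R\kappa_f\!\left(\pv{x}_{k+N|k}^*\right)\right\|^{2}=\mathcal{O}\!\left(\|\kappa_f\|^{2}\right)$.

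The hard part is the second (smoothness) sum, which couples outputs at consecutive prediction steps through $\boldsymbol{\epsilon^{g}}$ and therefore does not telescope. Under the shifted candidate its summands for $j\in\mathbb{Z}_{\left[0:N-3\right]}$ coincide with summands of the time-$k$ cost, so the net change collapses to the single new terminal term $\left\|\boldsymbol{\epsilon^{g}}W^g\!\left(\pv{y}_{k+N|k+1}-\Gamma^g\pv{y}_{k+N-1|k+1}\right)\right\|^{2}$ less the discarded $j=0$ term; since the terminal output is driven by $\kappa_f$ over $\mathcal{X}_f$, the new term is bounded by $\mathcal{O}\!\left(\|\boldsymbol{\epsilon^{g}}\|^{2}\right)$. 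This is precisely where strict monotonicity fails, because the smoothness contribution can offset the negative envelope descent; the best uniform bound is $V_N^*(x(k+1))-V_N^*(x(k))\leq -\alpha_3\!\left(\left|x(k)-\bar{x}_0\right|\right)+\mathcal{O}\!\left(\|\boldsymbol{\epsilon^{g}}\|^{2}+\|\kappa_f\|^{2}\right)$. Iterating this dissipation inequality under the bounds $\alpha_1,\alpha_2$ and invoking Lemma \ref{lem:lemma1} produces the class-$\mathcal{KL}$ function $\beta$ and the residual offset in \eqref{eq:stab}. The principal obstacle is thus controlling the non-telescoping smoothness contribution so that it is genuinely $\mathcal{O}(\|\boldsymbol{\epsilon^{g}}\|^{2})$ and is dominated by the envelope descent for sufficiently small $\boldsymbol{\epsilon^{g}}$.
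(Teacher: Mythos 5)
Your proposal is correct and takes essentially the same route as the paper's proof: the identical shifted candidate (optimal tail, $\kappa_f\!\left(\pv{x}_{k+N|k}^*\right)$ appended at the terminal step, envelope propagated by $\Gamma^g$), the same term-by-term comparison of \eqref{eq:mpccost} yielding a cost decrease up to a residual of order $\mathcal{O}\!\left(\left\Vert\boldsymbol{\epsilon^{g}}\right\Vert^{2}+\left\Vert\kappa_f\right\Vert^{2}\right)$, and the same final appeal to Lemma~\ref{lem:lemma1}. Your explicit class-$\mathcal{K}$ sandwich bounds on $V_N^*$ and the explicit descent term $-\alpha_3$ simply spell out what the paper compresses into ``definite positiveness of the optimal cost function and its non-increasing evolution.''
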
	

	\begin{proof}
		At a given time instant $k$, the optimal control sequence, $\boldsymbol{\pv{u}}_k^*=\left\{\pv{u}_{k|k}^*,\,\ldots,\,\pv{u}_{k+N-1|k}^*\right\}$, obtained by solving the MPC optimisation problem, $\mathcal{P}_N\left(x(k),\,g\right)$, in \eqref{eq:mpc}, satisfies the tightened input constraint i.e., $\pv{u}_{k+j|k}^* \in \mathcal{U}^g_j$ for $j\in\mathbb{Z}_{\left[0:N-1\right]}$ and the corresponding optimal state sequence, $\boldsymbol{\pv{x}}_k^*=\left\{\pv{x}_{k|k}^*,\,\ldots,\,\pv{x}_{k+N|k}^*\right\}$, satisfies the tightened state and terminal constraints, i.e., $\pv{x}_{k+j|k}^* \in \mathcal{X}_j^g$ for $j\in\mathbb{Z}_{\left[0:N-1\right]}$ and $\pv{x}_{k+N|k}^* \in \mathcal{X}_f$. Let the corresponding optimal output and envelope sequences be $\boldsymbol{\pv{y}}_k^*=\left\{\pv{y}_{k|k}^*,\,\ldots,\,\pv{y}_{k+N-1|k}^*\right\}$ and $\boldsymbol{{Y}}_k^*=\left\{{Y}_{k|k}^*,\,\ldots,\,{Y}_{k+N-1|k}^*\right\}$. Consider the following candidate control, output and envelope sequences, respectively, $\boldsymbol{\pv{u}}_{k+1}^0=\left\{\pv{u}_{k+1|k+1}^0,\,\ldots,\,\pv{u}_{k+N|k+1}^0\right\}$,  $\boldsymbol{\pv{y}}_{k+1}^0=\left\{\pv{y}_{k+1|k+1}^0,\,\ldots,\,\pv{y}_{k+N|k+1}^0\right\}$ and $\boldsymbol{{Y}}_{k+1}^0=\left\{{Y}_{k+1|k+1}^0,\,\ldots,\,{Y}_{k+N|k+1}^0\right\},$ for the problem \eqref{eq:mpc} at $k+1$, where $\forall j \in \mathbb{Z}_{\left[0:N-2\right]}$,
		\begin{subequations}
			\begin{align}
			& \pv{u}_{k+1+j|k+1}^0 = \pv{u}_{k+1+j|k}^*, \\
			& \pv{y}_{k+1+j|k+1}^0 = \pv{y}_{k+1+j|k}^*, \\
			& {Y}_{k+1+j|k+1}^0 = {Y}_{k+1+j|k}^*,\\
			& \pv{u}_{k+N|k+1}^0 = \kappa_f\left(\pv{x}_{k+N|k}^*\right),\\
			& \pv{y}_{k+N|k+1}^0 = C^g \pv{x}_{k+N|k}^* + D^g \pv{u}_{k+N|k+1}^0,\\
			& {Y}_{k+N|k+1}^0 = \Gamma^g {Y}_{k+N-1|k}^*.
			\end{align}
		\end{subequations}
		
		Let the corresponding candidate state sequence be $\boldsymbol{\pv{x}}_{k+1}^0=\left\{\pv{x}_{k+1|k+1}^0,\,\ldots,\,\pv{x}_{k+1+N|k+1}^0\right\}$, with  
		\begin{subequations}
			\begin{align}
			& \pv{x}_{k+1+j|k+1}^0 = \pv{x}_{k+1+j|k}^*,\,  \forall j \in \mathbb{Z}_{\left[0:N-1\right]},\\
			& \pv{x}_{k+1+N|k+1}^0 = A^g \pv{x}_{k+N|k}^* + B^g \pv{u}_{k+N|k+1}^0.
			\end{align}
		\end{subequations}

		The candidate control sequence is feasible for $\mathcal{P}_N\left(x(k+1),\,g\right)$ as:
		\begin{itemize}
			\item the terminal set $\mathcal{X}_f$ is positively invariant under the terminal controller, $\kappa_f(\pv{x})$, when $\pv{x} \in \mathcal{X}_f$, where $\kappa_f(\pv{x})$  satisfies the tightened input constraint, $\mathcal{U}_{N-1}^{g}$, and 
			\item a $N_{np} \leq N$ step nilpotent disturbance feedback policy is used.
		\end{itemize}

		The cost function of the problem $\mathcal{P}_N\left(x(k+1),\,g\right)$ for the feasible candidate control sequence is	
			\begin{align}
			& V_{N}\left(x(k+1),\, \boldsymbol{\pv{u}}_{k+1}^0\right) = V_{N}\left(x(k),\, \boldsymbol{\pv{u}}_k^*\right)   \nonumber\\
			&  + \left\Vert W^{g}\, \Gamma^{g} Y_{k+N-1|k}^*\right\Vert^{2} - \left\Vert W^{g}\, Y_{k|k}^*\right\Vert^{2}    \nonumber\\
			&  +	\left\Vert\boldsymbol{\epsilon^{g}} W^{g}\left(\pv{y}_{k+N|k+1}^0-\Gamma^{g} \pv{y}_{k+N-1|k}^*\right)\right\Vert^{2}    \nonumber\\
			&-	\left\Vert\boldsymbol{\epsilon^{g}} W^{g}\left(\pv{y}_{k+1|k}^*-\Gamma^{g} \pv{y}_{k|k}^*\right)\right\Vert^{2} \nonumber \\ 
			&   + \gamma \left\Vert R\kappa_f\left(\pv{x}_{k+N|k}^*\right) \right\Vert^{2}  - \gamma \left\Vert R\pv{u}^*_{k|k} \right\Vert^{2}.	
			\end{align}

		The optimal cost at $k+1$, therefore, for the optimal control sequence, $\boldsymbol{\pv{u}}_{k+1}^*$,  is	
		\begin{align}
		&V_{N}\left(x(k+1),\, \boldsymbol{\pv{u}}_{k+1}^*\right) \leq V_{N}\left(x(k+1),\, \boldsymbol{\pv{u}}_{k+1}^0\right) \nonumber\\
		&\leq V_{N}\left(x(k), \,\boldsymbol{\pv{u}}_k^*\right)    + \mathcal{O} \left(\left\Vert\boldsymbol{\epsilon^{g}}\right\Vert^{2} + \left\Vert \kappa_f \right\Vert^{2} \right).
		\label{eq:thm_prf}
		\end{align}

Eq.~\eqref{eq:stab} follows from \eqref{eq:thm_prf} by considering the definite positiveness of the optimal cost function $V_{N}\left(x\left(k\right),\,\boldsymbol{\pv{u}}_k^*\right)$ and its non- increasing evolution. Then practical stability of the trim point $\bar{x}_0$ is established by virtue of Lemma \ref{lem:lemma1} and \eqref{eq:stab}.

	\end{proof}

	The CT policy required for handling the uncertainties due to modelling errors and controller switching under the LTI-MPC architecture is proposed after stating the underlying assumption about the sizes of the state and input feasible sets.	
	\begin{assm} The sets $\mathcal{X}$ and $\mathcal{U}$ are sufficiently large such that non-empty tightened state and input constraints are obtained at the terminal step of the horizon while performing constraint tightening for a $N$-step predictive controller.\label{ass:non-empty}
	\end{assm}
	
	For a given controller at a model grid point, to provide recursive feasibility guarantees in the presence of system disturbances and uncertainties due to controller switching, the tightened state and input constraints under the constraint tightening approach with a disturbance feedback parameterisation are defined $\forall j\in\mathbb{Z}_{\left[0:N-2\right]}$, as	
	\begin{subequations}
		\begin{align}
		&{\mathcal{\ltvpv{X}}}_{0}^{g}\coloneqq \mathcal{X}, \,{\mathcal{\ltvpv{X}}}_{j+1}^{g}\coloneqq {\mathcal{\ltvpv{X}}}_{j}^{g}\ominus L_{j}^{g}\mathcal{W}^{g} \ominus \mathcal{N}_j \ominus \mathcal{S}_j \ominus \Delta\mathcal{X},\label{eq:ltv_st}\\
		&{\mathcal{\ltvpv{X}}}_{N}^{g}\coloneqq {\mathcal{{X}}}_{f}\ominus L_{N-1}^{g}\mathcal{W}^{g} \ominus \mathcal{N}_{N-1} \ominus \mathcal{S}_{N-1} \ominus \Delta\mathcal{X},\label{eq:ltv_termst}\\
		&{\mathcal{\ltvpv{U}}}_{0}^{g}\coloneqq \mathcal{U}, \, {\mathcal{\ltvpv{U}}}_{j+1}^{g}\coloneqq {\mathcal{\ltvpv{U}}}_{j}^{g}\ominus P_{j}^{g}\mathcal{W}^{g} \ominus \mathcal{M}_j \ominus \Delta\mathcal{U}, \label{eq:ltv_ip} 
		\end{align}\label{eq:ct_ltv}
	\end{subequations}
	
	\noindent where at a prediction step $j$, $P_{j}^g$  is the CT policy matrix providing direct feedback on the disturbance \cite{Kuwata2007}; $L_{j}^g$ is the disturbance transition matrix with $L_{0}^g\coloneqq I_{n}$ and
	\begin{align}
	L_{j+1}^g\coloneqq A^gL_{j}^g+B^gP_{j}^g,\,\forall j\in\mathbb{Z}_{\left[0:N-2\right]};
	\end{align}
	
	\noindent the sets $\mathcal{N}_j$ and $\mathcal{S}_j$ contain the uncertainties in state predictions and disturbance propagation, respectively, due to controller switching; $\mathcal{M}_j$ includes the feedback actions to reject those uncertainties in state predictions because of change in model and the sets $\Delta\mathcal{X}$ and $\Delta\mathcal{U}$ contain all possible deviations of steady state and input values, respectively.

	\begin{theorem}[Robust feasibility under switched LTI approach] Consider that Assumptions \ref{ass:stab}-\ref{ass:non-empty} hold. Let the tightening margins required to satisfy the constraint tightening policy in \eqref{eq:ct_ltv} for the state and input constraints, respectively, be $\boldsymbol{\ltvpv{\sigma}}^{g} \coloneqq \left\{ \ltvpv{\sigma}_{0}^{g},\,\ldots,\,\ltvpv{\sigma}_{N}^{g}\right\} $ and $\boldsymbol{\ltvpv{\mu}}^{g} \coloneqq \left\{ \ltvpv{\mu}_{0}^{g},\,\ldots,\,\ltvpv{\mu}_{N-1}^{g}\right\}$ and substitute $\boldsymbol{\sigma}^{g*} = \boldsymbol{\ltvpv{\sigma}}^{g}$ and $\boldsymbol{\mu}^{g*} = \boldsymbol{\ltvpv{\mu}}^{g}$ in \eqref{eq:mpc}. If $\mathcal{P}_{N}\left(x(k),\,g\right)$ has a feasible solution, then subsequent optimisation problems $\mathcal{P}_{N}  \left(x(k+j),\,g'\right)$, are feasible $\forall \,j>0$,  where $g'\in \left\{\textrm{I},\,\textrm{II},\,\ldots,\,\textrm{XII} \right\}$ represents a model grid point.	
	\end{theorem}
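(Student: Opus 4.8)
The plan is to establish recursive feasibility in the standard constraint-tightening manner of \cite{Richards2006}: assuming that $\mathcal{P}_N(x(k),g)$ admits an optimal solution $\boldsymbol{\pv{u}}_k^*$ with associated $\boldsymbol{\pv{x}}_k^*$, $\boldsymbol{\pv{y}}_k^*$ and $\boldsymbol{Y}_k^*$, I would explicitly construct a feasible candidate for the successor problem $\mathcal{P}_N(x(k+1),g')$ and verify that every constraint in \eqref{eq:mpc} is satisfied. Because the disturbance-feedback parameterisation of \eqref{eq:ct_ltv} is used, the candidate cannot be the plain time-shifted optimum; it must carry the rejection correction $P_j^g w_k$ so that the realised disturbance $w_k\in\mathcal{W}^g$, which separates the true perturbation state from the nominal prediction $\pv{x}_{k+1|k}^*$, is absorbed by the reserved margins. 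First I would record that, by hypothesis, the optimal trajectory lies in the tightened sets ${\mathcal{\ltvpv{X}}}_j^g$ and ${\mathcal{\ltvpv{U}}}_j^g$ with the substitutions $\boldsymbol{\sigma}^{g*}=\boldsymbol{\ltvpv{\sigma}}^g$ and $\boldsymbol{\mu}^{g*}=\boldsymbol{\ltvpv{\mu}}^g$.

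For the candidate I would take, for $j\in\mathbb{Z}_{[0:N-2]}$, the corrected inputs $\pv{u}_{k+1+j|k+1}^0=\pv{u}_{k+1+j|k}^*+P_j^g w_k$ and append the terminal action $\pv{u}_{k+N|k+1}^0=\kappa_f(\pv{x}_{k+N|k+1}^0)$; the associated state deviations then propagate as $\pv{x}_{k+1+j|k+1}^0=\pv{x}_{k+1+j|k}^*+L_j^g w_k$ under the \emph{nominal} model $g$, where $L_j^g$ obeys $L_{j+1}^g=A^gL_j^g+B^gP_j^g$ with $L_0^g=I_n$. The core of the argument is a telescoping set inclusion: the definition ${\mathcal{\ltvpv{X}}}_{j+1}^g\coloneqq{\mathcal{\ltvpv{X}}}_j^g\ominus L_j^g\mathcal{W}^g\ominus\mathcal{N}_j\ominus\mathcal{S}_j\ominus\Delta\mathcal{X}$ is engineered precisely so that membership of the previous optimal prediction in ${\mathcal{\ltvpv{X}}}_{j+1}^g$, together with the term $L_j^g w_k$, leaves the shifted candidate inside ${\mathcal{\ltvpv{X}}}_j^{g'}$: the Pontryagin difference by $L_j^g\mathcal{W}^g$ accounts for the disturbance, by $\mathcal{N}_j$ and $\mathcal{S}_j$ for the state-prediction and disturbance-propagation errors incurred by evaluating the dynamics under $g'$ rather than $g$, and by $\Delta\mathcal{X}$ for the change in interpolated trim point. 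The same bookkeeping applied to \eqref{eq:ltv_ip} with $\mathcal{M}_j$ and $\Delta\mathcal{U}$ discharges the tightened input constraint.

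The hard part will be the switching bookkeeping, i.e.\ quantifying that the candidate---built from quantities computed for model $g$ (namely $A^g$, $B^g$, $L_j^g$, $P_j^g$) yet required to satisfy the dynamics, state and input constraints of the freshly selected model $g'$---does not violate feasibility. I would bound the state-prediction mismatch $(A^{g'}-A^g)\pv{x}+(B^{g'}-B^g)\pv{u}$ over the reachable predictions and show it is contained in $\mathcal{N}_j$, bound the discrepancy between the true disturbance transition under $g'$ and the nominal $L_j^g$ by $\mathcal{S}_j$, verify that the feedback corrections rejecting the model-change error lie in $\mathcal{M}_j$, and capture the trim variation between neighbouring grid points by $\Delta\mathcal{X}$ and $\Delta\mathcal{U}$. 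Assumption~\ref{ass:non-empty} guarantees these nested differences remain non-empty through step $N$.

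Finally I would close the remaining constraints and conclude. Positive invariance of $\mathcal{X}_f$ under $\kappa_f$ (Assumption~\ref{ass:kappa}) places $\pv{x}_{k+1+N|k+1}^0$ in the terminal set, while Assumption~\ref{ass:kappa_ct} ensures the appended $\kappa_f$ respects the tightened terminal-step input constraint ${\mathcal{\ltvpv{U}}}_{N-1}^{g'}$; the envelope recursion \eqref{eq:decay} and the limiter \eqref{eq:decaylimit} are inherited from the shifted optimal envelope, and the slew-rate constraints \eqref{eq:inpslew1}--\eqref{eq:inpslew2} hold because Remark~\ref{rem:delta} fixes $\delta$ to dominate the successive differences $\|\mu_{j+1}^{g*}-\mu_j^{g*}\|_1$ of the reserved input margins, which upper-bound the extra actuator motion introduced by the $P_j^g w_k$ correction. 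With a feasible candidate thus exhibited for $\mathcal{P}_N(x(k+1),g')$, an optimal solution exists; applying the construction inductively over $k$, and noting that the argument is independent of which grid point $g'$ is selected at each step, yields feasibility of $\mathcal{P}_N(x(k+j),g')$ for all $j>0$.
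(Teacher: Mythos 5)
Your overall architecture matches the paper's: a recursive argument with a shifted candidate, disturbance-feedback correction, Pontryagin-difference telescoping through \eqref{eq:ct_ltv}, terminal invariance under $\kappa_f$, and Remark~\ref{rem:delta} for the slew constraints. However, there is a genuine gap in your candidate construction. You define the candidate states as $\pv{x}_{k+1+j|k+1}^0=\pv{x}_{k+1+j|k}^*+L_j^{g}w_k$, propagating ``under the \emph{nominal} model $g$,'' with inputs $\pv{u}_{k+1+j|k+1}^0=\pv{u}_{k+1+j|k}^*+P_j^{g}w_k$. But in $\mathcal{P}_N\left(x(k+1),\,g'\right)$ the state sequence is not a free variable: it is pinned down by the equality constraints \eqref{eq:initst}--\eqref{eq:dyn2}, which are imposed with the \emph{successor} matrices $A^{g'},B^{g'}$ and the new interpolated trim $\steady{x}_{k+1},\steady{u}_{k+1}$. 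A trajectory propagated under $g$ simply violates those equality constraints, so it is not a point in the successor problem's feasible set at all, and the model-switch errors cannot then be discharged afterwards as set-membership ``bookkeeping'' --- once the inputs are fixed, the states are determined by the $g'$ dynamics and there is no remaining degree of freedom with which to absorb the mismatch.

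The paper closes this by embedding the corrections in the candidate itself: the inputs are $\ltvpv{u}_{k+1+j|k}^* + P_j^{g'}w_k + \boldsymbol{m}_{k+1+j|k+1} + \Delta\steady{u}$, where $\boldsymbol{m}$ is a feedback through a nilpotent gain $K_x$ acting on the accumulated model-mismatch error $\boldsymbol{e}_{k+1}=\left(A^{g'}-A^{g}\right)\ltvpv{x}^*_{k+1+j|k}+\left(B^{g'}-B^{g}\right)\ltvpv{u}^*_{k+1+j|k}$, and the states propagated under $g'$ then decompose additively as $\ltvpv{x}^*_{k+1+j|k} + L_j^{g'}w_k + \boldsymbol{n}_{k+1+j|k+1} + \boldsymbol{s}_j + \Delta\steady{x}$, with each term lying in exactly one of the sets subtracted in \eqref{eq:ltv_st}--\eqref{eq:ltv_ip}; this term-by-term matching is what makes the telescoping inclusion valid. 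Note also that the paper rejects $w_k$ with the successor grid's policy $P_j^{g'}$, not $P_j^{g}$ as you propose, and $\mathcal{S}_j$ is defined precisely as the difference in closed-loop disturbance propagation $\left(A^{g'}L_j^{g'}+B^{g'}P_j^{g'}\right)-\left(A^{g}L_j^{g}+B^{g}P_j^{g}\right)$ applied to $w_k$; with your $g$-based rejection, the sets $\mathcal{N}_j,\,\mathcal{S}_j,\,\mathcal{M}_j$ as actually constructed would not cover your candidate. Your third paragraph gestures at the right objects, but without the corrections appearing inside the candidate inputs and states, the feasibility certificate does not exist and the proof does not close.
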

	
	\begin{proof} Robust feasibility proof of the MPC optimisation problem \eqref{eq:mpc} under the switched LTI architecture is based on recursion, showing that feasibility of  $\mathcal{P}_{N}  \left(x(k),\,g\right)$ implies feasibility of $\mathcal{P}_{N}  \left(x(k+1),\,g'\right)$. Feasibility of $\mathcal{P}_{N}  \left(x(k+1),\,g'\right)$ is proven by showing the feasibility of a candidate solution constructed from the solution of $\mathcal{P}_{N}  \left(x(k),\,g\right)$. At a given time instant $k$, assume the optimal control sequence and the corresponding optimal state sequences be represented by $\boldsymbol{\ltvpv{u}}_k^*=\left\{\ltvpv{u}_{k|k}^*,\,\ltvpv{u}_{k+1|k}^*,\,\ldots,\,\ltvpv{u}_{k+N-1|k}^*\right\}$ and $\boldsymbol{\ltvpv{x}}_k^*=\left\{\ltvpv{x}_{k|k}^*,\,\ltvpv{x}_{k+1|k}^*,\,\ldots,\,\ltvpv{x}_{k+N|k}^*\right\}$, respectively. Consider the following candidate control sequence: 
		\begin{subequations}
			\begin{align}
			\ltvpv{u}_{k+1+j|k+1}^0 &= \ltvpv{u}_{k+1+j|k}^* + P_{j}^{g'}w_k + \boldsymbol{m}_{k+1+j|k+1} \nonumber \\ 
			& \hspace{1.8cm} + \Delta {\steady{u}},  \forall j \in \mathbb{Z}_{\left[0:N-2\right]}, \\
			\ltvpv{u}_{k+N|k+1}^0 &= \kappa_f\left(\ltvpv{x}_{k+N|k}^0\right).
			\end{align}\label{eq:cancon}
		\end{subequations}	
		
		The initial condition, state and output dynamics in \eqref{eq:initst}-\eqref{eq:dyn2}, the envelope constraints \eqref{eq:decay}-\eqref{eq:decaycon}, and \eqref{eq:inpcon} (according to \eqref{eq:cancon}) are satisfied by construction. Feasibility at time $k$ satisfies the dynamic constraint \eqref{eq:dyn1}, $\ltvpv{x}_{k+1+j|k}^*=A^{g}\ltvpv{x}^*_{k+j|k}+B^{g}\ltvpv{u}^*_{k+j|k}$. Substituting into system dynamics \eqref{eq:lsys} gives the initial condition, $\ltvpv{x}_{k+1+j|k}= \ltvpv{x}_{k+1+j|k}^* + w_k$, for time $k+1$. Hence, using the initial condition and the candidate control sequence \eqref{eq:cancon}, the candidate state sequence is given as
		\begin{subequations}
			\begin{align}	 
			\ltvpv{x}_{k+1+j|k+1}^0 &= \ltvpv{x}_{k+1+j|k}^* + L_{j}^{g'}w_k + \boldsymbol{n}_{k+1+j|k+1} \nonumber \\ 
			&\hspace{1.5cm} + \boldsymbol{s}_{j} + \Delta {\steady{x}}, \forall j \in \mathbb{Z}_{\left[0:N-1\right]}, \label{eq:ltvcan}\\
			\ltvpv{x}_{k+1+N|k+1}^0 &= A^{g'}\ltvpv{x}_{k+N|k+1}^0 + B^{g'}\ltvpv{u}_{k+N|k+1}^0,
			\end{align}
		\end{subequations}
		
		where $\forall k$, $g$ and $g'\in \left\{\textrm{I},\,\textrm{II},\,\ldots,\,\textrm{XII} \right\}$,
		\begin{subequations}
			\begin{align}
			\boldsymbol{n}_{k+1+j|k+1} & =  \begin{cases}
			\begin{array}{ll}
			\begin{array}{l}
			0,
			\end{array} & j=0\\
			\begin{array}{l}
			\boldsymbol{e}_{k+1}+A^{g'}\boldsymbol{n}_{k+j|k+1}  \\ 
			+B^{g'}\boldsymbol{m}_{k+j|k+1}
			\end{array}, & j>0
			\end{array}\end{cases} \label{eq:n_ltv} \\
			\boldsymbol{m}_{k+1+j|k+1} & =   - K_x\left(\boldsymbol{e}_{k+1}+A^{g'}\boldsymbol{n}_{k+j|k+1}\right) \label{eq:m_ltv} \\
			\boldsymbol{e}_{k+1} & =  
			\left(A^{g'}-A^{g}\right)  \ltvpv{x}^*_{k+1+j|k} \nonumber \\ 
			& \hspace{0.5cm}+\left(B^{g'}-B^{g}\right) \ltvpv{u}^*_{k+1+j|k},  \forall j \in \mathbb{Z}_{\left[0:N-1\right]},
			\end{align}
		\end{subequations}
		\noindent $\mathcal{N}_j= \left\{\boldsymbol{n}_{k+1+j|k+1} | \eqref{eq:n_ltv}\right\} \forall j \in \mathbb{Z}_{\left[0:N-1\right]}$,  $ \mathcal{M}_j = \left\{	\boldsymbol{m}_{k+1+j|k+1} | \eqref{eq:m_ltv} \right\} \forall j \in \mathbb{Z}_{\left[0:N-2\right]}$, and the difference in steady state and input values between successive time steps are represented by $\Delta {\steady{x}} = |{\steady{x}_{k+1}} - {\steady{x}_k}| \in \Delta\mathcal{X}$ and  $\Delta {\steady{u}} = |{\steady{u}_{k+1}} - {\steady{u}_k}| \in \Delta\mathcal{U}$, respectively; $K_x$ is a nilpotent candidate feedback gain. The set $\mathcal{S}_j$ is chosen such that it satisfies $\forall$ $g$ and $g'\in\mathbb{Z}_{\left[1:12\right]}$,
		\begin{align}
		s_j &= \sum_{i=1}^j \left\{\left(A^{g'} L_{j}^{g'} + B^{g'} P_{j}^{g'}\right) - \left(A^{g} L_{j}^{g} + B^{g} P_{j}^{g} \right) \right\}w_k\nonumber \\
		& \hspace{4cm}  \in \mathcal{S}_j\,  \forall k,\,w_k\in \mathcal{W}^{g}.
		\end{align}
		
		Feasibility at $k$ implies $\ltvpv{x}_{k+1+j|k}^*\in{\mathcal{\ltvpv{X}}}_{j+1}^{g},\,\forall j\in\mathbb{Z}_{\left[0:N-2\right]}$. Because of the Pontryagin difference in \eqref{eq:ltv_st}, the state sequence in \eqref{eq:ltvcan} implies $\ltvpv{x}_{k+1+j|k+1}^0\in{\mathcal{\ltvpv{X}}}_{j}^{g'},\,\forall j\in\mathbb{Z}_{\left[0:N-1\right]}$, satisfying \eqref{eq:statecon} at $k+1$. Similarly, feasibility at $k$ also implies $\ltvpv{x}_{k+N|k}^*\in{\mathcal{\ltvpv{X}}_{N}^{g}}$ and hence, \eqref{eq:ltvcan} and \eqref{eq:ltv_termst} imply $\ltvpv{x}_{k+N|k+1}^0\in{\mathcal{X}_{f}}$. Since $\mathcal{X}_f$ is a control invariant set under the controller $\kappa_f$ following from Assumption \ref{ass:kappa}, together with the a nilpotent tightening policy implies $\ltvpv{x}_{k+N|k+1}^0 + L_{N-1}^{g} w_k + \boldsymbol{n}_{k+N|k} + s_{N-1} + \Delta {\steady{x}} \in \mathcal{X}_f$ which combined with the definition \eqref{eq:ltv_termst} implies $\ltvpv{x}_{k+N|k+1}^0\in{\mathcal{\ltvpv{X}}}_N^{g'}$, satisfying the terminal constraint \eqref{eq:termcon} at $k+1$. Furthermore, the choice of $\delta$ in accordance with Remark~\ref{rem:delta} ensures the satisfaction of \eqref{eq:inpslew1}-\eqref{eq:inpslew2} at $k+1$. Therefore, all the constraints are satisfied at $k+1$ with the candidate solution constructed from the optimal solution obtained at time $k$ and hence, the optimisation problem $\mathcal{P}_{N}  \left(x(k+j),\,g'\right)$ is feasible $\forall j>0$. 
	\end{proof}

	\subsection{Offline CT Optimisation}
	\label{sec:ctoffline}
	
	The Pontryagin set difference operations in \eqref{eq:ct_ltv} can be parameterised as affine functions of the disturbance feedback policy, $P_j^g$, to determine the constraint tightening margins for the facets of state and input constraints \cite{Shekhar2012}. However, the sufficiently large size of the empirical disturbance sets obtained at grid points from the system identification data and the lack of control authority to reject the disturbances entered at a time instant $k$ completely before $k+N_{\textrm{np}}$, caused violation of Assumption \ref{ass:non-empty}.	
	
	A non-convex optimisation problem was proposed by \cite{Sankar2017} to estimate the maximal disturbance set that can be handled by the local controller and the corresponding constraint tightening margins to provide robust feasibility guarantees for the disturbances originating from the maximal disturbance set. However, convexity can be recovered by applying convex approximation approaches \cite{Oliveira2000}, \cite{Dinh2010}. For each grid point, the following convex optimisation problem is solved sequentially to potentially obtain an improved locally optimal solution to the nonlinear problem in \cite{Sankar2017}, to determine the maximal disturbance set, ${\mathcal{W}}^{g}_{\max}$, and the respective tightening margins, $\bar{\boldsymbol{{\sigma}}}^{g} \coloneqq \left\{ \ctscp{\sigma}_{0}^{g},\,\ldots,\,\ctscp{\sigma}_{N}^{g}\right\} $ and $\bar{\boldsymbol{{\mu}}}^{g} \coloneqq \left\{ \ctscp{\mu}_{0}^{g},\,\ldots,\,\ctscp{\mu}_{N-1}^{g}\right\}$  with $\ctscp{\sigma}_{0}^g\coloneqq 0_{q\times1}$ and $\ctscp{\mu}_{0}^g\coloneqq 0_{r\times1}$:
	\begin{subequations}
		\begin{align}
		\mathcal{P}_{\textrm{CTSCP}}^{\ctscp{\alpha}}&(g,\,i):\underset{\ctscp{\alpha}^{g},\,\bar{\boldsymbol{{\sigma}}}^{g},\, \bar{\boldsymbol{{\mu}}}^{g}}{\min}\, -\log(\det\left(\ctscp{\alpha})^{g}\right)  \nonumber \\
		&  \hspace{0.75cm} +  \rho \left(\left\|\bar{\boldsymbol{Z}}^{g} - \bar{\boldsymbol{Z}}^{g0}_{i}\right\|_{\max}  + \left\|\bar{{\tilde{\boldsymbol{Z}}}}^{g} - \bar{{\tilde{\boldsymbol{Z}}}}^{g0}_{i}\right\|_{\max} \right.\nonumber\\
		&  \hspace{3.5cm} \left. + \left\|\ctscp{\alpha}^{g} - \ctscp{\alpha}^{g0}_{i}\right\|_{\max}\right)\\
		\textrm{s.t  } & \forall j\in\mathbb{Z}_{\left[0:N-1\right]} \nonumber\\
		& L_{j}^{g}=0_{n\times n},\,\forall j\geq N_{np}\label{eq:nilpot}\\
		& \ctscp{Z}_{j}^{g}\geq0_{a\times q}\label{eq:Z1}\\
		& EL_{j}^{g}=\ctscp{Z}^{{g}^T}_{j}\ctscp{\zeta}\label{eq:dual1}\\
		& 0_{q\times 1}\geq f- E\steady{x}^g-\ctscp{\sigma}_{j}^{g} \label{eq:nonempty1}\\
		&  \ctscp{\sigma_{j+1}}^{g}= \ctscp{\sigma_{j}}^{g}+\left( \ctscp{Z}^{g0^{T}}_{j|i}{\ctscp{\alpha}^{g} }+\ctscp{Z}^{g^{T}}_{j}\ctscp{\alpha}^{g0}_{i}-\ctscp{Z}^{g0^{T}}_{j|i}\ctscp{\alpha}^{g0}_{i}\right)\boldsymbol{1}_a   \label{eq:sigma_cv}\\
		&\forall j\in\mathbb{Z}_{\left[0:N-2\right]} \nonumber\\
		& \bar{\tilde{Z}}_{j}^{g}\geq0_{a\times r}\label{eq:Z2}\\
		& GP_{j}^{g}=\bar{\tilde{Z}}^{g^{T}}_{j}\ctscp{\zeta}\label{eq:dual2}\\
		& 0_{r\times 1}\geq h- G\steady{u}^g-\ctscp{\mu}_{j}^{g}\label{eq:nonempty2}\\
		& L_{j+1}^{g}=A^{g}L_{j}^{g}+B^{g}P_{j}^{g}\label{eq:L}\\
		&  \ctscp{\mu_{j+1}}^{g}= \ctscp{\mu_{j}}^{g}+\left(\bar{\tilde{Z}}^{g0^{T}}_{j|i}{\ctscp{\alpha}^{g} }+{\bar{\tilde{Z}}}^{g^{T}}_{j}\ctscp{\alpha}^{g0}_{i}-\bar{\tilde{Z}}^{g0^{T}}_{j|i}\ctscp{\alpha}^{g0}_{i} \right)\boldsymbol{1}_a \label{eq:mu_cv}\\
		& \ctscp{\alpha}^{g}  \succ 0,
		\end{align} \label{eq:scp}
	\end{subequations}
	
	\noindent where  the dual variables $\ctscp{Z}_j^{g}\in \mathbb{R}^{a\times q}$ and $\bar{\tilde{Z}}^{g}_j\in \mathbb{R}^{a\times r}$ are subject to duality constraints \eqref{eq:dual1} and \eqref{eq:dual2}, and elementwise inequality constraints \eqref{eq:Z1} and \eqref{eq:Z2} (see \cite{Shekhar2012}); $\bar{\boldsymbol{{Z}}}^{g0}_{i}$, $\bar{\tilde{\boldsymbol{{Z}}}}^{g0}_{i}$, $\bar{\boldsymbol{{Z}}}^{g}$ and $\bar{\tilde{\boldsymbol{{Z}}}}^{g}$ denote the horizontal concatenation of the corresponding matrices, i.e., $\bar{\boldsymbol{{Z}}}^{g0}_{i} = \mathscr{C}_{j=0}^{N-1} \ctscp{Z}_{j|i}^{g0^T}$, $\bar{\tilde{\boldsymbol{{Z}}}}^{g0}_{i} = \mathscr{C}_{j=0}^{N-2} \bar{\tilde{{Z}}}_{j|i}^{g0^T}$, $\bar{\boldsymbol{{Z}}}^{g} = \mathscr{C}_{j=0}^{N-1} \ctscp{Z}_j^{g^T}$ and  $\bar{\tilde{\boldsymbol{{Z}}}}^{g} = \mathscr{C}_{j=0}^{N-2} \bar{\tilde{{Z}}}_j^{g^T}$, respectively; $\ctscp{\alpha}^{g0}_{i}$, $\bar{\boldsymbol{{Z}}}^{g0}_{i}$ and $\bar{\tilde{\boldsymbol{{Z}}}}^{g0}_{i}$ are the feasible solution obtained at the iteration $i-1$; $\ctscp{\zeta} = \diag \left\{{I_{a/2},\,-I_{a/2}}\right\}$; $\ctscp{\alpha}^{g} \in \mathbb{R}^{a\times a}$ is a diagonal matrix where the elements scale the facets of the estimated disturbance set such that $\mathcal{\ctscp{W}}^{g}\left(\ctscp{\alpha}^{g}\right)=\left\{ w|\ctscp{\zeta} w\leq\ctscp{\alpha}^{g}\boldsymbol{1}_a\right\}$.
	
	The primary objective of (\ref{eq:scp}) is to maximise the volume of the estimated disturbance set whilst ensuring a non-empty tightened constraint set and nilpotent disturbance feedback policy are obtained for the estimated disturbance set, $\mathcal{\ctscp{W}}^{g}\left(\ctscp{\alpha}^{g}\right)$. The regularisation term scaled by a sufficiently small parameter, $\rho \in \mathbb{R}$, as the secondary objective, penalises the maximum deviation of the elements, $\ctscp{\alpha}^{g}$, $\bar{\boldsymbol{Z}}^{g}$ and $\bar{\tilde{\boldsymbol{Z}}}^{g}$ from the solution obtained at the previous iteration, $\ctscp{\alpha}^{g0}_{i}$, $\bar{\boldsymbol{Z}}^{g0}_{i}$ and $\bar{\tilde{\boldsymbol{Z}}}^{g0}_{i}$, respectively. The constraints (\ref{eq:nonempty1}) and (\ref{eq:nonempty2}) ensure that the tightened state and input constraints remain non-empty along the horizon. The nominal system, $\pv{x}_{k+1|k} =  A^{g}\pv{x}_{k|k}+B^{g}\pv{u}_{k|k}$, is driven to the origin in $N_{np}\leq N$ steps by an $N_{np}$-step nilpotent tightening policy as a result of the constraint (\ref{eq:nilpot}). Furthermore, $\ctscp{\sigma_{N}}^{g} = 0_{q\times 1}$, i.e., no tightening is applied to the positively invariant terminal set, $\mathcal{X}_{f}$, as $L_{N-1}^{g} = 0_{n\times n}$. The tightening margins follow the recursions in \eqref{eq:sigma_cv} and \eqref{eq:mu_cv}.
	
	The following algorithm defines how to solve $\mathcal{P}_{\textrm{CTSCP}}^{\ctscp{\alpha}}(g)$ iteratively to determine a locally optimal solution, $\ctscp{\alpha}^{g*}$, $\bar{\boldsymbol{Z}}^{g*}$, $\bar{\tilde{\boldsymbol{Z}}}^{g*}$, $\bar{\boldsymbol{\sigma}}^{g*}$ and $\bar{\boldsymbol{\mu}}^{g*}$, to the nonlinear problem used in \cite{Sankar2017}. The initial feasible solution, $\ct{\alpha}^{g0}$, $\grave{\boldsymbol{Z}}^{g0}$ and $\grave{\tilde{\boldsymbol{Z}}}^{g0}$, used in the algorithm is obtained by solving the nonlinear maximal disturbance set estimation problem proposed by \cite{Sankar2017}.
	\begin{alg}[Sequential convex program]
		\label{alg:scp}
	\end{alg}
	\begin{algorithmic}
		\State $i\gets 0$
		\State $\ctscp{\alpha}^{g0}_{i}$, $\bar{\boldsymbol{Z}}^{g0}_{i}$ and $\bar{\tilde{\boldsymbol{Z}}}^{g0}_{i} \gets \ct{\alpha}^{g0}$, $\grave{\boldsymbol{Z}}^{g0}$ and $\grave{\tilde{\boldsymbol{Z}}}^{g0}$
		\Repeat
		\State $i \gets i +1 $
		\State Solve \eqref{eq:scp} for $\ctscp{\alpha}^{g}$, $\bar{\boldsymbol{Z}}^{g}$ and $\bar{\tilde{\boldsymbol{Z}}}^{g}$
		\State $\ctscp{\alpha}^{g0}_{i} \gets \ctscp{\alpha}^{g}$
		\State $\bar{\boldsymbol{Z}}^{g0}_{i} \gets \bar{\boldsymbol{Z}}^{g}$
		\State $\bar{\tilde{\boldsymbol{Z}}}^{g0}_{i} \gets \bar{\tilde{\boldsymbol{Z}}}^{g}$
		\Until $\left\|\ctscp{\alpha}^{g0}_{i} - \ctscp{\alpha}^{g0}_{i-1}\right\|_{\max}\leq \delta_{\textrm{tol}}$, $\left\|\bar{\boldsymbol{Z}}^{g0}_{i} - \bar{\boldsymbol{Z}}^{g0}_{i-1}\right\|_{\max}\leq \delta_{\textrm{tol}}$
		and
		$\left\|\bar{\tilde{\boldsymbol{Z}}}^{g0}_{i} - \bar{\tilde{\boldsymbol{Z}}}^{g0}_{i-1}\right\|_{\max}\leq \delta_{\textrm{tol}}$ or $i\leq i_{\max}$
		\If {$i = i_{\max}$}
		\State $\ctscp{\alpha}^{g0}_{i_{\max}}$, $\bar{\boldsymbol{Z}}^{g0}_{i_{\max}}$ and $\bar{\tilde{\boldsymbol{Z}}}^{g0}_{i_{\max}} \gets \ct{\alpha}^{g0}$, $\grave{\boldsymbol{Z}}^{g0}$ and $\grave{\tilde{\boldsymbol{Z}}}^{g0}$
		\EndIf 
		\State $\ctscp{\alpha}^{g*}$, $\bar{\boldsymbol{Z}}^{g*}$ and $\bar{\tilde{\boldsymbol{Z}}}^{g*} \gets \ctscp{\alpha}^{g0}_{i}$, $\bar{\boldsymbol{Z}}^{g0}_{i}$ and $\bar{\tilde{\boldsymbol{Z}}}^{g0}_{i}$.
	\end{algorithmic}
	
	As \eqref{eq:scp} is solved offline, $i_{\max}$ and $\delta_{\textrm{tol}}$ are chosen as $100$ and $10^{-5}$, respectively. The maximal disturbance set for a grid point, ${\mathcal{W}}^{g}_{\max}=\mathcal{\bar{W}}^{g}\left(\ctscp{\alpha}^{g*}\right)$, that can be handled by a given controller at a model grid point, $\left(\omega_e^g,\dot{m}_f^g\right)$, and the corresponding constraint tightening margins, $\bar{\boldsymbol{{\sigma}}}^{g*} \coloneqq \left\{ \ctscp{\sigma}_{0}^{g*},\,\ldots,\,\ctscp{\sigma}_{N}^{g*}\right\} $ and $\bar{\boldsymbol{{\mu}}}^{g*} \coloneqq \left\{ \ctscp{\mu}_{0}^{g*},\,\ldots,\,\ctscp{\mu}_{N-1}^{g*}\right\}$, are determined by solving (\ref{eq:scp}) iteratively according to Algorithm~\ref{alg:scp}. These tightening margins are then used in the online MPC optimisation problem \eqref{eq:mpc} by substituting ${\boldsymbol{{\sigma}}}^{g*}=\bar{\boldsymbol{{\sigma}}}^{g*}$ and ${\boldsymbol{{\mu}}}^{g*}=\bar{\boldsymbol{{\mu}}}^{g*}$.

	\subsection{Controller Calibration}
	As mentioned earlier, each local controller at a grid point has just five tuning parameters and there exists an intuitive correlation between these parameters and the output transient response. To calibrate a local controller, first, for a given choice of tuning parameters, the closed-loop response for a fuelling step or a change in engine speed and fuelling rate is obtained and the type of oscillation observed at each output channel is classified into one of the following:
	\begin{itemize}
		\item Type 0 - The oscillations in the output are suitably small
		\item Type 1 - There is too much undershoot
		\item Type 2 - There is too much overshoot or other non-undershooting oscillation
		\item Type 3 - There are oscillations of both Type 1 and Type 2.
	\end{itemize}
	
	Then for a pair of identified types of oscillations, the calibration table, Table \ref{tab:caltab}, provides heuristic tuning rules/suggestions for selected tuning parameters. If there are multiple rules, then the rules are implemented independently, before applying the possible combinations. Algorithm~\ref{alg:drivecyclecal} defines the calibration procedure followed over a drive cycle.
	
	
	\begin{table}
		\protect\caption{Tuning rules for different types of oscillations observed at the output channels.}
		\centering
		\begin{tabular}{  >{\centering\arraybackslash} m{.1cm} | >{\centering\arraybackslash} m{.75cm} | >{\centering\arraybackslash} m{1.2cm} >{\centering\arraybackslash} m{1.2cm} >{\centering\arraybackslash} m{1.2cm} >{\centering\arraybackslash} m{1.2cm} | }
			\multicolumn{1}{c}{} & \multicolumn{5}{c}{$p_{\textrm{im}}$}\tabularnewline
			\cline{3-6}
			\multicolumn{1}{c}{\parbox[t]{1mm}{\multirow{11}{*}{\rotatebox[origin=c]{90}{$y_{EGR}$}}}} & Type & 0 & 1 & 2 & 3\tabularnewline
			\cline{2-6}
			& 0 & - &  $w^{g} \uparrow$  $\tau_{\boost}^{g} \uparrow$ &  $w^{g} \uparrow$  $\tau_{\boost}^{g} \uparrow$ &  $w^{g} \uparrow$  $\epsilon_{\boost}^{g} \uparrow$  $\tau_{\boost}^{g} \uparrow$\tabularnewline
			\cline{2-6}
			& 1 &  $w^{g} \downarrow$  $\tau_{\EGR}^{g} \uparrow$ &  $\tau_{\boost}^{g} \uparrow$  $\tau_{\EGR}^{g} \uparrow$ &  $\tau_{\EGR}^{g} \uparrow$ &  $\tau_{\EGR}^{g} \uparrow$  $\tau_{\boost}^{g} \uparrow$\tabularnewline
			\cline{2-6}
			& 2 &  $w^{g} \downarrow$  $\tau_{\EGR}^{g} \uparrow$ &  $\tau_{\boost}^{g} \uparrow$ &  $\epsilon_{\boost}^{g} \uparrow$  $\epsilon_{\EGR}^{g} \uparrow$ &  $\tau_{\boost}^{g} \uparrow$  $\tau_{\EGR}^{g} \uparrow$\tabularnewline
			\cline{2-6}
			& 3 &  $w^{g}  \downarrow$  $\epsilon_{\EGR}^{g} \uparrow$  $\tau_{\EGR}^{g} \uparrow$ &  $\tau_{\boost}^{g} \uparrow$  $\tau_{\EGR}^{g} \uparrow$ &  $\tau_{\boost}^{g} \uparrow$  $\tau_{\EGR}^{g} \uparrow$ &  $\tau_{\boost}^{g} \uparrow$  $\tau_{\EGR}^{g} \uparrow$\tabularnewline
			\cline{2-6}
		\end{tabular}
		\label{tab:caltab}
	\end{table}
	
	\begin{alg} [Calibration over drive cycle] 
		\label{alg:drivecyclecal}
	\end{alg}
\begin{enumerate}
	\item Set baseline values for the tuning parameters.
	\item Obtain the closed-loop response over the drive cycle.
	\item Identify the regions along the drive cycle where a better transient response is expected.
	\item The active controllers in the corresponding regions are determined.
	\item The types of oscillations in both output channels are determined from the transient response.
	\item The tuning rules are identified from the calibration table, Table~\ref{tab:caltab}, and applied for each local controller that requires calibration.
	\item Return to step $2$ and repeat until no further improvements are achieved in the output response.
\end{enumerate}

	\section{Simulation Study}
	\label{sec:sim}
	
	The robust controller is implemented in simulations on a nonlinear MVEM of a diesel airpath. The effect of reduced set of calibration parameters on the output transient response is investigated for a step change from `high' to `mid-low' fuelling rate, at `high' engine speed. Therefore, the local controller `X' and the state and input constraint tightening margins, $\boldsymbol{\sigma}^{\textrm{X}*}$ and $ \boldsymbol{\mu}^{\textrm{X}*}$, corresponding to the maximal disturbance set, $\mathcal{W}_{\max}^{\textrm{X}}$, are employed in \eqref{eq:mpc}. The length of the MPC prediction horizon is chosen as $N=7$ and the sampling rate used in this work is consistent with that of the production ECUs.
	
	For the grid point labelled X, the estimated maximal disturbance set arising from Algorithm~\ref{alg:scp} contains approximately $50\%$ of the observed disturbances. It is worth noting that the disturbance set is generated from a large number of data points arising from a pseudo random excitation sequence, and so is likely to be more aggressive than would be encountered in typical engine operation. Furthermore, in practice the high sampling rate was found to aid rapid recovery from a constraint violation.
	
	\begin{rem}
		Multiple aspects of the proposed control architecture helped in achieving a significant prediction horizon length: (i) use of low order model, (ii) avoiding estimators by using physical quantities as states of the linear models, (iii) switching between a family of controllers instead of online linearisations and finally, (iv) use of constraint tightening approach to handle uncertainties due to model mismatch and controller switching, incurs no additional online computational cost.
	\end{rem}
	
	\begin{rem}
		When the steady state input, $\steady{u}_k$, is saturated at a constraint boundary, the size of the maximal disturbance set estimate might be increased by considering disturbances only in certain directions. 
	\end{rem}

	The boost pressure and EGR rate responses for different time constants of the boost pressure envelope are shown in plots (a) and (b) of Fig.~\ref{fig:sim_tau_var}. As expected, lower values of $\tau_{\textrm{boost}}^{{\textrm{X}}}$, for instance $\tau_{\textrm{boost}}^{{\textrm{X}}}=0.08$, encourages the boost pressure response to decay at a faster rate compared to that  obtained for greater values of $\tau_{\textrm{boost}}^{{\textrm{X}}}$. However, lower values of $\tau_{\textrm{boost}}^{{\textrm{X}}}$ will increase the initial height of the boost pressure envelope such that it dominates the primary cost term, resulting in a poor transient response on the EGR rate as seen in Fig.~\ref{fig:sim_tau_var} (b).
	
	\begin{figure}
		\begin{centering}
			\begin{tikzpicture}
			\pgfkeys{/pgf/number format/.cd,fixed,precision=2}
			\node (pic) at (-0,0)	{
				\includegraphics[clip, trim = {0.0cm 0.2cm 0cm 0.5cm}]{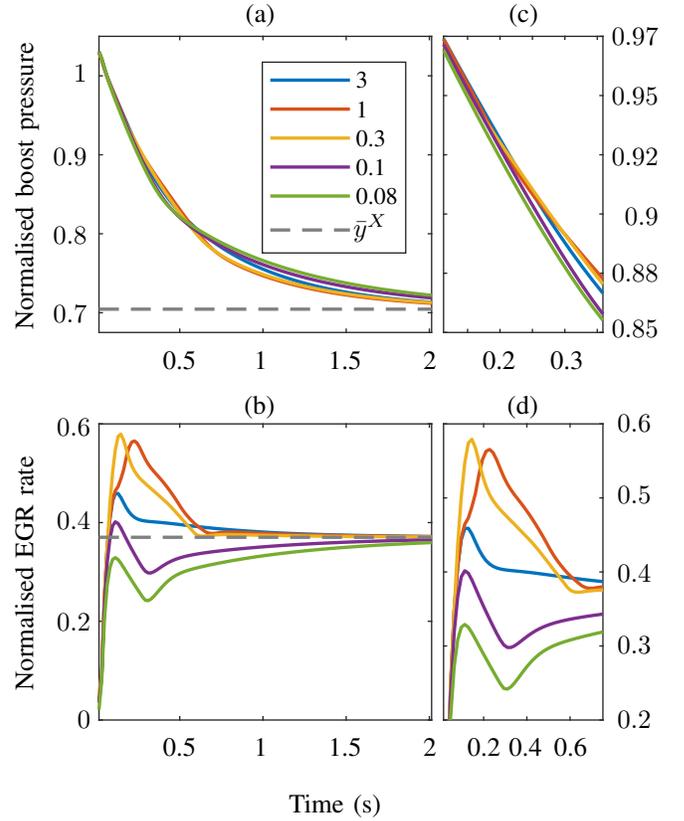}};
			\node [above left = -4mm and -37mm of pic]{(a)};
			\node [ above left = -56mm and -37mm of pic]{(b)};
			\node [above left = -4mm and -72mm of pic]{(c)};
			\node [ above left = -56mm and -72mm of pic]{(d)};
			
			
			\node (rect) [draw =none, fill = white, minimum width = 0.96cm, minimum height = 9.45cm, inner sep = 0pt, below  left = -103mm and -11mm of pic] {};
			
			\node (rect) [draw = none, fill = white, minimum width = 0.55cm, minimum height = 9.45cm, inner sep = 0pt, below  left = -103mm and -86mm of pic] {};
			
			\node [ below  left = -99mm and -12mm of pic] {\pgfmathparse{200/\pimbar}\pgfmathprintnumber\pgfmathresult};
			\node [ below  left = -88mm and -12mm of pic] {\pgfmathparse{180/\pimbar}\pgfmathprintnumber\pgfmathresult};
			\node [ below  left = -77.5mm and -12mm of pic] {\pgfmathparse{160/\pimbar}\pgfmathprintnumber\pgfmathresult};
			\node [ below  left = -67mm and -12mm of pic] {\pgfmathparse{140/\pimbar}\pgfmathprintnumber\pgfmathresult};
			
			\node [ rotate = 90, below left = -102mm and 0.mm of pic] { Normalised boost pressure};

			\node [ below  left = -104mm and -88mm of pic] {\pgfmathparse{195/\pimbar}\pgfmathprintnumber\pgfmathresult};
			\node [ below  left = -96mm and -88mm of pic] {\pgfmathparse{190/\pimbar}\pgfmathprintnumber\pgfmathresult};
			\node [ below  left = -88mm and -88mm of pic] {\pgfmathparse{185/\pimbar}\pgfmathprintnumber\pgfmathresult};
			\node [ below  left = -80mm and -88mm of pic] {\pgfmathparse{180/\pimbar}\pgfmathprintnumber\pgfmathresult};
			\node [ below  left = -72.5mm and -88mm of pic] {\pgfmathparse{175/\pimbar}\pgfmathprintnumber\pgfmathresult};
			\node [ below  left = -65mm and -88mm of pic] {\pgfmathparse{170/\pimbar}\pgfmathprintnumber\pgfmathresult};

			\node [ below  left = -52mm and -12.5mm of pic] {\pgfmathparse{60/\egrbar}\pgfmathprintnumber\pgfmathresult};
			\node [ below  left = -39.5mm and -12.5mm of pic] {\pgfmathparse{40/\egrbar}\pgfmathprintnumber\pgfmathresult};
			\node [ below  left = -26mm and -12.5mm of pic] {\pgfmathparse{20/\egrbar}\pgfmathprintnumber\pgfmathresult};
			\node [ below  left = -13mm and -12.5mm of pic] {\pgfmathparse{0/\egrbar}\pgfmathprintnumber\pgfmathresult};
			
			\node [ rotate = 90, below left = -47mm and -0.mm of pic] { Normalised EGR rate};
			
			\node [ below  left = -52.5mm and -87mm of pic] {\pgfmathparse{60/\egrbar}\pgfmathprintnumber\pgfmathresult};
			\node [ below  left = -43mm and -87mm of pic] {\pgfmathparse{50/\egrbar}\pgfmathprintnumber\pgfmathresult};
			\node [ below  left = -32.5mm and -87mm of pic] {\pgfmathparse{40/\egrbar}\pgfmathprintnumber\pgfmathresult};
			\node [ below  left = -23mm and -87mm of pic] {\pgfmathparse{30/\egrbar}\pgfmathprintnumber\pgfmathresult};
			\node [ below  left = -13mm and -87mm of pic] {\pgfmathparse{20/\egrbar}\pgfmathprintnumber\pgfmathresult};

			
			\node (rect) [draw = none, fill = white, minimum width = 7.0cm, minimum height = 0.35cm, inner sep = 0pt, below  left = -61mm and -80mm of pic] {};
			
			\node [ below  left = -61mm and -26.5mm of pic] {0.5};
			\node [ below  left = -61mm and -36mm of pic] {1};
			\node [ below  left = -61mm and -48.5mm of pic] {1.5};
			\node [ below  left = -61mm and -58.5mm of pic] {2};
			
			\node [ below  left = -61mm and -69mm of pic] {0.2};
			\node [ below  left = -61mm and -78mm of pic] {0.3};
			
			\node (rect) [draw = none, fill = white, minimum width = 7.0cm, minimum height = 0.85cm, inner sep = 0pt, below  left = -9mm and -80mm of pic] {};
			
			\node [ below  left = -9.5mm and -26.5mm of pic] {0.5};
			\node [ below  left = -9.5mm and -36mm of pic] {1};
			\node [ below  left = -9.5mm and -48.5mm of pic] {1.5};
			\node [ below  left = -9.5mm and -58.5mm of pic] {2};
			
			\node [ below  left = -9.5mm and -67mm of pic] {0.2};
			\node [ below  left = -9.5mm and -72.5mm of pic] {0.4};
			\node [ below  left = -9.5mm and -78.5mm of pic] {0.6};
			
			\node [ below   = -2mm  of pic] {Time (s)};
			
				\node (rect) [draw=none , fill = white, minimum width = 0.5cm, minimum height = 0.48cm, inner sep = 0pt, below  left = -78.5mm and -51mm of pic] {$\steady{y}^X$};
			
			\end{tikzpicture}
			\par\end{centering}
		\protect\caption{(a) Boost pressure and (b) EGR rate responses for selected values of  $\tau_{\textrm{boost}}^{{\textrm{X}}}$. Magnified views (c and d).}
		\label{fig:sim_tau_var}
	\end{figure}
	
	\begin{figure}
		\begin{centering}
			\begin{tikzpicture}
			\pgfkeys{/pgf/number format/.cd,fixed,precision=2}
			\node (pic) at (-0,0)	{
				\includegraphics[clip, trim = {0.0cm 0.2cm 0cm 0.5cm}]{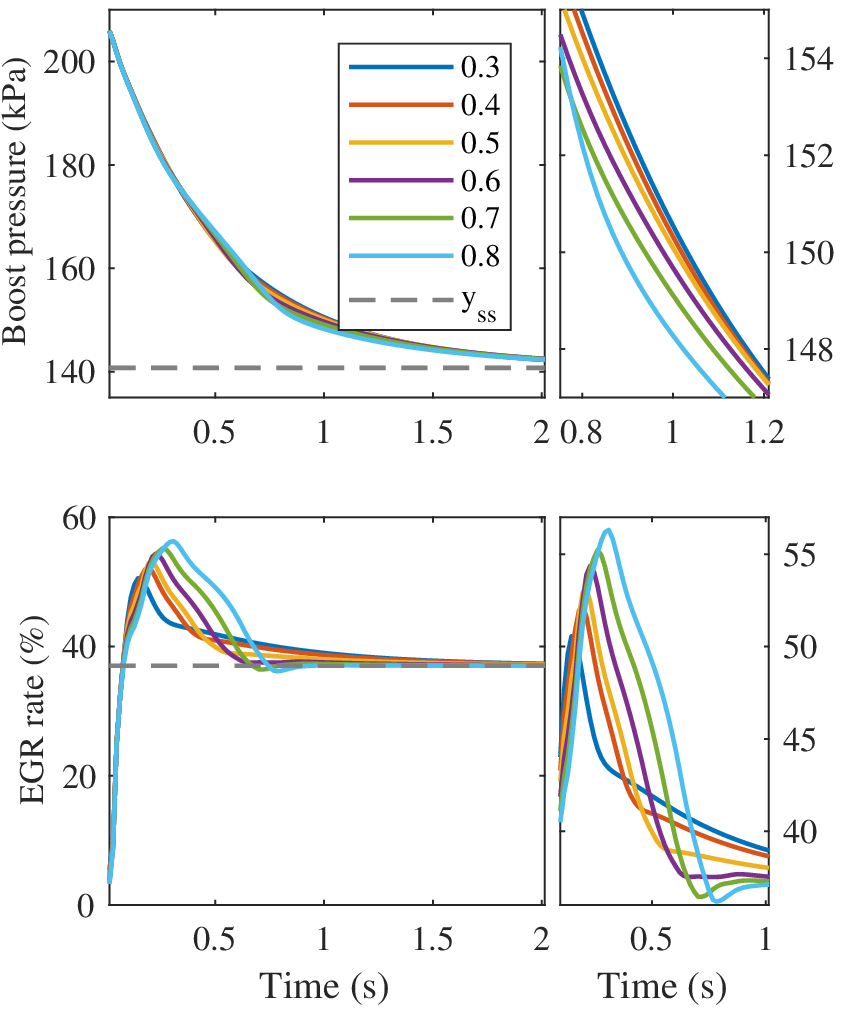}};
			\node [above left = -4mm and -37mm of pic]{(a)};
			\node [ above left = -56mm and -37mm of pic]{(b)};
			\node [above left = -4mm and -72mm of pic]{(c)};
			\node [ above left = -56mm and -72mm of pic]{(d)};
			
			
			\node (rect) [draw =none, fill = white, minimum width = 0.96cm, minimum height = 9.45cm, inner sep = 0pt, below  left = -103mm and -11mm of pic] {};
			
			\node (rect) [draw = none, fill = white, minimum width = 0.55cm, minimum height = 9.45cm, inner sep = 0pt, below  left = -103mm and -86mm of pic] {};
			
			\node [ below  left = -99mm and -12mm of pic] {\pgfmathparse{200/\pimbar}\pgfmathprintnumber\pgfmathresult};
			\node [ below  left = -88mm and -12mm of pic] {\pgfmathparse{180/\pimbar}\pgfmathprintnumber\pgfmathresult};
			\node [ below  left = -77.5mm and -12mm of pic] {\pgfmathparse{160/\pimbar}\pgfmathprintnumber\pgfmathresult};
			\node [ below  left = -67mm and -12mm of pic] {\pgfmathparse{140/\pimbar}\pgfmathprintnumber\pgfmathresult};
			
			\node [ rotate = 90, below left = -102mm and 0.mm of pic] { Normalised boost pressure};

			\node [ below  left = -99mm and -88mm of pic] {\pgfmathparse{154/\pimbar}\pgfmathprintnumber\pgfmathresult};
			\node [ below  left = -89mm and -88mm of pic] {\pgfmathparse{152/\pimbar}\pgfmathprintnumber\pgfmathresult};
			\node [ below  left = -79mm and -88mm of pic] {\pgfmathparse{150/\pimbar}\pgfmathprintnumber\pgfmathresult};
			\node [ below  left = -69mm and -88mm of pic] {\pgfmathparse{148/\pimbar}\pgfmathprintnumber\pgfmathresult};

			\node [ below  left = -52mm and -12.5mm of pic] {\pgfmathparse{60/\egrbar}\pgfmathprintnumber\pgfmathresult};
			\node [ below  left = -39.5mm and -12.5mm of pic] {\pgfmathparse{40/\egrbar}\pgfmathprintnumber\pgfmathresult};
			\node [ below  left = -26mm and -12.5mm of pic] {\pgfmathparse{20/\egrbar}\pgfmathprintnumber\pgfmathresult};
			\node [ below  left = -13mm and -12.5mm of pic] {\pgfmathparse{0/\egrbar}\pgfmathprintnumber\pgfmathresult};
			
			\node [ rotate = 90, below left = -47mm and -0.mm of pic] { Normalised EGR rate};
			
			\node [ below  left = -49mm and -88mm of pic] {\pgfmathparse{55/\egrbar}\pgfmathprintnumber\pgfmathresult};
			\node [ below  left = -39mm and -88mm of pic] {\pgfmathparse{50/\egrbar}\pgfmathprintnumber\pgfmathresult};
			\node [ below  left = -30mm and -88mm of pic] {\pgfmathparse{45/\egrbar}\pgfmathprintnumber\pgfmathresult};
			\node [ below  left = -21mm and -88mm of pic] {\pgfmathparse{40/\egrbar}\pgfmathprintnumber\pgfmathresult};

			
			\node (rect) [draw = none, fill = white, minimum width = 7.2cm, minimum height = 0.35cm, inner sep = 0pt, below  left = -61mm and -83mm of pic] {};
			
			\node (rect) [draw = none, fill = white, minimum width = 7.0cm, minimum height = 0.85cm, inner sep = 0pt, below  left = -9mm and -80mm of pic] {};
			
			\node [ below  left = -61mm and -26.5mm of pic] {0.5};
			\node [ below  left = -61mm and -36mm of pic] {1};
			\node [ below  left = -61mm and -48.5mm of pic] {1.5};
			\node [ below  left = -61mm and -58.5mm of pic] {2};
			
			\node [ below  left = -61mm and -64mm of pic] {0.8};
			\node [ below  left = -61mm and -72mm of pic] {1};
			\node [ below  left = -61mm and -82.5mm of pic] {1.2};

			\node [ below  left = -9.5mm and -26.5mm of pic] {0.5};
			\node [ below  left = -9.5mm and -36mm of pic] {1};
			\node [ below  left = -9.5mm and -48.5mm of pic] {1.5};
			\node [ below  left = -9.5mm and -58.5mm of pic] {2};
			
			\node [ below  left = -9.5mm and -71mm of pic] {0.5};
			\node [ below  left = -9.5mm and -81mm of pic] {1};
			
			\node [ below  = -2mm of pic] {Time (s)};
			
			\node (rect) [draw=none , fill = white, minimum width = 0.5cm, minimum height = 0.48cm, inner sep = 0pt, below  left = -74.5mm and -53mm of pic] {$\steady{y}^X$};
			
			\end{tikzpicture}
			\par\end{centering}
		\protect\caption{(a) Boost pressure and (b) EGR rate responses for selected values of  $w^{{\textrm{X}}}$. Magnified views (c and d).}
		\label{fig:sim_w_var}
	\end{figure}

	\begin{figure}
		\begin{centering}
			\begin{tikzpicture}
			\pgfkeys{/pgf/number format/.cd,fixed,precision=2}
			
			\node (pic) at (0,0)
			{\includegraphics[clip, trim = {0.5cm 0.35cm 0cm 0cm}]{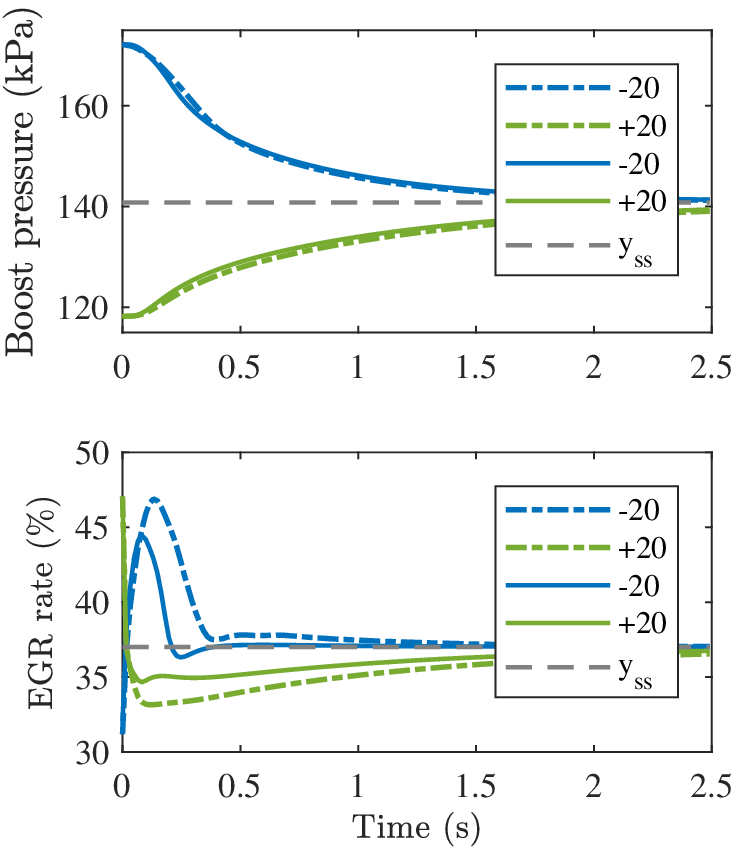}};
			
			\node [above = -8mm of pic]{(a)};
			\node [ above = -51mm of pic]{(b)};
			
			
			\node (rect) [draw =none , fill = white, minimum width = 0.9cm, minimum height = 7.8cm, inner sep = 0pt, below  left = -83mm and -8mm of pic] {};
			
			\node [ below  left = -76mm and -8mm of pic] {\pgfmathparse{160/\pimbar}\pgfmathprintnumber\pgfmathresult};
			\node [ below  left = -66mm and -8mm of pic] {\pgfmathparse{140/\pimbar}\pgfmathprintnumber\pgfmathresult};
			\node [ below  left = -55mm and -8mm of pic] {\pgfmathparse{120/\pimbar}\pgfmathprintnumber\pgfmathresult};
			
			\node [ rotate = 90, below left = -84mm and 4.mm of pic] { Normalised boost pressure};
			
			\node [ below  left = -41mm and -8mm of pic] {\pgfmathparse{50/\egrbar}\pgfmathprintnumber\pgfmathresult};
			\node [ below  left = -33mm and -8mm of pic] {\pgfmathparse{45/\egrbar}\pgfmathprintnumber\pgfmathresult};
			\node [ below  left = -25.5mm and -8mm of pic] {\pgfmathparse{40/\egrbar}\pgfmathprintnumber\pgfmathresult};
			\node [ below  left = -18mm and -8mm of pic] {\pgfmathparse{35/\egrbar}\pgfmathprintnumber\pgfmathresult};
			\node [ below  left = -10mm and -8mm of pic] {\pgfmathparse{30/\egrbar}\pgfmathprintnumber\pgfmathresult};
			
			\node [ rotate = 90, below left = -39.5mm and 4.mm of pic] { Normalised EGR rate};
			
			
			\node (rect) [draw =none, fill = white, minimum width = 6.3cm, minimum height = 0.35cm, inner sep = 0pt, below  left = -49.5mm and -71mm of pic] {};
			
			\node (rect) [draw =none, fill = white, minimum width = 6.3cm, minimum height = 0.65cm, inner sep = 0pt, below  left = -7mm and -71mm of pic] {};
			
			\node [ below  left = -49.5mm and -11mm of pic] {0};
			\node [ below  left = -49.5mm and -24mm of pic] {0.5};
			\node [ below  left = -49.5mm and -34.5mm of pic] {1};
			\node [ below  left = -49.5mm and -48.5mm of pic] {1.5};
			\node [ below  left = -49.5mm and -58.5mm of pic] {2};
			\node [ below  left = -49.5mm and -72.5mm of pic] {2.5};
			
			\node [ below  left = -7mm and -11mm of pic] {0};
			\node [ below  left = -7mm and -24mm of pic] {0.5};
			\node [ below  left = -7mm and -34.5mm of pic] {1};
			\node [ below  left = -7mm and -48.5mm of pic] {1.5};
			\node [ below  left = -7mm and -58.5mm of pic] {2};
			\node [ below  left = -7mm and -72.5mm of pic] {2.5};
			
			\node [ below   = -2mm of pic] {Time (s)};
			
			\node (rect) [draw=none , fill = white, minimum width = 0.5cm, minimum height = 0.48cm, inner sep = 0pt, below  left = -61.5mm and -64mm of pic] {$\steady{y}^X$};
			
				\node (rect) [draw=none , fill = white, minimum width = 0.5cm, minimum height = 0.48cm, inner sep = 0pt, below  left = -19mm and -64mm of pic] {$\steady{y}^X$};

			\end{tikzpicture}
			\par\end{centering}
		\protect\caption{(a) Boost pressure and (b) EGR rate responses for the baseline (dash-dotted line) and final (solid line) calibration parameter settings.}
		\label{fig:sim_step_resp}
	\end{figure}
	
	The effect of the parameter $w^{{\textrm{X}}}$ on the two output channels is shown in plots (a) and (b) Fig.~\ref{fig:sim_w_var}. From \ref{fig:sim_w_var} (c), it can be seen that the magnitude of overshoot in the EGR rate response increases and the boost pressure response decays faster as the value of $w^{{\textrm{X}}}$ is increased greater than $0.5$, since when $w^{{\textrm{X}}}>0.5$, minimisation of the boost pressure envelope is given priority over minimising the EGR rate envelope and vice versa. The discussion on the effect of the smoothness parameters, $\epsilon_{\boost}^{{\textrm{X}}}$ and $\epsilon_{\EGR}^{{\textrm{X}}}$, is omitted for brevity. 
	
	The output response for fuelling rate step changes of magnitude $\pm\SI{20}{mm^3/stroke}$ about the grid point X is shown in Fig.~\ref{fig:sim_step_resp} for baseline and final choice of calibration parameters. The baseline parameters are chosen as: $\tau_{\textrm{boost}}^{{\textrm{X}}}=1,\,  \tau_{\textrm{EGR}}^{{\textrm{X}}} = 1,\, w^{{\textrm{X}}}=0.5,\, $ $\epsilon_{\textrm{boost}}^{{\textrm{X}}} = 0 \textrm{ and } \epsilon_{\textrm{EGR}}^{{\textrm{X}}} = 0$. A Type 0 oscillation is observed in the boost pressure channel, whereas, an overshooting response is observed in the EGR rate for both fuel steps, indicating a Type 1 oscillation. The controller is calibrated by using the calibration table, Table \ref{tab:caltab}, to reduce the peak overshoot in the EGR rate, whilst not adversely affecting the boost pressure response. As shown in Fig.~\ref{fig:sim_step_resp}, the magnitude of the overshoots are reduced in the EGR rate response with the final choice calibration parameters - $\tau_{\textrm{boost}}^{{\textrm{X}}}=1,\,  \tau_{\textrm{EGR}}^{{\textrm{X}}} = 5,\, w^{{\textrm{X}}}=0.4,\, $ $\epsilon_{\textrm{boost}}^{{\textrm{X}}} = 0 \textrm{ and } \epsilon_{\textrm{EGR}}^{{\textrm{X}}} = 0$. Increasing $\tau_{\textrm{EGR}}^{{\textrm{X}}} > 5$, will reduce the overshoot in the EGR rate, however, will result in a sluggish boost pressure response, which is undesirable. Similarly, decreasing $w^{{\textrm{X}}}<0.4$ worsens the boost response and hence, the calibration is finished. Extensive simulation studies (which are not presented in the paper for brevity) conducted at different operating points over several fuelling steps indicate that the calibration rules suggested in Table \ref{tab:caltab} produce the desired consequence at the output responses.
	
	
	\section{  Experimental Results}
	\label{sec:expres}
	
	In this section, the experimental results obtained by implementing the proposed MPC on a diesel engine bench and calibrating for fuelling step changes about a steady operating condition and over drive cycles are presented. 
	
	\subsection{Real Time Implementation}
	\label{sec:RT_implementation}
	A test bench at Toyota's Higashi-Fuji Technical Center in Susono, Japan is used to experimentally demonstrate the calibration efficacy of the proposed controller. The test bench is equipped with a diesel engine and a transient dynamometer. A d\textsc{SPACE} DS1006 real-time processor board \cite{DS1006} is used to implement the control system described in Section~\ref{sec:mpc}. A block diagram of the controller configuration on the test bench is shown in Fig.~\ref{fig:con_config}. The ECU logs sensor data from the engine and transmits the current state information to the controller. Also, the ECU directly controls all engine sub-systems. However, the ECU commands for the three actuators - throttle, EGR valve and VGT, can be overridden with the MPC commands through enabling a virtual switch shown in Fig.~\ref{fig:con_config} from the ControlDesk interface. For the current engine speed and fuelling rate, $\left(\omega_e,\dot{m}_f\right)$, the model, tuning parameters and CT margins are selected based on the switched LTI-MPC strategy and used by the MPC at each time instant as shown in Fig.~\ref{fig:con_config}.
	
	\begin{figure}
		\begin{centering}
			\begin{tikzpicture} [scale = 0.7, transform shape]
			
			\node [text width=6em, minimum height=3em, text centered,draw,fill=gray!20, inner sep = 0pt] (engine) at (0,0) {Diesel \\ engine};
			\node [text width=6.5em, minimum height=3em, text centered,draw,fill=gray!20, right = 7mm of engine, inner sep = 0pt] (dyn)  {Transient dynamometer};
			\node [text width=6.5em, minimum height=3em, text centered,draw,fill=gray!20, above = 7mm of dyn, inner sep = 0pt] (dyncon) {Dynamometer controller};
			\node [text width=6.0em, minimum height=3.0em, text centered,draw, inner sep = 0pt, above = 7mm of engine] (omegaesetpoint) {Engine speed setpoint};
			
			\draw [line width=3pt] (engine) edge (dyn);
			\draw [latex-latex,thick] (dyn) edge (dyncon);
			\draw [-latex] (omegaesetpoint) edge (dyncon);
			
			\node [draw, fill=gray!60, below left = -6.20mm and 6mm of engine, minimum width = 0.8cm, minimum height = 0.9cm, inner sep = 0pt] (switch) {};
			\node [draw, circle, minimum size = 1mm, below left = 6.7mm and 2mm of switch.north, inner sep = 0pt] (inportmpc) {};
			\node [draw, circle, minimum size = 1mm, below left = 2.mm and 2mm of switch.north, inner sep = 0pt] (inportecu) {};
			\node [draw, circle, minimum size = 1mm, below left = 4.1mm and -2.5mm of switch.north, inner sep = 0pt] (outport) {};
			\node [text width=4em,text centered, below = 1mm of switch, inner sep = 0pt] (swtichtextbox) {};
			\draw (inportmpc) -- (outport);

			\node [text width=3em, minimum height=3em, text centered,draw,fill=gray!20, above left = -11mm and 23mm of engine] (ecu)  {ECU};
			\node [text width=3.0em, minimum height=3em, text centered,draw, left = 23mm of omegaesetpoint] (load) {Load};
			\node [text width=3em, minimum height=3em, text centered,draw,fill=gray!20, below = 7mm of ecu] (mpc) {MPC};
			\draw [-latex] (load) -- (ecu);
			
			\path (ecu.east) -- (ecu.north east) coordinate[pos=0.56] (ecu1);
			\path (ecu.east) -- (ecu.south east) coordinate[pos=0.23] (ecu2);
			
			\path (engine.west) -- (engine.north west) coordinate[pos=0.501] (eng1);
			\path (engine.west) -- (engine.south west) coordinate[pos=0.682] (eng2);
			
			\draw (ecu2) -- (inportecu);
			\draw [latex-latex] (ecu1) -- (eng1);
			\draw [-latex] (ecu) -- (mpc) node [pos = 0.6, right] {$x_k,$ $y_k$};
			\draw [-latex] (outport) -- (eng2)  node [pos = 0.7, above] {$u$};

			\node [text width=5em, minimum height=3em, text centered,draw,fill=gray!20, above left = -0mm and 9mm of ecu] (modpar) {Model, tuning parameters and CT margins};
			\node [text width=5em, minimum height=3em, text centered,draw,fill=gray!20, above left = -24mm and 9mm of ecu] (maps) {Set point maps};
			\node [text width=4em,text centered, above left = -5mm and 1mm of ecu, inner sep = 0pt] {$\omega_e,$ $\dot{m}_f$};
			\draw [-latex] (ecu) -- ++(-2.5,0) -| (maps) ;
			\draw [-latex] (ecu) -- ++(-2.5,0) -| (modpar) ;
			\draw [-latex] (maps.east) -- ++(0.5,0) |- (mpc.146); 
			\draw [-latex] (modpar.west) -- ++(-0.3,0) |- (mpc.210) ;
			\node [text width=10em, minimum height=2em, above left = -20mm and -1mm of mpc, inner sep = 0pt] {$A^{g},$ $B^{g},$ $C^{g},$ $D^{g},$ $\tau_{\boost}^{g},$ $\tau_{\EGR}^{g},$ $w^{g},$ $\boldsymbol{\epsilon}^{g},$ $\boldsymbol{\sigma}^{g*},$ $\boldsymbol{\mu}^{g*}$};
			\node [text width=2em, minimum height=3em, above left = -5mm and 0mm of mpc, inner sep = 0pt] {$\steady{u}_{k},$ $\steady{x}_{k},$ $\steady{y}_{k}$};
			\path [draw] (mpc.east) -- ++(0.5,0)  |- (inportmpc) ;
			\end{tikzpicture}
			\par\end{centering}
		\protect\caption{Controller configuration.}
		\label{fig:con_config}
	\end{figure}
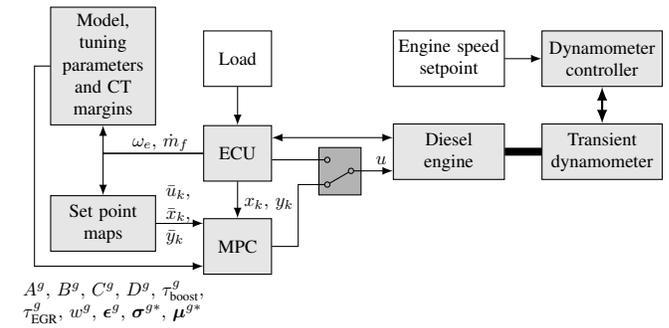
	
	For real time implementation of the controller on the dSPACE platform, it is necessary to carefully choose a QP solver designed to run fast on embedded hardware. The quadratic programming in C (QPC) solver suite \cite{Wills2012} is chosen for this purpose. In particular, the interior-point solver \texttt{qpip} is used for solving the MPC optimisation problem \eqref{eq:mpc}, in conjunction with \textsc{Matlab} R2010b \& Simulink Real-Time Workshop and d\text{SPACE} RTI \& HIL Software v7.4. During steady state calibration, the dynamometer controller maintains the engine at the desired speed while the changes in the fuelling rate are implemented through the dSPACE ControlDesk interface.

	\begin{figure}
		\begin{centering}
			\begin{tikzpicture}
			\pgfkeys{/pgf/number format/.cd,fixed,precision=2}
			
			\node (pic) at (0,0)
			{\includegraphics[clip, trim = {0.5cm 0.35cm 0cm 0cm}]{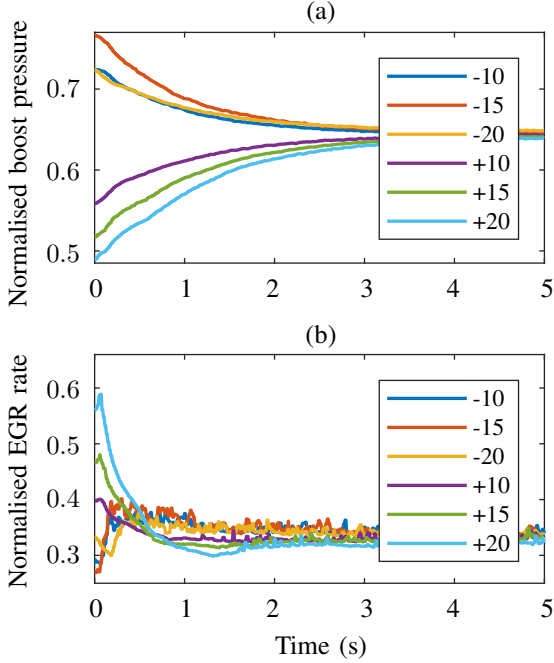}};
			
			\node [above = -8mm of pic]{(a)};
			\node [ above = -51mm of pic]{(b)};
			
			
			\node (rect) [draw =none , fill = white, minimum width = 0.9cm, minimum height = 7.8cm, inner sep = 0pt, below  left = -83mm and -8mm of pic] {};
			
			\node [ below  left = -76mm and -8mm of pic] {\pgfmathparse{140/\pimbar}\pgfmathprintnumber\pgfmathresult};
			\node [ below  left = -65mm and -8mm of pic] {\pgfmathparse{120/\pimbar}\pgfmathprintnumber\pgfmathresult};
			\node [ below  left = -54mm and -8mm of pic] {\pgfmathparse{100/\pimbar}\pgfmathprintnumber\pgfmathresult};
			
			\node [ rotate = 90, below left = -84mm and 4.mm of pic] { Normalised boost pressure};
			
			\node [ below  left = -36mm and -8mm of pic] {\pgfmathparse{60/\egrbar}\pgfmathprintnumber\pgfmathresult};
			\node [ below  left = -28.5mm and -8mm of pic] {\pgfmathparse{50/\egrbar}\pgfmathprintnumber\pgfmathresult};
			\node [ below  left = -21.5mm and -8mm of pic] {\pgfmathparse{40/\egrbar}\pgfmathprintnumber\pgfmathresult};
			\node [ below  left = -14mm and -8mm of pic] {\pgfmathparse{30/\egrbar}\pgfmathprintnumber\pgfmathresult};
			
			\node [ rotate = 90, below left = -39.5mm and 4.mm of pic] { Normalised EGR rate};
			
			
			\node (rect) [draw=none, fill = white, minimum width = 6.3cm, minimum height = 0.38cm, inner sep = 0pt, below  left = -49.5mm and -71mm of pic] {};
			
			\node (rect) [draw =none, fill = white, minimum width = 6.3cm, minimum height = 0.65cm, inner sep = 0pt, below  left = -7mm and -70mm of pic] {};
			
			\node [ below  left = -49.5mm and -11mm of pic] {0};
			\node [ below  left = -49.5mm and -23mm of pic] {1};
			\node [ below  left = -49.5mm and -34.5mm of pic] {2};
			\node [ below  left = -49.5mm and -47mm of pic] {3};
			\node [ below  left = -49.5mm and -58.5mm of pic] {4};
			\node [ below  left = -49.5mm and -71mm of pic] {5};
			
			\node [ below  left = -7mm and -11mm of pic] {0};
			\node [ below  left = -7mm and -23mm of pic] {1};
			\node [ below  left = -7mm and -34.5mm of pic] {2};
			\node [ below  left = -7mm and -47mm of pic] {3};
			\node [ below  left = -7mm and -58.5mm of pic] {4};
			\node [ below  left = -7mm and -71mm of pic] {5};
			
			\node [ below  = -2mm of pic] {Time (s)};

			\end{tikzpicture}
			\par\end{centering}
		\protect\caption{(a) Boost pressure and (b) EGR rate responses for the baseline calibration parameter setting: $\tau_{\textrm{boost}}^{{\textrm{VI}}}=1,\,  \tau_{\textrm{EGR}}^{{\textrm{VI}}} = 1,\, w^{{\textrm{VI}}}=0.5,\, $ $\epsilon_{\textrm{boost}}^{{\textrm{VI}}} = 0 \textrm{ and } \epsilon_{\textrm{EGR}}^{{\textrm{VI}}} = 0$.}
		\label{fig__ss_base}
	\end{figure}

	\begin{figure}
		\begin{centering}
			\begin{tikzpicture}
			\pgfkeys{/pgf/number format/.cd,fixed,precision=2}
			
			\node (pic) at (0,0)
			{\includegraphics[clip, trim = {0.5cm 0.35cm 0cm 0cm}]{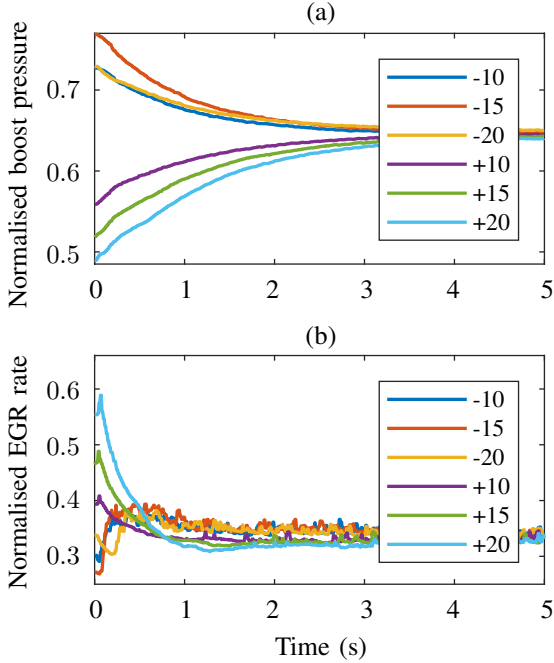}};
			
			\node [above = -8mm of pic]{(a)};
			\node [ above = -51mm of pic]{(b)};
			
			
			\node (rect) [draw =none , fill = white, minimum width = 0.9cm, minimum height = 7.8cm, inner sep = 0pt, below  left = -83mm and -8mm of pic] {};
			
			\node [ below  left = -76mm and -8mm of pic] {\pgfmathparse{140/\pimbar}\pgfmathprintnumber\pgfmathresult};
			\node [ below  left = -65mm and -8mm of pic] {\pgfmathparse{120/\pimbar}\pgfmathprintnumber\pgfmathresult};
			\node [ below  left = -54mm and -8mm of pic] {\pgfmathparse{100/\pimbar}\pgfmathprintnumber\pgfmathresult};
			
			\node [ rotate = 90, below left = -84mm and 4.mm of pic] { Normalised boost pressure};
			
			\node [ below  left = -36mm and -8mm of pic] {\pgfmathparse{60/\egrbar}\pgfmathprintnumber\pgfmathresult};
			\node [ below  left = -28.5mm and -8mm of pic] {\pgfmathparse{50/\egrbar}\pgfmathprintnumber\pgfmathresult};
			\node [ below  left = -21.5mm and -8mm of pic] {\pgfmathparse{40/\egrbar}\pgfmathprintnumber\pgfmathresult};
			\node [ below  left = -14mm and -8mm of pic] {\pgfmathparse{30/\egrbar}\pgfmathprintnumber\pgfmathresult};
			
			\node [ rotate = 90, below left = -39.5mm and 4.mm of pic] { Normalised EGR rate};
			
			
			\node (rect) [draw=none, fill = white, minimum width = 6.3cm, minimum height = 0.38cm, inner sep = 0pt, below  left = -49.5mm and -71mm of pic] {};
			
			\node (rect) [draw =none, fill = white, minimum width = 6.3cm, minimum height = 0.65cm, inner sep = 0pt, below  left = -7mm and -70mm of pic] {};
			
			\node [ below  left = -49.5mm and -11mm of pic] {0};
			\node [ below  left = -49.5mm and -23mm of pic] {1};
			\node [ below  left = -49.5mm and -34.5mm of pic] {2};
			\node [ below  left = -49.5mm and -47mm of pic] {3};
			\node [ below  left = -49.5mm and -58.5mm of pic] {4};
			\node [ below  left = -49.5mm and -71mm of pic] {5};
			
			\node [ below  left = -7mm and -11mm of pic] {0};
			\node [ below  left = -7mm and -23mm of pic] {1};
			\node [ below  left = -7mm and -34.5mm of pic] {2};
			\node [ below  left = -7mm and -47mm of pic] {3};
			\node [ below  left = -7mm and -58.5mm of pic] {4};
			\node [ below  left = -7mm and -71mm of pic] {5};
			
			\node [ below  = -2mm of pic] {Time (s)};

			\end{tikzpicture}
			\par\end{centering}
		\protect\caption{(a) Boost pressure and (b) EGR rate responses for the final parameter setting: $\tau_{\textrm{boost}}^{{\textrm{VI}}}=1,\,  \tau_{\textrm{EGR}}^{{\textrm{VI}}} = 1,\, w^{{\textrm{VI}}}=0.4,\, $ $\epsilon_{\textrm{boost}}^{{\textrm{VI}}} = 0 \textrm{ and } \epsilon_{\textrm{EGR}}^{{\textrm{VI}}} = 0$.}
		\label{fig__ss_tr1}
	\end{figure}

	\subsection{Calibration at a Steady State Condition}
	The controller calibration based on the response of the closed-loop system and the calibration table, for step changes in the fuelling rate about the steady state  linearisation/model grid point VI is presented. The fuelling steps considered in this study are $\pm 10,\, \pm 15$ and $\pm\SI{20}{mm^3/stroke}$. 
	
	The performance of the controller for step changes in the fuelling rate with baseline settings of the tuning parameters is shown in Fig.~\ref{fig__ss_base}.  The baseline tuning parameters are chosen as:  $\tau_{\textrm{boost}}^{{\textrm{VI}}} = 1,\, \tau_{\textrm{EGR}}^{{\textrm{VI}}} = 1,\, w^{{\textrm{VI}}}= 0.5,\, \epsilon_{\textrm{boost}}^{{\textrm{VI}}} = 0 \textrm{ and } \epsilon_{\textrm{EGR}}^{{\textrm{VI}}} = 0$. The fixed cost function parameters in \eqref{eq:mpccost} are chosen as $\gamma = \SI{2e-4}{}$, $R = I_3$, $w_{\boost} = \SI{40}{kPa}$ and $w_{\EGR} = 0.6$. The horizon length and the controller sampling time are identical to those used in the simulation study. From the Fig.~\ref{fig__ss_base} (a), a smooth response is noticed in the transients obtained for the intake manifold pressure indicating a Type 0 oscillation. However, an undershooting followed by overshooting behaviour is obtained for the EGR rate as seen in Fig.~\ref{fig__ss_base} (b). Therefore, the type of oscillation in EGR rate output channel is identified as Type 3. By using the reduced set of tuning parameters and the calibration table, the controller is calibrated to diminish the oscillatory behaviour in the EGR rate response whilst not adversely affecting the intake manifold pressure response.	
	
	Since there are three suggestions in Table~\ref{tab:caltab} for the identified types of oscillations - Type 0 and Type 3 oscillations in boost and EGR rate response, respectively, the first suggestion to decrease $w^{{\textrm{VI}}}$ is implemented. By setting $w^{{\textrm{VI}}}=0.4$, the closed-loop response of the controller is analysed again. The magnitude of overshoot in the EGR rate responses is reduced as seen in Fig.~\ref{fig__ss_tr1}. However, the type of oscillation in EGR rate channel remains as Type 3 with Type 0 in boost pressure channel. As the first rule has been tested, the next tuning rule from the Table~\ref{tab:caltab} - increase $\epsilon_{\textrm{EGR}}^{{\textrm{VI}}}$, is implemented and the closed-loop is checked for improvements in the transient response. A similar EGR rate response is observed with the new rule. Finally, as per the third rule from the calibration table, $\tau_{\textrm{EGR}}^{{\textrm{VI}}}$ is increased to $3$ and the closed-loop response is obtained. Once again, the output responses obtained with the third rule is similar to the response obtained by applying first rule. Implementing multiple tuning rules to reduce the overshoot in the EGR rate response adversely affected the performance by slowing down the boost response. Therefore, $\tau_{\textrm{boost}}^{{\textrm{VI}}}=1,\,  \tau_{\textrm{EGR}}^{{\textrm{VI}}} = 1,\,  w^{{\textrm{VI}}}=0.4,\, \epsilon_{\textrm{boost}}^{{\textrm{VI}}} = 0 \textrm{ and } \epsilon_{\textrm{EGR}}^{{\textrm{VI}}} = 0$ are chosen as the final values for the tuning parameters.

	\subsection {Calibration over Drive Cycles}
	
	The performance of the switched LTI-MPC architecture and the procedure followed for tuning the family of twelve calibration-friendly local controllers based on Algorithm \ref{alg:drivecyclecal}, over the extra-urban driving cycle (EUDC) and the medium phase of the worldwide harmonised light vehicle test procedure (WLTP) will be discussed.

	\subsubsection{Calibration over EUDC}
	The controllers with baseline parameter setting are tested over EUDC and the tracking performance is shown in Fig.~\ref{fig:eudc_base}. The highlighted regions along the drive cycle in Fig.~\ref{fig:eudc_base} have undesired transient responses in one or both output channels as listed in Table \ref{tab:eudc_base}. For example, in the region $B_e$ in Fig.~\ref{fig:eudc_base}, an oscillatory behaviour is observed in the EGR rate response whilst the boost pressure response is acceptable. Hence, the types of oscillations in the output channels are identified as $\textrm{Type}\, 0$ and $\textrm{Type}\,3$, respectively, will be represented as $\left(0,\,3\right)$.
	
	\begin{figure}
		\begin{centering}
			\begin{tikzpicture}
			\pgfkeys{/pgf/number format/.cd,fixed,precision=2}
			
			\node (pic) at (0,0)
			{\includegraphics[clip, trim = {0.7cm 1.4cm 1.3cm 1.2cm},width = 8.7cm]{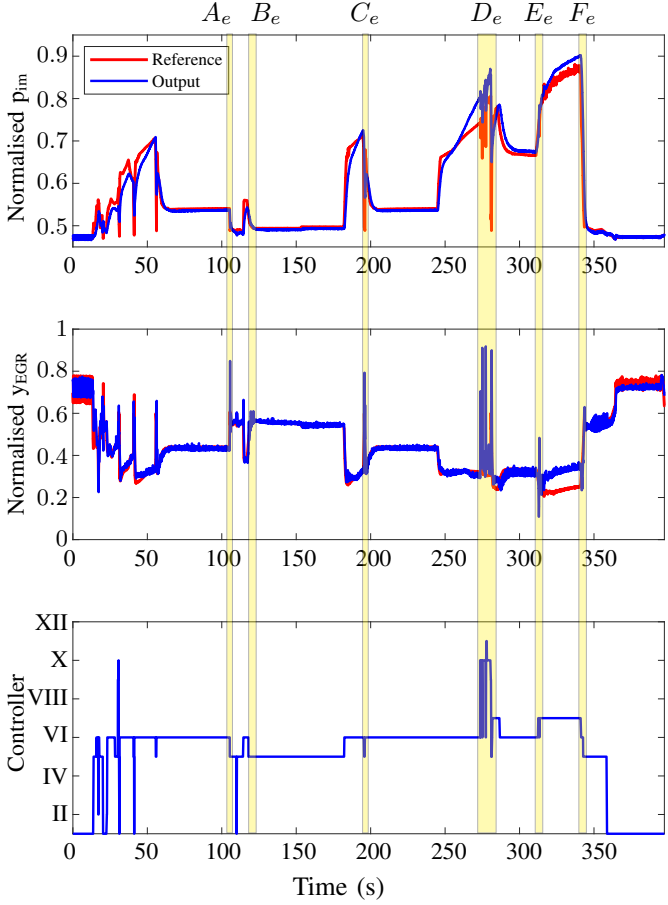}};
			
			
			\node (rect) [draw = none , fill = white, minimum width = 0.76cm, minimum height = 11cm, inner sep = 0pt, below  left = -113mm and -9mm of pic] {};
			
						\node (rect) [draw = none, fill = white, minimum width = 8.1cm, minimum height = 0.3cm, inner sep = 0pt, below  left = -83mm and -89mm of pic] {};
			
			\node (rect) [draw = none, fill = white, minimum width = 8.1cm, minimum height = 0.3cm, inner sep = 0pt, below  left = -44mm and -89mm of pic] {};
			
			\node (rect) [draw = none, fill = white, minimum width = 8.1cm, minimum height = 0.3cm, inner sep = 0pt, below  left = -5mm and -89mm of pic] {};
			
			\node [ below  left = -111.5mm and -10mm of pic] {\small{\pgfmathparse{180/\pimbar}\pgfmathprintnumber\pgfmathresult}};
			\node [ below  left = -105.5mm and -10mm of pic] {\small{\pgfmathparse{160/\pimbar}\pgfmathprintnumber\pgfmathresult}};
			\node [ below  left = -100mm and -10mm of pic] {\small{\pgfmathparse{140/\pimbar}\pgfmathprintnumber\pgfmathresult}};
			\node [ below  left = -94.5mm and -10mm of pic] {\small{\pgfmathparse{120/\pimbar}\pgfmathprintnumber\pgfmathresult}};
			\node [ below  left = -88.5mm and -10mm of pic] {\small{\pgfmathparse{100/\pimbar}\pgfmathprintnumber\pgfmathresult}};
			
			\node [ rotate = 90, below left = -108mm and 0.5mm of pic] {\small{Normalised p$_{\textrm{im}}$}};
			
			\node [ below  left = -75mm and -10mm of pic] {\small{\pgfmathparse{100/\egrbar}\pgfmathprintnumber\pgfmathresult}};
			\node [ below  left = -69.5mm and -10mm of pic] {\small{\pgfmathparse{80/\egrbar}\pgfmathprintnumber\pgfmathresult}};
			\node [ below  left = -63.5mm and -10mm of pic] {\small{\pgfmathparse{60/\egrbar}\pgfmathprintnumber\pgfmathresult}};
			\node [ below  left = -58mm and -10mm of pic] {\small{\pgfmathparse{40/\egrbar}\pgfmathprintnumber\pgfmathresult}};
			\node [ below  left = -52.5mm and -10mm of pic] {\small{\pgfmathparse{20/\egrbar}\pgfmathprintnumber\pgfmathresult}};
			\node [ below  left = -47mm and -10mm of pic] {\small{\pgfmathparse{0/\egrbar}\pgfmathprintnumber\pgfmathresult}};
			
			\node [ rotate = 90, below left = -71mm and 0.5mm of pic] { \small{Normalised {y$_{\textrm{EGR}}$}}};
			
			\node [ below  left = -36mm and -10mm of pic] {\small{XII}};
			\node [ below  left = -31mm and -10mm of pic] {\small{X}};
			\node [ below  left = -26mm and -10mm of pic] {\small{VIII}};
			\node [ below  left = -21mm and -10mm of pic] {\small{VI}};
			\node [ below  left = -15.5mm and -10mm of pic] {\small{IV}};
			\node [ below  left = -10.5mm and -10mm of pic] {\small{II}};
			
			\node [ rotate = 90, below left = -28mm and 0.5mm of pic] { \small{Controller}};


			\node [ below  left = -83.5mm and -11mm of pic] {\small{0}};
			\node [ below  left = -83.5mm and -22mm of pic] {\small{50}};
			\node [ below  left = -83.5mm and -33mm of pic] {\small{100}};
			\node [ below  left = -83.5mm and -43mm of pic] {\small{150}};
			\node [ below  left = -83.5mm and -53mm of pic] {\small{200}};
			\node [ below  left = -83.5mm and -63mm of pic] {\small{250}};
			\node [ below  left = -83.5mm and -73mm of pic] {\small{300}};
			\node [ below  left = -83.5mm and -83mm of pic] {\small{350}};

			\node [ below  left = -44.5mm and -11mm of pic] {\small{0}};
			\node [ below  left = -44.5mm and -22mm of pic] {\small{50}};
			\node [ below  left = -44.5mm and -33mm of pic] {\small{100}};
			\node [ below  left = -44.5mm and -43mm of pic] {\small{150}};
			\node [ below  left = -44.5mm and -53mm of pic] {\small{200}};
			\node [ below  left = -44.5mm and -63mm of pic] {\small{250}};
			\node [ below  left = -44.5mm and -73mm of pic] {\small{300}};
			\node [ below  left = -44.5mm and -83mm of pic] {\small{350}};
			
			\node [ below  left = -5.5mm and -11mm of pic] {\small{0}};
			\node [ below  left = -5.5mm and -22mm of pic] {\small{50}};
			\node [ below  left = -5.5mm and -33mm of pic] {\small{100}};
			\node [ below  left = -5.5mm and -43mm of pic] {\small{150}};
			\node [ below  left = -5.5mm and -53mm of pic] {\small{200}};
			\node [ below  left = -5.5mm and -63mm of pic] {\small{250}};
			\node [ below  left = -5.5mm and -73mm of pic] {\small{300}};
			\node [ below  left = -5.5mm and -83mm of pic] {\small{350}};

			\node (fig) [draw=none,  minimum width = 6.27cm, minimum height = 10.63cm, opacity = 0, above left = -108.5mm and -91.2mm of pic] {};
			
			\node (r6) [draw, ultra thin, minimum width = 1mm, minimum height = 10.63cm, opacity = 0.3, fill=yellow, inner sep = 0pt, right  = -14.55mm of fig] {};
			\node (r5) [draw, ultra thin, minimum width = 1mm, minimum height = 10.63cm, opacity = 0.3, fill=yellow, inner sep = 0pt, right  = -20.35mm of fig] {};
			\node (r4) [draw, ultra thin, minimum width = 2.45mm, minimum height = 10.63cm, opacity = 0.3, fill=yellow, inner sep = 0pt, right  = -28mm of fig] {};
			\node (r3) [draw, ultra thin, minimum width = 0.75mm, minimum height = 10.63cm, opacity = 0.3, fill=yellow, inner sep = 0pt, right  = -43.3mm of fig] {};
			\node (r2) [draw, ultra thin, minimum width = 1mm, minimum height = 10.63cm, opacity = 0.3, fill=yellow, inner sep = 0pt, right  = -58.45mm of fig] {};
			\node (r1) [draw, ultra thin, minimum width = 0.75mm, minimum height = 10.63cm, opacity = 0.3, fill=yellow, inner sep = 0pt, right  = -61.35mm of fig] {};
			
			\node (A) [above left = 0mm and -2mm of r1] {$A_e$};
			\node (B) [above right = 0mm and -2mm of r2] {$B_e$};
			\node (C) [above = 0mm of r3] {$C_e$};
			\node (D) [above = 0mm of r4] {$D_e$};
			\node (E) [above = 0mm of r5] {$E_e$};
			\node (F) [above = 0mm of r6] {$F_e$};
			
			\node [below = -1mm of pic] {Time (s)};
			
			\end{tikzpicture}
			\par\end{centering}
		\protect\caption{Boost pressure and EGR rate trajectories over EUDC for baseline calibration (top and middle) and active controller (bottom). The controllers in the highlighted sections $A_e - F_e$ require further calibration.}
		\label{fig:eudc_base}
	\end{figure}

	The active controllers corresponding to the highlighted regions are identified from the bottom plot of Fig.~\ref{fig:eudc_base}. For instance, in region $B_e$, the controllers V and VI are active. Now, these controllers are calibrated according to the tuning rules from Table~\ref{tab:caltab} based on the identified types of oscillations.

	\begin{figure}
		\begin{centering}
			\begin{tikzpicture}
			\pgfkeys{/pgf/number format/.cd,fixed,precision=2}
			
			\node (pic) at (0,0)
			{\includegraphics[clip, trim = {0.7cm 1.4cm 1.3cm 1.2cm},width = 8.7cm]{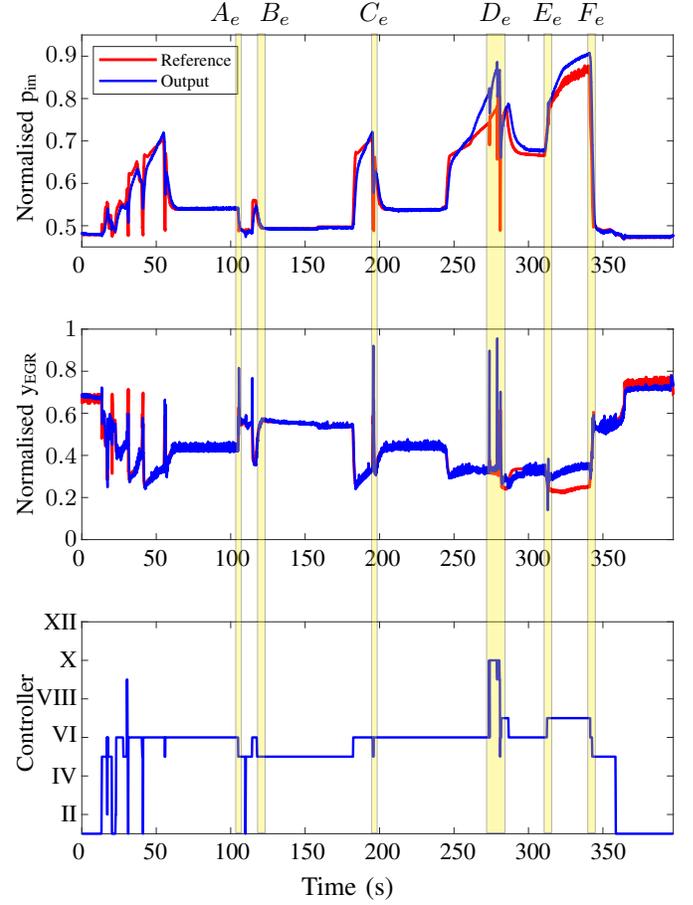}};
			
			
			\node (rect) [draw = none , fill = white, minimum width = 0.76cm, minimum height = 11cm, inner sep = 0pt, below  left = -113mm and -9mm of pic] {};
			
						\node (rect) [draw = none, fill = white, minimum width = 8.1cm, minimum height = 0.3cm, inner sep = 0pt, below  left = -83mm and -89mm of pic] {};
			
			\node (rect) [draw = none, fill = white, minimum width = 8.1cm, minimum height = 0.3cm, inner sep = 0pt, below  left = -44mm and -89mm of pic] {};
			
			\node (rect) [draw = none, fill = white, minimum width = 8.1cm, minimum height = 0.3cm, inner sep = 0pt, below  left = -5mm and -89mm of pic] {};
			
			\node [ below  left = -111.5mm and -10mm of pic] {\small{\pgfmathparse{180/\pimbar}\pgfmathprintnumber\pgfmathresult}};
			\node [ below  left = -105.5mm and -10mm of pic] {\small{\pgfmathparse{160/\pimbar}\pgfmathprintnumber\pgfmathresult}};
			\node [ below  left = -100mm and -10mm of pic] {\small{\pgfmathparse{140/\pimbar}\pgfmathprintnumber\pgfmathresult}};
			\node [ below  left = -94.5mm and -10mm of pic] {\small{\pgfmathparse{120/\pimbar}\pgfmathprintnumber\pgfmathresult}};
			\node [ below  left = -88.5mm and -10mm of pic] {\small{\pgfmathparse{100/\pimbar}\pgfmathprintnumber\pgfmathresult}};
			
			\node [ rotate = 90, below left = -108mm and 0.5mm of pic] {\small{Normalised p$_{\textrm{im}}$}};
			
			\node [ below  left = -75mm and -10mm of pic] {\small{\pgfmathparse{100/\egrbar}\pgfmathprintnumber\pgfmathresult}};
			\node [ below  left = -69.5mm and -10mm of pic] {\small{\pgfmathparse{80/\egrbar}\pgfmathprintnumber\pgfmathresult}};
			\node [ below  left = -63.5mm and -10mm of pic] {\small{\pgfmathparse{60/\egrbar}\pgfmathprintnumber\pgfmathresult}};
			\node [ below  left = -58mm and -10mm of pic] {\small{\pgfmathparse{40/\egrbar}\pgfmathprintnumber\pgfmathresult}};
			\node [ below  left = -52.5mm and -10mm of pic] {\small{\pgfmathparse{20/\egrbar}\pgfmathprintnumber\pgfmathresult}};
			\node [ below  left = -47mm and -10mm of pic] {\small{\pgfmathparse{0/\egrbar}\pgfmathprintnumber\pgfmathresult}};
			
			\node [ rotate = 90, below left = -71mm and 0.5mm of pic] { \small{Normalised {y$_{\textrm{EGR}}$}}};
			
			\node [ below  left = -36mm and -10mm of pic] {\small{XII}};
			\node [ below  left = -31mm and -10mm of pic] {\small{X}};
			\node [ below  left = -26mm and -10mm of pic] {\small{VIII}};
			\node [ below  left = -21mm and -10mm of pic] {\small{VI}};
			\node [ below  left = -15.5mm and -10mm of pic] {\small{IV}};
			\node [ below  left = -10.5mm and -10mm of pic] {\small{II}};
			
			\node [ rotate = 90, below left = -28mm and 0.5mm of pic] { \small{Controller}};


			\node [ below  left = -83.5mm and -11mm of pic] {\small{0}};
			\node [ below  left = -83.5mm and -22mm of pic] {\small{50}};
			\node [ below  left = -83.5mm and -33mm of pic] {\small{100}};
			\node [ below  left = -83.5mm and -43mm of pic] {\small{150}};
			\node [ below  left = -83.5mm and -53mm of pic] {\small{200}};
			\node [ below  left = -83.5mm and -63mm of pic] {\small{250}};
			\node [ below  left = -83.5mm and -73mm of pic] {\small{300}};
			\node [ below  left = -83.5mm and -83mm of pic] {\small{350}};

			\node [ below  left = -44.5mm and -11mm of pic] {\small{0}};
			\node [ below  left = -44.5mm and -22mm of pic] {\small{50}};
			\node [ below  left = -44.5mm and -33mm of pic] {\small{100}};
			\node [ below  left = -44.5mm and -43mm of pic] {\small{150}};
			\node [ below  left = -44.5mm and -53mm of pic] {\small{200}};
			\node [ below  left = -44.5mm and -63mm of pic] {\small{250}};
			\node [ below  left = -44.5mm and -73mm of pic] {\small{300}};
			\node [ below  left = -44.5mm and -83mm of pic] {\small{350}};
			
			\node [ below  left = -5.5mm and -11mm of pic] {\small{0}};
			\node [ below  left = -5.5mm and -22mm of pic] {\small{50}};
			\node [ below  left = -5.5mm and -33mm of pic] {\small{100}};
			\node [ below  left = -5.5mm and -43mm of pic] {\small{150}};
			\node [ below  left = -5.5mm and -53mm of pic] {\small{200}};
			\node [ below  left = -5.5mm and -63mm of pic] {\small{250}};
			\node [ below  left = -5.5mm and -73mm of pic] {\small{300}};
			\node [ below  left = -5.5mm and -83mm of pic] {\small{350}};

			\node (fig) [draw=none,  minimum width = 6.27cm, minimum height = 10.63cm, opacity = 0, above left = -108.5mm and -91.2mm of pic] {};
			
			\node (r6) [draw, ultra thin, minimum width = 1mm, minimum height = 10.63cm, opacity = 0.3, fill=yellow, inner sep = 0pt, right  = -14.55mm of fig] {};
			\node (r5) [draw, ultra thin, minimum width = 1mm, minimum height = 10.63cm, opacity = 0.3, fill=yellow, inner sep = 0pt, right  = -20.35mm of fig] {};
			\node (r4) [draw, ultra thin, minimum width = 2.45mm, minimum height = 10.63cm, opacity = 0.3, fill=yellow, inner sep = 0pt, right  = -28mm of fig] {};
			\node (r3) [draw, ultra thin, minimum width = 0.75mm, minimum height = 10.63cm, opacity = 0.3, fill=yellow, inner sep = 0pt, right  = -43.3mm of fig] {};
			\node (r2) [draw, ultra thin, minimum width = 1mm, minimum height = 10.63cm, opacity = 0.3, fill=yellow, inner sep = 0pt, right  = -58.45mm of fig] {};
			\node (r1) [draw, ultra thin, minimum width = 0.75mm, minimum height = 10.63cm, opacity = 0.3, fill=yellow, inner sep = 0pt, right  = -61.35mm of fig] {};
			
			\node (A) [above left = 0mm and -2mm of r1] {$A_e$};
			\node (B) [above right = 0mm and -2mm of r2] {$B_e$};
			\node (C) [above = 0mm of r3] {$C_e$};
			\node (D) [above = 0mm of r4] {$D_e$};
			\node (E) [above = 0mm of r5] {$E_e$};
			\node (F) [above = 0mm of r6] {$F_e$};
			
			\node [below = -1mm of pic] {Time (s)};
			
			\end{tikzpicture}
			\par\end{centering}
		\protect\caption{Boost pressure and EGR rate trajectories over EUDC for final tuning parameters (top and middle) and active controller (bottom).}
		\label{fig:eudc_final}
	\end{figure}

	The output trajectories for the final tuning parameters and the corresponding sections are shown in Fig.~\ref{fig:eudc_final}. It can be seen that the oscillations in the EGR rate response in region $B_e$ in Fig.~\ref{fig:eudc_final}, are smoothed and hence by following Algorithm~\ref{alg:drivecyclecal}, a desired transient response was achieved by appropriate calibration of the controllers V and VI using the reduced set of tuning parameters. 
	
	\begin{table}
		\begin{centering}
			\protect\caption{Problems identified in the highlighted regions of the output trajectories over EUDC and the type of oscillation.}
			\par
			\centering{}
			\begin{tabular}{>{\centering\arraybackslash} m{0.85cm} >{\centering\arraybackslash} m{3.5cm} >{\centering\arraybackslash} m{1.3cm} >{\centering\arraybackslash} m{1.3cm} }
				\hline
				Region & Identified problem in the output trajectories & Controllers & Type of oscillation \tabularnewline
				\hline
				$A_e$ & Overshoot in EGR rate & VI & $\left(0,\, 2\right)$  \tabularnewline
				$B_e$ & Oscillations in EGR rate & V, VI & $\left(0,\, 3\right)$  \tabularnewline
				$C_e$ & Overshoot in EGR rate & V & $\left(0,\, 2\right)$  \tabularnewline
				$D_e$ & Oscillatory behaviour in boost pressure and EGR rate & V, VI, X, XI & $\left(3,\, 3\right)$  \tabularnewline
				$E_e$ & Undershoot in EGR rate & VI, VII & $\left(0,\, 1\right)$  \tabularnewline
				$F_e$ & Overshoot in EGR rate & VII & $\left(0,\, 2\right)$  \tabularnewline
				\hline
			\end{tabular}\label{tab:eudc_base}
		\end{centering}
	\end{table}
	
	The improvements in the output response achieved through calibration of the selected controllers are reported in Table~\ref{tab:eudc_final}. In regions such as $D_e$ in Fig.~\ref{fig:eudc_base}, an oscillatory behaviour is observed in both output channels. These oscillations are caused due to large model mismatch while operating near a boundary, resulting in repeated switching of the controllers. By introducing more linearisation points, the model mismatch can be limited, however, additional linearisation points can increase the calibration effort. In this work, the use of 12 controllers provided a sufficient balance between enough model accuracy to provide good transient response without introducing too much switching between controllers. The resulting output responses following calibration are shown in Fig.~\ref{fig:eudc_final}.
	
	\begin{table}
		\begin{centering}
			\protect\caption{Improvements in the trajectories after calibration.}
			\par
			\centering{}
			\begin{tabular}{>{\centering\arraybackslash} m{0.85cm} >{\centering\arraybackslash} m{3.3cm} >{\centering\arraybackslash} m{3.3cm} }
				\hline
				Region & Before calibration & After calibration \tabularnewline
				\hline
				$A_e$ & Overshoot in EGR rate & Overshoot magnitude is reduced \tabularnewline
				$B_e$ & Oscillation in EGR rate & Oscillations removed \tabularnewline
				$C_e$ & Overshoot in EGR rate & No improvement \tabularnewline
				$D_e$ & Oscillatory behaviour in boost and EGR rate & Reduced oscillations in both outputs  \tabularnewline
				$E_e$ & Undershoot in EGR rate & Undershoot magnitude is reduced \tabularnewline
				$F_e$ & Overshoot in EGR rate & Overshoot is removed   \tabularnewline
				\hline
			\end{tabular}\label{tab:eudc_final}
		\end{centering}
	\end{table}

	\begin{figure}
		\begin{centering}
			\begin{tikzpicture}
			\pgfkeys{/pgf/number format/.cd,fixed,precision=2}
			\node (pic) at (0,0)
			{\includegraphics[clip, trim = {0.7cm 6.75cm 1.3cm 1.2cm},width = 8.7cm]{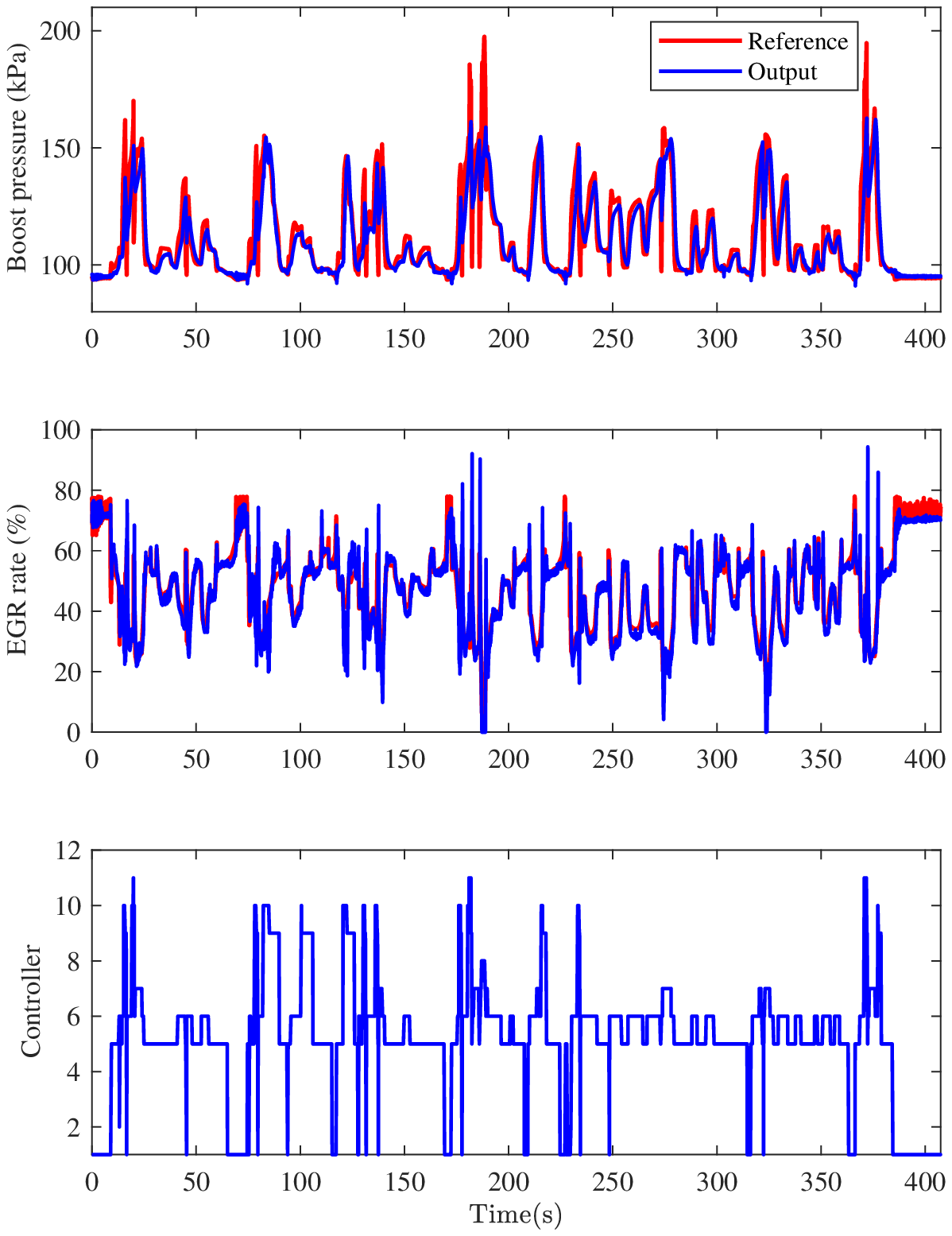}};
			
			
			\node (rect) [draw =none , fill = white, minimum width = 0.76cm, minimum height = 6.8cm, inner sep = 0pt, below  left = -73mm and -8.7mm of pic] {};
			
						\node (rect) [draw = none, fill = white, minimum width = 8.1cm, minimum height = 0.3cm, inner sep = 0pt, below  left = -44mm and -89mm of pic] {};
			
			\node (rect) [draw = none, fill = white, minimum width = 8.1cm, minimum height = 0.3cm, inner sep = 0pt, below  left = -5mm and -89mm of pic] {};
			
			\node [ below  left = -73mm and -10mm of pic] {\small{\pgfmathparse{200/\pimbar}\pgfmathprintnumber\pgfmathresult}};
			\node [ below  left = -62mm and -10mm of pic] {\footnotesize{\pgfmathparse{150/\pimbar}\pgfmathprintnumber\pgfmathresult}};
			\node [ below  left = -51mm and -10mm of pic] {\small{\pgfmathparse{100/\pimbar}\pgfmathprintnumber\pgfmathresult}};
			
			\node [ rotate = 90, below left = -71mm and 0.5mm of pic] {\small{Normalised p$_{\textrm{im}}$}};
			
			\node [ below  left = -36mm and -10mm of pic] {\small{\pgfmathparse{100/\egrbar}\pgfmathprintnumber\pgfmathresult}};
			\node [ below  left = -30.5mm and -10mm of pic] {\small{\pgfmathparse{80/\egrbar}\pgfmathprintnumber\pgfmathresult}};
			\node [ below  left = -25mm and -10mm of pic] {\small{\pgfmathparse{60/\egrbar}\pgfmathprintnumber\pgfmathresult}};
			\node [ below  left = -19mm and -10mm of pic] {\small{\pgfmathparse{40/\egrbar}\pgfmathprintnumber\pgfmathresult}};
			\node [ below  left = -14mm and -10mm of pic] {\small{\pgfmathparse{20/\egrbar}\pgfmathprintnumber\pgfmathresult}};
			\node [ below  left = -8mm and -10mm of pic] {\small{\pgfmathparse{0/\egrbar}\pgfmathprintnumber\pgfmathresult}};
			\node [ rotate = 90, below left = -32mm and 0.5mm of pic] { \small{Normalised {y$_{\textrm{EGR}}$}}};


			\node [ below  left = -44.5mm and -11mm of pic] {\small{0}};
			\node [ below  left = -44.5mm and -22mm of pic] {\small{50}};
			\node [ below  left = -44.5mm and -33mm of pic] {\small{100}};
			\node [ below  left = -44.5mm and -42mm of pic] {\small{150}};
			\node [ below  left = -44.5mm and -52mm of pic] {\small{200}};
			\node [ below  left = -44.5mm and -61.5mm of pic] {\small{250}};
			\node [ below  left = -44.5mm and -71mm of pic] {\small{300}};
			\node [ below  left = -44.5mm and -80.5mm of pic] {\small{350}};
			\node [ below  left = -44.5mm and -89.5mm of pic] {\small{400}};
			
			\node [ below  left = -5.5mm and -11mm of pic] {\small{0}};
			\node [ below  left = -5.5mm and -22mm of pic] {\small{50}};
			\node [ below  left = -5.5mm and -33mm of pic] {\small{100}};
			\node [ below  left = -5.5mm and -42mm of pic] {\small{150}};
			\node [ below  left = -5.5mm and -52mm of pic] {\small{200}};
			\node [ below  left = -5.5mm and -61.5mm of pic] {\small{250}};
			\node [ below  left = -5.5mm and -71mm of pic] {\small{300}};
			\node [ below  left = -5.5mm and -80.5mm of pic] {\small{350}};
			\node [ below  left = -5.5mm and -89.5mm of pic] {\small{400}};
			
			\node (fig)  [draw= none, minimum width = 6.27cm, minimum height = 6.73cm, opacity = 0, above left = -69.4mm and -91.2mm of pic] {};
			
			\node (r5) [draw, ultra thin, minimum width = 1.mm, minimum height = 6.71cm, opacity = 0.3, fill=yellow, inner sep = 0pt, right  = -10.6mm of fig] {};
			
			\node (r4) [draw, ultra thin, minimum width = 1.mm, minimum height = 6.71cm, opacity = 0.3, fill=yellow, inner sep = 0pt, right  = -37.2mm of fig] {};
			
			\node (r3) [draw, ultra thin, minimum width = 1.mm, minimum height = 6.71cm, opacity = 0.3, fill=yellow, inner sep = 0pt, right  = -46.0mm of fig] {};
			
			\node (r2) [draw, ultra thin, minimum width = 1.mm, minimum height = 6.71cm, opacity = 0.3, fill=yellow, inner sep = 0pt, right  = -48.0mm of fig] {};
			
			\node (r1) [draw, ultra thin, minimum width = 1.5mm, minimum height = 6.71cm, opacity = 0.3, fill=yellow, inner sep = 0pt, right  = -66.5mm of fig] {};
			
			\node (A) [above  = 0mm of r1] {$A_w$};
			\node (B) [above left = 0mm and -2mm of r2] {$B_w$};
			\node (C) [above right = 0mm and -2mm of r3] {$C_w$};
			\node (D) [above = 0mm of r4] {$D_w$};
			\node (E) [above = 0mm of r5] {$E_w$};
			
			\node [below = -1mm of pic] {Time (s)};

			\end{tikzpicture}
			\par\end{centering}
		\protect\caption{Boost pressure and EGR rate trajectories over WLTP-medium cycle for baseline calibration parameters.}
		\label{fig:wltp_base}
	\end{figure}
	
	\begin{figure}
		\begin{centering}
			\begin{tikzpicture}
			\node (pic) at (0,0)
			{\includegraphics[clip, trim = {0.7cm 6.75cm 1.3cm 1.2cm},width = 8.7cm]{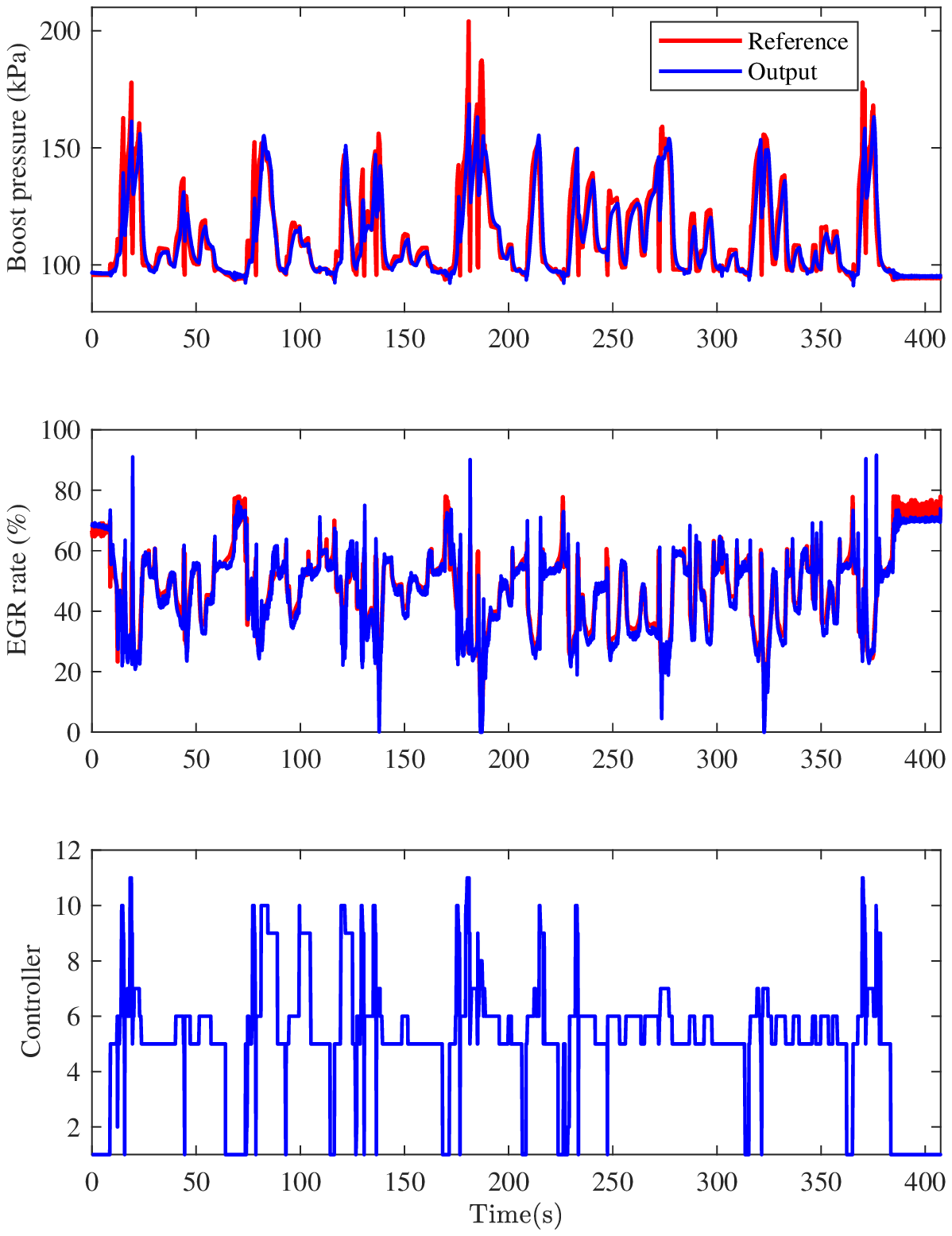}};
			
			
			\node (rect) [draw =none , fill = white, minimum width = 0.76cm, minimum height = 6.8cm, inner sep = 0pt, below  left = -73mm and -8.7mm of pic] {};
			
						\node (rect) [draw = none, fill = white, minimum width = 8.1cm, minimum height = 0.3cm, inner sep = 0pt, below  left = -44mm and -89mm of pic] {};
			
			\node (rect) [draw = none, fill = white, minimum width = 8.1cm, minimum height = 0.3cm, inner sep = 0pt, below  left = -5mm and -89mm of pic] {};
			
			\node [ below  left = -73mm and -10mm of pic] {\small{\pgfmathparse{200/\pimbar}\pgfmathprintnumber\pgfmathresult}};
			\node [ below  left = -62mm and -10mm of pic] {\footnotesize{\pgfmathparse{150/\pimbar}\pgfmathprintnumber\pgfmathresult}};
			\node [ below  left = -51mm and -10mm of pic] {\small{\pgfmathparse{100/\pimbar}\pgfmathprintnumber\pgfmathresult}};
			
			\node [ rotate = 90, below left = -71mm and 0.5mm of pic] {\small{Normalised p$_{\textrm{im}}$}};
			
			\node [ below  left = -36mm and -10mm of pic] {\small{\pgfmathparse{100/\egrbar}\pgfmathprintnumber\pgfmathresult}};
			\node [ below  left = -30.5mm and -10mm of pic] {\small{\pgfmathparse{80/\egrbar}\pgfmathprintnumber\pgfmathresult}};
			\node [ below  left = -25mm and -10mm of pic] {\small{\pgfmathparse{60/\egrbar}\pgfmathprintnumber\pgfmathresult}};
			\node [ below  left = -19mm and -10mm of pic] {\small{\pgfmathparse{40/\egrbar}\pgfmathprintnumber\pgfmathresult}};
			\node [ below  left = -14mm and -10mm of pic] {\small{\pgfmathparse{20/\egrbar}\pgfmathprintnumber\pgfmathresult}};
			\node [ below  left = -8mm and -10mm of pic] {\small{\pgfmathparse{0/\egrbar}\pgfmathprintnumber\pgfmathresult}};
			\node [ rotate = 90, below left = -32mm and 0.5mm of pic] { \small{Normalised {y$_{\textrm{EGR}}$}}};


			\node [ below  left = -44.5mm and -11mm of pic] {\small{0}};
			\node [ below  left = -44.5mm and -22mm of pic] {\small{50}};
			\node [ below  left = -44.5mm and -33mm of pic] {\small{100}};
			\node [ below  left = -44.5mm and -42mm of pic] {\small{150}};
			\node [ below  left = -44.5mm and -52mm of pic] {\small{200}};
			\node [ below  left = -44.5mm and -61.5mm of pic] {\small{250}};
			\node [ below  left = -44.5mm and -71mm of pic] {\small{300}};
			\node [ below  left = -44.5mm and -80.5mm of pic] {\small{350}};
			\node [ below  left = -44.5mm and -89.5mm of pic] {\small{400}};
			
			\node [ below  left = -5.5mm and -11mm of pic] {\small{0}};
			\node [ below  left = -5.5mm and -22mm of pic] {\small{50}};
			\node [ below  left = -5.5mm and -33mm of pic] {\small{100}};
			\node [ below  left = -5.5mm and -42mm of pic] {\small{150}};
			\node [ below  left = -5.5mm and -52mm of pic] {\small{200}};
			\node [ below  left = -5.5mm and -61.5mm of pic] {\small{250}};
			\node [ below  left = -5.5mm and -71mm of pic] {\small{300}};
			\node [ below  left = -5.5mm and -80.5mm of pic] {\small{350}};
			\node [ below  left = -5.5mm and -89.5mm of pic] {\small{400}};
			
			\node (fig)  [draw= none, minimum width = 6.27cm, minimum height = 6.73cm, opacity = 0, above left = -69.4mm and -91.2mm of pic] {};
			
			\node (r5) [draw, ultra thin, minimum width = 1.mm, minimum height = 6.71cm, opacity = 0.3, fill=yellow, inner sep = 0pt, right  = -10.6mm of fig] {};
			
			\node (r4) [draw, ultra thin, minimum width = 1.mm, minimum height = 6.71cm, opacity = 0.3, fill=yellow, inner sep = 0pt, right  = -37.2mm of fig] {};
			
			\node (r3) [draw, ultra thin, minimum width = 1.mm, minimum height = 6.71cm, opacity = 0.3, fill=yellow, inner sep = 0pt, right  = -46.0mm of fig] {};
			
			\node (r2) [draw, ultra thin, minimum width = 1.mm, minimum height = 6.71cm, opacity = 0.3, fill=yellow, inner sep = 0pt, right  = -48.0mm of fig] {};
			
			\node (r1) [draw, ultra thin, minimum width = 1.5mm, minimum height = 6.71cm, opacity = 0.3, fill=yellow, inner sep = 0pt, right  = -66.5mm of fig] {};
			
			\node (A) [above  = 0mm of r1] {$A_w$};
			\node (B) [above left = 0mm and -2mm of r2] {$B_w$};
			\node (C) [above right = 0mm and -2mm of r3] {$C_w$};
			\node (D) [above = 0mm of r4] {$D_w$};
			\node (E) [above = 0mm of r5] {$E_w$};
			
			\node [below = -1mm of pic] {Time (s)};
			
			\end{tikzpicture}
			\par\end{centering}
		\protect\caption{Boost pressure and EGR rate trajectories over WLTP-medium cycle for final calibration parameters.}
		\label{fig:wltp_final}
	\end{figure}
	
	\subsubsection{Calibration over WLTP-Medium Phase}
	The proposed diesel airpath control framework is tested when the engine is running over the medium phase of WLTP. The closed-loop response obtained with the baseline parameters $\left(\tau_{\textrm{boost}}^{g} = 0.5,\, \tau_{\textrm{EGR}}^{g} = 0.5,\, w^{g}= 0.5,\, \epsilon_{\textrm{boost}}^{g} = 0 \textrm{ and } \epsilon_{\textrm{EGR}}^{g} = 0\right)$ is shown in Fig.~\ref{fig:wltp_base}. Five regions are identified to improve the transient response in one or both output channels. Oscillations are observed in both output channels in region $A_w$ while overshoots and an undershooting behaviour in the EGR rate channel are observed in regions $B_w$, $C_w$, $E_w$ and $D_w$, respectively. 
	
	The output trajectories for the final tuning parameters over WLTP-medium phase are shown in Fig.~\ref{fig:wltp_final}. The transient responses achieved through calibration of selected local controllers in the two regions $D_w$ and $E_w$, have a reduction in the magnitudes of the undershoot and overshoot, respectively. On the other hand, significant improvements in the output responses are observed in region $A_w$, where the oscillations are smoothed in boost pressure and reduced in the EGR rate response, and in regions $B_w$ and $C_w$, where the overshoots are removed completely. These improvements in the output responses were achieved with a low degree of calibration effort utilising the proposed controller structure and calibration rules.

	\section{Conclusions}
	\label{sec:con}
	
	In this paper, a calibration-friendly model predictive controller has been proposed for diesel engines and its calibration efficacy has been experimentally demonstrated. The proposed MPC cost function parameterisation and the controller structure have facilitated a significant reduction in the number of effective tuning parameters compared to the conventional MPC. In addition, the new set of tuning parameters has an intuitive correlation with the output transient response which helps with efficient engine calibration.
	
	A switched LTI-MPC architecture with multiple linear controllers has been proposed to handle the transient operation of the engine. The maximal disturbance set that can be handled by each local controller and the corresponding constraint tightening margins have been obtained as a solution to an offline sequential convex program. The tightening margins are then used in the online MPC optimisation to guarantee constraint adherence in the presence of disturbances originating from the maximal disturbance set. 
	
	The calibration efficacy of the controller architecture, based on the developed calibration rules,  has been experimentally demonstrated at a steady state condition for fuelling step and over EUDC and the medium phase of WLTP cycles. The proposed calibration methodology provides a systematic approach in which the under-performing controllers are identified and calibrated using the reduced set of tuning parameters, resulting in a good tracking performance in both output channels over both drive cycles. The improvements achieved in the output transient responses by tuning a small number of parameters demonstrate the effectiveness of the proposed MPC formulation in reducing the calibration effort and aiding in rapid calibration.
	
	Future research can test the performance of the proposed approach with severely ill-conditioned families of linear models, where adjacent grid points have zero-gain in some input-output channels or a gain sign change. This may provide insight for developing extensions to the proposed tuning rules to reduce or remove oscillatory responses as well as minimise the time actuators spend in saturation.
	
	\section*{Acknowledgements}
	The authors would like to thank the engineering staff at Toyota's Higashi-Fuji Technical Center in Susono, Japan, for assisting with the experiments.
	

	\bibliographystyle{IEEEtran}
	\bibliography{ref}

\begin{thebibliography}{10}
\providecommand{\url}[1]{#1}
\csname url@samestyle\endcsname
\providecommand{\newblock}{\relax}
\providecommand{\bibinfo}[2]{#2}
\providecommand{\BIBentrySTDinterwordspacing}{\spaceskip=0pt\relax}
\providecommand{\BIBentryALTinterwordstretchfactor}{4}
\providecommand{\BIBentryALTinterwordspacing}{\spaceskip=\fontdimen2\font plus
\BIBentryALTinterwordstretchfactor\fontdimen3\font minus
  \fontdimen4\font\relax}
\providecommand{\BIBforeignlanguage}[2]{{%
\expandafter\ifx\csname l@#1\endcsname\relax
\typeout{** WARNING: IEEEtran.bst: No hyphenation pattern has been}%
\typeout{** loaded for the language `#1'. Using the pattern for}%
\typeout{** the default language instead.}%
\else
\language=\csname l@#1\endcsname
\fi
#2}}
\providecommand{\BIBdecl}{\relax}
\BIBdecl

\bibitem{Kolmanovsky1999}
I.~Kolmanovsky, P.~Moraal, M.~Van~Nieuwstadt, and A.~Stefanopoulou, ``Issues in
  modelling and control of intake flow in variable geometry turbocharged
  engines,'' \emph{Chapman and Hall CRC research notes in mathematics}, pp.
  436--445, 1999.

\bibitem{Wahlstrom2011}
J.~Wahlstr{\"o}m and L.~Eriksson, ``Modelling diesel engines with a
  variable-geometry turbocharger and exhaust gas recirculation by optimization
  of model parameters for capturing non-linear system dynamics,''
  \emph{Proceedings of the Institution of Mechanical Engineers, Part D: Journal
  of Automobile Engineering}, vol. 225, no.~7, pp. 960--986, 2011.

\bibitem{VanNieuwstadt1998}
M.~Van~Nieuwstadt, P.~Moraal, I.~Kolmanovsky, A.~Stefanopoulou, P.~Wood, and
  M.~Criddle, ``Decentralized and multivariable designs for {EGR}-{VGT} control
  of a diesel engine,'' \emph{IFAC Proceedings Volumes}, vol.~31, no.~1, pp.
  189--194, 1998.

\bibitem{Herceg2006}
M.~Herceg, T.~Raff, R.~Findeisen, and F.~Allgowe, ``Nonlinear model predictive
  control of a turbocharged diesel engine,'' in \emph{2006 {IEEE} International
  Conference on Control Applications}.\hskip 1em plus 0.5em minus 0.4em\relax
  {IEEE}, 2006, pp. 2766--2771.

\bibitem{Rueckert2004}
J.~R{\"u}ckert, F.~Richert, A.~Schlo$\beta$er, D.~Abel, O.~Herrmann,
  S.~Pischinger, and A.~Pfeifer, ``A model based predictive attempt to control
  boost pressure and {EGR}-rate in a heavy duty diesel engine,'' \emph{IFAC
  Proceedings Volumes}, vol.~37, no.~22, pp. 111--117, 2004.

\bibitem{Ferreau2007}
H.~J. Ferreau, P.~Ortner, P.~Langthaler, L.~del Re, and M.~Diehl, ``Predictive
  control of a real-world diesel engine using an extended online active set
  strategy,'' \emph{Annual Reviews in Control}, vol.~31, no.~2, pp. 293 -- 301,
  2007.

\bibitem{Stewart2008}
G.~Stewart and F.~Borrelli, ``A model predictive control framework for
  industrial turbodiesel engine control,'' in \emph{2008 47th IEEE Conference
  on Decision and Control}, Dec 2008, pp. 5704--5711.

\bibitem{Ortner2006}
P.~Ortner, P.~Langthaler, J.~V. Garcia~Ortiz, and L.~del Re, ``{MPC} for a
  diesel engine air path using an explicit approach for constraint systems,''
  in \emph{IEEE International Conference on Control Applications}.\hskip 1em
  plus 0.5em minus 0.4em\relax IEEE, 2006, pp. 2760--2765.

\bibitem{Karlsson2010}
M.~Karlsson, K.~Ekholm, P.~Strandh, R.~Johansson, and P.~Tunestal,
  ``Multiple-input multiple-output model predictive control of a diesel
  engine,'' \emph{Advances in Automotive Control}, pp. 131--136, 2010.

\bibitem{Huang2016}
M.~Huang, K.~Zaseck, K.~Butts, and I.~Kolmanovsky, ``Rate-based model
  predictive controller for diesel engine air path: Design and experimental
  evaluation,'' \emph{IEEE Transactions on Control Systems Technology},
  vol.~24, no.~6, pp. 1922--1935, 2016.

\bibitem{Huang2013}
M.~Huang, H.~Nakada, S.~Polavarapu, K.~R. Butts, and I.~Kolmanovsky,
  ``Rate-based model predictive control of diesel engines,'' in \emph{Advances
  in Automotive Control}, vol.~7, no.~1, 2013, pp. 177--182.

\bibitem{Wahlstroem2013}
J.~Wahlstr{\"o}m and L.~Eriksson, ``Output selection and its implications for
  {MPC} of {EGR} and {VGT} in diesel engines,'' \emph{{IEEE} Transactions on
  Control Systems Technology}, vol.~21, no.~3, pp. 932--940, 2013.

\bibitem{Ortner2007}
P.~Ortner and L.~del Re, ``Predictive control of a diesel engine air path,''
  \emph{IEEE Transactions on Control Systems Technology}, vol.~15, no.~3, pp.
  449--456, 2007.

\bibitem{Rowe2000}
C.~Rowe and J.~Maciejowski, ``Tuning {MPC} using {$H_{\infty}$} loopshaping,''
  in \emph{Proceedings of the 2000 American Control Conference. ACC}, vol.~2,
  June 2000, pp. 1332--1336.

\bibitem{Sankar2015}
G.~S. Sankar, W.~H. Moase, R.~C. Shekhar, T.~J. Broomhead, and C.~Manzie,
  ``Towards systematic design of {MPC} to achieve time domain specifications,''
  in \emph{2015 5th Australian Control Conference ({AUCC})}, Nov 2015, pp.
  247--252.

\bibitem{Shekhar2017}
R.~Shekhar, G.~Sankar, C.~Manzie, and H.~Nakada, ``Efficient calibration of
  real-time model-based controllers for diesel engines - part {I}: Approach and
  drive cycle results,'' in \emph{2017 IEEE 56th Annual Conference on Decision
  and Control (CDC)}, 2017, pp. 843--848.

\bibitem{Huang2014}
M.~Huang, H.~Nakada, K.~Butts, and I.~Kolmanovsky, ``Robust rate-based model
  predictive control of diesel engine air path,'' in \emph{American Control
  Conference}, 2014, pp. 1505--1510.

\bibitem{Richards2006}
A.~Richards and J.~P. How, ``Robust variable horizon model predictive control
  for vehicle maneuvering,'' \emph{International Journal of Robust and
  Nonlinear Control}, vol.~16, no.~7, pp. 333--351, 2006.

\bibitem{Broomhead2017}
T.~Broomhead, C.~Manzie, P.~Hield, R.~Shekhar, and M.~Brear, ``Economic model
  predictive control and applications for diesel generators,'' \emph{IEEE
  Transactions on Control Systems Technology}, vol.~25, no.~2, pp. 388--400,
  March 2017.

\bibitem{Sankar2017}
G.~Sankar, R.~Shekhar, C.~Manzie, and H.~Nakada, ``Efficient calibration of
  real-time model-based controllers for diesel engines - part {II}:
  Incorporating practical robustness guarantees,'' in \emph{2017 IEEE 56th
  Annual Conference on Decision and Control (CDC)}, Dec 2017, pp. 849--854.

\bibitem{DelRe2010}
L.~del Re, F.~Allg{\"o}wer, L.~Glielmo, C.~Guardiola, and I.~Kolmanovsky,
  ``Automotive model predictive control,'' \emph{Lecture Notes in Control and
  Information Science}, 2010.

\bibitem{Richards2005}
A.~Richards, ``Robust model predictive control for time-varying systems,'' in
  \emph{Proceedings of the 44th IEEE Conference on Decision and Control}, Dec
  2005, pp. 3747--3752.

\bibitem{Broomhead2014}
T.~Broomhead, C.~Manzie, L.~Eriksson, M.~Brear, and P.~Hield, ``A robust model
  predictive control framework for diesel generators,'' \emph{{IFAC}
  Proceedings Volumes}, vol.~47, no.~3, pp. 11\,848 -- 11\,853, 2014.

\bibitem{Dinh2010}
Q.~T. Dinh and M.~Diehl, ``Local convergence of sequential convex programming
  for nonconvex optimization,'' in \emph{Recent Advances in Optimization and
  its Applications in Engineering}, M.~Diehl, F.~Glineur, E.~Jarlebring, and
  W.~Michiels, Eds.\hskip 1em plus 0.5em minus 0.4em\relax Berlin, Heidelberg:
  Springer Berlin Heidelberg, 2010, pp. 93--102.

\bibitem{Mathworks2017}
\BIBentryALTinterwordspacing
{MathWorks}. (2017) System identification toolbox. [Online]. Available:
  \url{http://www.mathworks.com/products/sysid}
\BIBentrySTDinterwordspacing

\bibitem{Mayne2000}
D.~Q. Mayne, J.~B. Rawlings, C.~V. Rao, and P.~O. Scokaert, ``Constrained model
  predictive control: stability and optimality,'' \emph{Automatica}, vol.~36,
  no.~6, pp. 789 -- 814, 2000.

\bibitem{Khalil2002}
H.~K. Khalil and J.~Grizzle, \emph{Nonlinear systems}.\hskip 1em plus 0.5em
  minus 0.4em\relax Prentice hall Upper Saddle River, NJ, 2002, vol.~3.

\bibitem{Kuwata2007}
Y.~Kuwata, A.~Richards, and J.~How, ``Robust receding horizon control using
  generalized constraint tightening,'' in \emph{American Control Conference},
  2007, pp. 4482--4487.

\bibitem{Shekhar2012}
R.~Shekhar and J.~M. Maciejowski, ``Optimal constraint tightening policies for
  robust variable horizon model predictive control,'' in \emph{2012 IEEE 51st
  {IEEE} Conference on Decision and Control (CDC)}.\hskip 1em plus 0.5em minus
  0.4em\relax IEEE, 2012, pp. 5170--5175.

\bibitem{Oliveira2000}
M.~C. de~Oliveira, J.~F. Camino, and R.~E. Skelton, ``A convexifying algorithm
  for the design of structured linear controllers,'' in \emph{Proceedings of
  the 39th IEEE Conference on Decision and Control}, vol.~3, 2000, pp.
  2781--2786.

\bibitem{DS1006}
\BIBentryALTinterwordspacing
{dSPACE GmbH}. (2017) {DS1006} processor board. [Online]. Available:
  \url{https://www.dspace.com/en/pub/home/products/hw/modular_hardware_introduction/processor_boards/ds1006.cfm}
\BIBentrySTDinterwordspacing

\bibitem{Wills2012}
A.~Wills, G.~Knagge, and B.~Ninness, ``Fast linear model predictive control via
  custom integrated circuit archictecture,'' \emph{IEEE Transactions on Control
  Systems Technology}, vol.~20, no.~1, pp. 59--71, 2012.

\end{thebibliography}
	

\vskip 0pt plus -1fil
\begin{IEEEbiography}
	[{\includegraphics[width=1in,height=1.25in,clip,keepaspectratio]{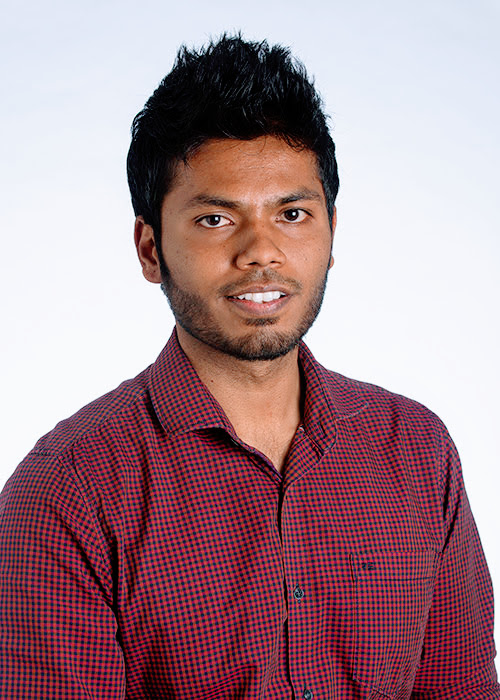}}]{Gokul S. SanKar} pursued his Ph.D. at the University of Melbourne, Parkville, VIC, Australia, and received his M.S. from Indian Institute of Technology (Madras), India, in 2013 and Bachelor’s degree in Electronics $\&$ Instrumentation Engineering, from Anna University, India, in 2010.
	He is currently a Research Fellow at the University of Michigan, Ann Arbor, MI, USA. His research interests include model predictive control with applications to automotive systems, autonomous vehicles, and robust control algorithms.
	He was a member of exclusive Melbourne-India postgraduate program (MIPP) cohort at the University of Melbourne during his doctoral studies.
\end{IEEEbiography}
	\vskip 0pt plus -1fil
	
	\begin{IEEEbiography}
		[{\includegraphics[width=1in,height=1.25in,clip,keepaspectratio]{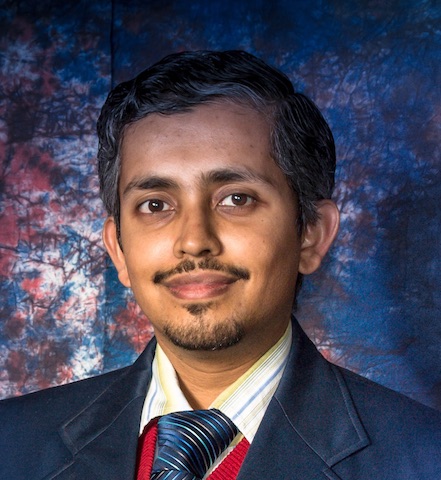}}] {Rohan C. Shekhar} received the B.E. degree (with honours) in Mechatronic Engineering from The University of Queensland in 2006. In 2008, he was awarded the Sir Robert Menzies Memorial Scholarship in Engineering and an Honorary Poynton Cambridge Australia Fellowship for undertaking doctoral studies at the University of Cambridge, receiving the Ph.D. degree in 2012. From 2012-2017, he conducted postdoctoral research at the University of Melbourne.  Rohan is currently an Assistant Professor with the Minerva Schools at KGI while maintaining an honorary appointment at Melbourne.  His research centres on robust model predictive control, with applications to automotive systems, autonomous vehicles, mining technology and robotics.
	\end{IEEEbiography}
	\vskip 0pt plus -1fil

		\begin{IEEEbiography}
		[{\includegraphics[width=1in,height=1.25in,clip,keepaspectratio]{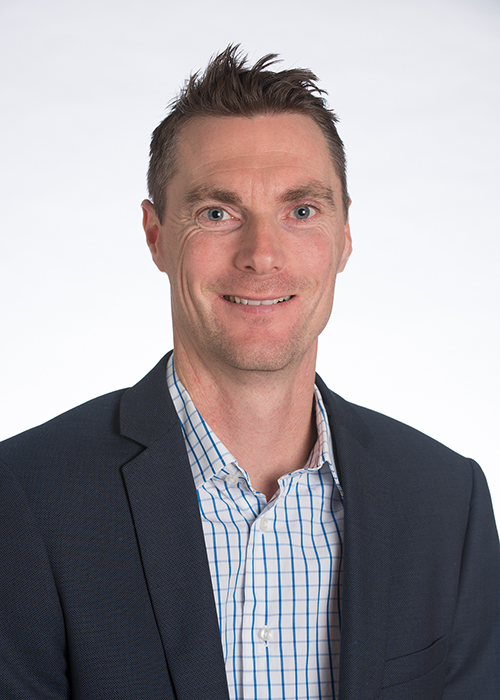}}] {Chris Manzie} is a Professor and Head of Department of Electrical and Electronic Engineering at the University of Melbourne, and also the Director of the cross-disciplinary Melbourne Information, Decision and Autonomous Systems (MIDAS) Laboratory. Over the period 2003-2016, he was an academic in the Department of Mechanical Engineering, with responsibilities including Assistant Dean with the portfolio of Research Training (2011-2017), and Mechatronics Program Director (2009-2016). Professor Manzie has had visiting positions with the University of California, San Diego and IFP Energies Nouvelles, Rueil Malmaison.  His research interests are in model-based and model-free control and optimisation, with applications in a range of areas including systems related to autonomous systems, energy, transportation and mechatronics. He is currently an Associate Editor for IEEE Transactions on Control Systems Technology and Elsevier Mechatronics and is a past Associate Editor for Control Engineering Practice and IEEE/ASME Transactions on Mechatronics.
	\end{IEEEbiography}
	\vskip 0pt plus -1fil

		\begin{IEEEbiography}
		[{\includegraphics[width=1in,height=1.25in,clip,keepaspectratio]{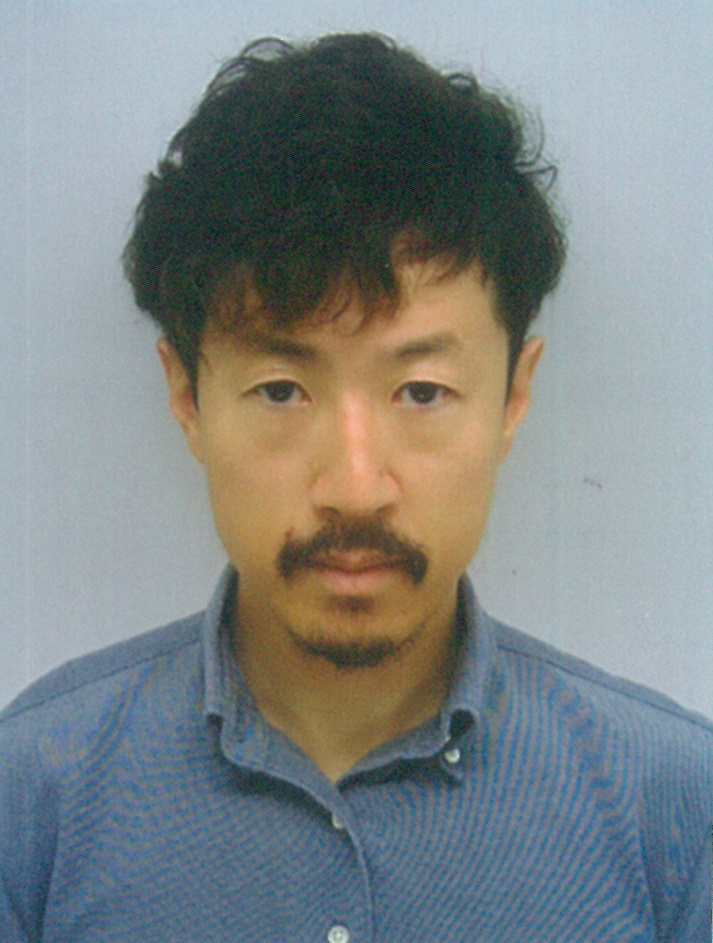}}] {Takeshi Sano} received B.Eng. degree in mechanical engineering and M.Eng. (Master of Engineering) in combustion engineering from Tokyo Denki University, Japan in 2005, 2007. Since 2007, he has been with the Advanced Powertrain Management System Development Division of Toyota Motor Corporation, Japan. He is working on development of engine and hybrid powertrain system.
	\end{IEEEbiography}
	\vskip 0pt plus -1fil

		\begin{IEEEbiography}
		[{\includegraphics[width=1in,height=1.25in,clip,keepaspectratio]{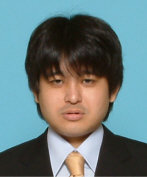}}] {Hayato Nakada} received B.Eng. degree in informatics and mathematical science and M.Inf. (Master of Informatics) and Ph.D. degrees in applied mathematics and physics from Kyoto University, Japan in 2000, 2002 and 2005, respectively, specializing in control theory. Since 2005, he has been with the Advanced Powertrain Management System Development Division of Toyota Motor Corporation, Japan. He is employed as a Project Manager, working on development of advanced controllers and models for a range of powertrain projects.
	\end{IEEEbiography}
	\vskip 0pt plus -1fil
	
\end{document}